\tikzset{>=stealth, shorten >=1pt}
\tikzset{every edge/.style = {thick, ->, draw}}
\tikzset{every loop/.style = {thick, ->, draw}}
\newcommand{\expansion}{e}
\newcommand{\contcard}{\cont}
\newcommand{\mcgame}{\mathcal{G}}
\newcommand{\last}{\mathrm{rcnt}}
\newcommand{\set}[1]{\{#1\}}
\newcommand{\nats}{\mathbb{N}}
\renewcommand{\epsilon}{\varepsilon}
\renewcommand{\phi}{\varphi}
\newcommand{\size}[1]{|#1|}
\newcommand\Nat{\nats}
\newcommand{\pow}[1]{2^{#1}}
\newcommand{\cont}{\mathfrak{c}}
\newcommand{\cceq}{\mathop{::=}}
\newcommand{\tsys}{\mathcal{T}}
\newcommand{\initmark}{I}
\newcommand{\F}{\mathop{\mathbf{F}\vphantom{a}}\nolimits}
\newcommand{\G}{\mathop{\mathbf{G}\vphantom{a}}\nolimits}
\DeclareMathOperator{\U}{\mathbf{U}}
\newcommand{\X}{\mathop{\mathbf{X}\vphantom{a}}\nolimits}
\newcommand{\ltl}{{LTL}\xspace}
\newcommand{\ctlstar}{{CTL$^*$}\xspace}
\newcommand{\hyltl}{{HyperLTL}\xspace}
\newcommand{\hyqptl}{{HyperQPTL}\xspace}
\newcommand{\hyctlstar}{{HyperCTL$^*$}\xspace}
\newcommand{\fol}{{FO}$[<]$\xspace}
\newcommand{\teamltl}{{TeamLTL}\xspace}
\newcommand\FO{\textup{FO}}
\newcommand\simpleHyperLTL{\encPhi {\FOw}}
\newcommand{\suffix}[2]{#1[#2,\infty)}
\newcommand{\var}{\mathcal{V}}
\newcommand{\ap}[0]{\mathrm{AP}}
\newcommand\AP{\textup{AP}}
\newcommand\Free{\textup{Free}}
\newcommand\depth{\textit{depth}}
\newcommand{\myquot}[1]{``#1''}
\newcommand\FOw{\FO[\le]}
\newcommand\varfo{\textit{Var}}
\newcommand\pos{\textit{pos}}
\newcommand\Sfo[1]{\Sigma_{#1}(\FOw)}
\newcommand\enc[2]{\mathit{enc}(#1,#2)}
\newcommand\encn[2]{\mathit{enc}_{#2}(#1)}
\newcommand\tn[2]{#1^{(#2)}}
\newcommand\nun[2]{#1^{(#2)}}
\newcommand\encPhi[1]{\mathit{enc}(#1)}
\newcommand\simpl[1]{\widehat {#1}}
\newcommand\Sign[1]{\Sigma_{#1}}
\newcommand\Pin[1]{\Pi_{#1}}
\newcommand\Tl{T_\ell}
\newcommand\Tr {T_r}
\newcommand\phib{\varphi_{\mathit{bd}}}
\newcommand\Vl{V_\ell}
\newcommand\Vr{V_r}
\newcommand\Sigmal{\Sigma_\ell}
\newcommand\Sigmar{\Sigma_r}
\newcommand\argl{\textup{\texttt{arg1}}}
\newcommand\argr{\textup{\texttt{arg2}}}
\newcommand\res{\textup{\texttt{res}}}
\newcommand\add{\textup{\texttt{add}}}
\newcommand\mult{\textup{\texttt{mult}}}
\newcommand\fbt{\textup{\texttt{fbt}}}
\newcommand\pset{\textup{\texttt{set}}}
\newcommand\phiop{\varphi_{({+},{\cdot})}}
\newcommand\Top{T_{({+},{\cdot})}}
\newcommand\phiset{\varphi_{\cont}}
\newcommand\Kset{\tsys_{\cont}}
\newcommand\xo{x}
\newcommand\yo{y}
\newcommand\zo{z}
\newcommand\ys{y}
\newcommand\xt{x}
\newcommand\yt{y}
\newcommand\ai[1]{a_{#1}}
\newcommand\ax{a}
\newcommand\pix[1]{\pi_{#1}}
\newcommand\phiz{\varphi_{(\cont,{+},{\cdot})}}
\newcommand\phizc{\phiz^{^{cl}}}
\newcommand\inter{\nu}
\newcommand\tsysI{\tsys_{\inter}}
\newcommand\tr [1]{\mathit{h}(#1)}
\definecolor{com}{RGB}{0,128,0}
\newcommand{\tileset}{\mathit{Ti}}
\newcommand{\tiling}{\mathit{ti}}
\renewcommand{\phi}{\varphi}
\newcommand{\east}{\mathit{east}}
\newcommand{\west}{\mathit{west}}
\newcommand{\north}{\mathit{north}}
\newcommand{\south}{\mathit{south}}
\newcommand\phicount{\phi^{\mathit{c}}}
\newcommand\phifb{\phi^{\mathit{fb}}}
\newcommand\tsysc{\tsys_0}
\newcommand{\prefs}[1]{\mathrm{Pref}(#1)}
\newcommand{\closure}[1]{\mathrm{cl}(#1)}
\newcommand{\phiopc}{\phiop^{cl}}
\newcommand{\diff}{D}
\begin{document}

\keywords{HyperLTL, HyperCTL*, Satisfiability, Analytical Hierarchy}

\title[HyperLTL Is Highly Undecidable, HyperCTL* is Even Harder]{HyperLTL Satisfiability Is Highly Undecidable,\texorpdfstring{\\}{} HyperCTL* is Even Harder}

\titlecomment{{\lsuper*}A preliminary version of this article was presented at MFCS 2021~\cite{DBLP:conf/mfcs/FortinKT021}. Here, we extend this with new results on countable and finitely-branching satisfiability for \hyctlstar as well as a detailed study of the \hyltl alternation hierarchy.}

\author[M.~Fortin]{Marie Fortin\lmcsorcid{0000-0001-5278-0430}}[a]
\author[L.~Kuijer]{Louwe B. Kuijer\lmcsorcid{0000-0001-6696-9023}}[b]
\author[P.~Totzke]{Patrick Totzke\lmcsorcid{0000-0001-5274-8190}}[b]
\author[M.~Zimmermann]{Martin Zimmermann\lmcsorcid{0000-0002-8038-2453}}[c]

\address{Université Paris Cité, CNRS, IRIF, France}
\email{mfortin@irif.fr}
\address{University of Liverpool, UK}
\email{Louwe.Kuijer@liverpool.ac.uk, totzke@liverpool.ac.uk}
\address{Aalborg University, Denmark}
\email{mzi@cs.aau.uk}

\begin{abstract}
Temporal logics for the specification of information-flow properties are able to express relations between multiple executions of a system.
The two most important such logics are HyperLTL and HyperCTL*, which generalise LTL and CTL* by trace quantification. 
It is known that this expressiveness comes at a price, i.e.\ satisfiability is undecidable for both logics. 

In this paper we settle the exact complexity of these problems, showing that both are in fact highly undecidable:
we prove that HyperLTL satisfiability is $\Sigma_1^1$-complete and HyperCTL* satisfiability is $\Sigma_1^2$-complete. 
These are significant increases over the previously known lower bounds and the first upper bounds.
To prove $\Sigma_1^2$-membership for HyperCTL*, we prove that every satisfiable HyperCTL* sentence has a model that is equinumerous to the continuum, the first upper bound of this kind. We also prove this bound to be tight.
Furthermore, we prove that both countable and finitely-branching satisfiability for HyperCTL* are as hard as truth in second-order arithmetic, i.e.\ still highly undecidable.

Finally, we show that the membership problem for every level of the HyperLTL quantifier alternation hierarchy is $\Pi_1^1$-complete.

\end{abstract}

\maketitle

\section{Introduction}
\label{sec:intro}

Most classical temporal logics like \ltl and \ctlstar refer to a single execution trace at a time while information-flow properties, which are crucial for security-critical systems,
require reasoning about multiple executions of a system. 
Clarkson and Schneider~\cite{ClarksonS10} coined the term \emph{hyperproperties} for such properties which, structurally, are sets \emph{of sets} of traces.
Just like ordinary trace and branching-time properties, hyperproperties can be specified using temporal logics, e.g.\ \hyltl and \hyctlstar~\cite{ClarksonFKMRS14}, expressive, but intuitive specification languages that are able to express typical information-flow properties such as noninterference, noninference, declassification, and input determinism.
Due to their practical relevance and theoretical elegance, hyperproperties and their specification languages have received considerable attention during the last decade.

\hyltl is obtained by extending \ltl~\cite{Pnueli77}, the most influential specification language for linear-time properties, by trace quantifiers to refer to multiple executions of a system. 
For example, the \hyltl formula
\[\forall \pi, \pi'.\ \G( i_\pi \leftrightarrow i_{\pi'}) \rightarrow \G (o_\pi \leftrightarrow o_{\pi'}) \]
expresses input determinism, i.e.\ every pair of traces that always has the same input (represented by the proposition~$i$) also always has the same output (represented by the proposition~$o$). 
Similarly, \hyctlstar is the extension of the branching-time logic \ctlstar~\cite{EmersonH86} by path quantifiers. 
\hyltl only allows formulas in prenex normal form while \hyctlstar allows arbitrary quantification, in particular under the scope of temporal operators. 
Consequently, \hyltl formulas are evaluated over sets of traces while \hyctlstar formulas are evaluated over transition systems, which yield the underlying branching structure of the traces. 

All basic verification problems, e.g.\ model checking~\cite{DBLP:conf/cav/BeutnerF22,DBLP:journals/corr/abs-2301-11229,Finkbeiner21,FinkbeinerRS15}, runtime monitoring~\cite{AgrawalB16,BonakdarpourF16,BrettSB17,DBLP:conf/csfw/BonakdarpourF18,DBLP:conf/atva/CoenenFHHS21,FinkbeinerHST18}, and synthesis~\cite{BonakdarpourF20,FinkbeinerHHT20,FinkbeinerHLST20}, have been studied. Most importantly, \hyctlstar model checking over finite transition systems is TOWER-complete~\cite{FinkbeinerRS15}, even for a fixed transition system~\cite{MZ20}. 
However, for a small number of alternations, efficient algorithms have been developed and were applied to a wide range of problems, e.g.\ an information-flow analysis of an I2C bus master~\cite{FinkbeinerRS15}, the symmetric access to a shared resource in a mutual exclusion protocol~\cite{FinkbeinerRS15}, and to detect the use of a defeat device to cheat in emission testing~\cite{BartheDFH16}.

But surprisingly, the exact complexity of the satisfiability problems for \hyltl and \hyctlstar is still open. 
Finkbeiner and Hahn proved that \hyltl satisfiability is undecidable~\cite{FinkbeinerH16}, a result which already holds when only considering finite sets of ultimately periodic traces
 and $\forall\exists$-formulas. 
In fact, Finkbeiner et al.\ showed that \hyltl satisfiability restricted to finite sets of ultimately periodic traces is $\Sigma_1^0$-complete~\cite{FHH18} (i.e.\ complete for the set of recursively enumerable problems).
Furthermore, Hahn and Finkbeiner proved that the $\exists^*\forall^*$-fragment has decidable satisfiability~\cite{FinkbeinerH16} while Mascle and Zimmermann studied the \hyltl satisfiability problem restricted to bounded sets of traces~\cite{MZ20}. The latter work implies that \hyltl satisfiability restricted to finite sets of traces (even non ultimately periodic ones) is also $\Sigma_1^0$-complete.
Following up on the results presented in the conference version of this article~\cite{DBLP:conf/mfcs/FortinKT021}, Beutner et al.\ studied satisfiability for safety and liveness fragments of \hyltl~\cite{DBLP:conf/lics/BeutnerCFHK22}.
Finally, Finkbeiner et al.\ developed tools and heuristics~\cite{DBLP:journals/corr/abs-2301-11229,FHH18,FinkbeinerHS17}.

As every \hyltl formula can be turned into an equisatisfiable \hyctlstar formula, \hyctlstar satisfiability is also undecidable. 
Moreover, Rabe has shown that it is even $\Sigma_1^1$-hard~\cite{Rabe16}, i.e.\ it is not even arithmetical.
However, both for \hyltl and for \hyctlstar satisfiability, only lower bounds, but no upper bounds, are known. 

\paragraph{Our Contributions.}
In this paper, we settle the complexity of the satisfiability problems for \hyltl and \hyctlstar
by determining exactly how undecidable they are.
That is, we provide matching lower and upper bounds in terms of the analytical hierarchy and beyond,
where decision problems (encoded as subsets of $\nats$) are classified based on their definability by formulas of higher-order arithmetic, namely by the type of objects one can quantify over and by the number of alternations of such quantifiers.
We refer to Roger's textbook~\cite{Rogers87} for fully formal definitions.
For our purposes, it suffices to recall the following classes.
$\Sigma_1^0$ contains the sets of natural numbers of the form
\[
\set{x \in \nats \mid \exists x_0.\  \cdots \exists x_k.\ \psi(x, x_0, \ldots, x_k)}
\] 
where quantifiers range over natural numbers and $\psi$ is a quantifier-free arithmetic formula.
The notation~$\Sigma_1^0$ signifies that there is a single block of existential quantifiers (the subscript~$1$) ranging over natural numbers (type~$0$ objects, explaining the superscript~$0$).
Analogously, $\Sigma_1^1$ is induced by arithmetic formulas with existential quantification of type~$1$ objects (functions mapping natural numbers to natural numbers) and arbitrary (universal and existential) quantification of type~$0$ objects.
Finally, $\Sigma_1^2$ is induced by arithmetic formulas with existential quantification of type~$2$ objects (functions mapping type~$1$ objects to natural numbers) and arbitrary  quantification of type~$0$ and type~$1$ objects.
So, $\Sigma_1^0$ is part of the first level of the arithmetic hierarchy, $\Sigma_1^1$ is part of the first level of the analytical hierarchy, while $\Sigma_1^2$ is not even analytical.

In terms of this classification, we prove that \hyltl satisfiability is $\Sigma_1^1$-complete while \hyctlstar satisfiability is $\Sigma_1^2$-complete, thereby settling the complexity of both problems and showing that they are highly undecidable.
In both cases, this is a significant increase of the lower bound and the first upper bound.

First, let us consider \hyltl satisfiability.
The $\Sigma_1^1$ lower bound is a straightforward reduction from the recurrent tiling problem, a standard $\Sigma_1^1$-complete problem asking whether $\nats\times\nats$ can be tiled by a given finite set of tiles.
So, let us consider the upper bound: $\Sigma_1^1$ allows to quantify over type~$1$ objects: functions from natural numbers to natural numbers, or, equivalently, over sets of natural numbers, i.e.\ countable objects.
On the other hand, \hyltl formulas are evaluated over sets of infinite traces, i.e.\ uncountable objects. 
Thus, to show that quantification over type~$1$ objects is sufficient, we need to apply a result of Finkbeiner and Zimmermann proving that every satisfiable \hyltl formula has a countable model~\cite{FZ17}.
Then, we can prove $\Sigma_1^1$-membership by expressing the existence of a model and the existence of appropriate Skolem functions for the trace quantifiers by type~$1$ quantification. 
We also prove that the satisfiability problem remains $\Sigma_1^1$-complete when restricted to ultimately periodic traces, or, equivalently, when restricted to finite traces.

Then, we turn our attention to \hyctlstar satisfiability.
Recall that \hyctlstar formulas are evaluated over (possibly infinite) transition systems, which can be much larger than type~$2$ objects, as the cardinality of type~$2$ objects is bounded by $\contcard$, the cardinality of the continuum.
Hence, to obtain our upper bound on the complexity we need, just like in the case of \hyltl, an upper bound on the size of minimal models of satisfiable \hyctlstar formulas.
To this end, we generalise the proof of Finkbeiner and Zimmermann to \hyctlstar, showing that every satisfiable \hyctlstar formula has a model of size~$\contcard$. 
We also exhibit a satisfiable \hyctlstar formula~$\varphi_\contcard$ whose models all have at least cardinality~$\contcard$, as they have to encode all subsets of $\nats$ by disjoint paths.
Thus, our upper bound~$\contcard$ is tight.

With this upper bound on the cardinality of models, we are able to prove $\Sigma_1^2$-membership of \hyctlstar satisfiability by expressing with type~$2$ quantification the existence of a model and the existence of a winning strategy in the induced model checking game.
The matching lower bound is proven by directly encoding the arithmetic formulas inducing $\Sigma_1^2$ as instances of the \hyctlstar satisfiability problem.
To this end, we use the formula~$\varphi_\contcard$ whose models have for each subset $A \subseteq \nats$ a path encoding $A$.
Now, quantification over type~$0$ objects (natural numbers) is simulated by quantification of a path encoding a singleton set, quantification over type~$1$ objects (which can be assumed to be sets of natural numbers) is simulated by quantification over the paths encoding such subsets, and existential quantification over type~$2$ objects (which can be assumed to be subsets of $\pow{\nats}$) is simulated by the choice of the model, i.e.\ a model encodes $k$ subsets of $\pow{\nats}$ if there are $k$ existential type~$2$ quantifiers.
Finally, the arithmetic operations can easily be implemented in \hyltl, and therefore also in \hyctlstar.

Using variations of these techniques, we also show that \hyctlstar satisfiability restricted to countable or to finitely branching models is equivalent to truth in second-order arithmetic, i.e.\ the question whether a given sentence of second-order arithmetic is satisfied in the structure~$(\nats, 0,1,+,\cdot,<)$. 
Restricting the class of models makes the problem simpler, but it is still highly-undecidable.

After settling the complexity of satisfiability, we turn our attention to the \hyltl quantifier alternation hierarchy and its relation to satisfiability. 
Rabe remarks that the hierarchy is strict~\cite{Rabe16}.
On the other hand, Mascle and Zimmermann show that every \hyltl formula has a polynomial-time computable equi-satisfiable formula with one quantifier alternation~\cite{MZ20}.
Here, we present a novel proof of strictness by embedding the \fol alternation hierarchy, which is also strict~\cite{CohenB71,Thomas81}. 
We use our construction to prove that for every $n > 0$, deciding whether a given formula is equivalent to a formula with at most $n$ quantifier alternations is $\Pi_1^1$-complete ($\Pi_1^1$ is the co-class of $\Sigma_1^1$, i.e.\ containing the complements of sets in $\Sigma_1^1$).

\section{Preliminaries}
\label{sec:definitions}
Fix a finite set~$\ap$ of atomic propositions. A \emph{trace} over $\ap$ is a map $t \colon \nats \rightarrow \pow{\ap}$, denoted by $t(0)t(1)t(2) \cdots$. 
It is \emph{ultimately periodic}, if $t = x \cdot y^\omega$ for some $x,y \in (\pow{\ap})^+$, i.e.\ there are $s,p>0$ with $t(n) = t(n+p)$ for all $n \ge s$.
The set of all traces over $\ap$ is  $(\pow{\ap})^\omega$. 

A transition system~$\tsys = (V,E,v_\initmark, \lambda)$ consists of a non-empty set~$V$ of vertices, a set~$E \subseteq V \times V$ of (directed) edges, an initial vertex~$v_\initmark \in V$, and a labelling~$\lambda\colon V \rightarrow \pow{\ap}$ of the vertices by sets of atomic propositions.
We require that each vertex has at least one outgoing edge. 
A path~$\rho$ through~$\tsys$ is an infinite sequence~$\rho(0)\rho(1)\rho(2)\cdots$ of vertices with $(\rho(n),\rho(n+1))\in E$ for every $n \ge 0$.
The trace of $\rho$ is defined as $\lambda(\rho(0))\lambda(\rho(1))\lambda(\rho(2))\cdots$.

\subsection{\hyltl}

The formulas of \hyltl are given by the grammar
\[
\phi  {} \cceq {}  \exists \pi.\ \phi \mid \forall \pi.\ \phi \mid \psi \qquad\qquad
\psi {}  \cceq {}  a_\pi \mid \neg \psi \mid \psi \vee \psi \mid \X \psi \mid \psi \U \psi
\]
where $a$ ranges over atomic propositions in $\ap$ and where $\pi$ ranges over a fixed countable set~$\var$ of \emph{(trace) variables}. Conjunction, implication, and equivalence are defined as usual, and the temporal operators eventually~$\F$ and always~$\G$ are derived as $\F\psi = \neg \psi\U \psi$ and $\G \psi = \neg \F \neg \psi$. A \emph{sentence} is a  formula without free variables.

The semantics of \hyltl is defined with respect to a \emph{trace assignment}, a partial mapping~$\Pi \colon \var \rightarrow (\pow{\ap})^\omega$. The assignment with empty domain is denoted by $\Pi_\emptyset$. Given a trace assignment~$\Pi$, a variable~$\pi$, and a trace~$t$ we denote by $\Pi[\pi \rightarrow t]$ the assignment that coincides with $\Pi$ everywhere but at $\pi$, which is mapped to $t$. 
Furthermore, $\suffix{\Pi}{j}$ denotes the trace assignment mapping every $\pi$ in $\Pi$'s domain to $\Pi(\pi)(j)\Pi(\pi)(j+1)\Pi(\pi)(j+2) \cdots $, its suffix from position $j$ onwards.

For sets~$T$ of traces and trace assignments~$\Pi$ we define 
\begin{itemize}
	\item $(T, \Pi) \models a_\pi$ if $a \in \Pi(\pi)(0)$,
	\item $(T, \Pi) \models \neg \psi$ if $(T, \Pi) \not\models \psi$,
	\item $(T, \Pi) \models \psi_1 \vee \psi_2 $ if $(T, \Pi) \models \psi_1$ or $(T, \Pi) \models \psi_2$,
	\item $(T, \Pi) \models \X \psi$ if $(T,\suffix{\Pi}{1}) \models \psi$,
	\item $(T, \Pi) \models \psi_1 \U \psi_2$ if there is a $j \ge 0$ such that $(T,\suffix{\Pi}{j}) \models \psi_2$ and for all $0 \le j' < j$: $(T,\suffix{\Pi}{j'}) \models \psi_1$, 
	\item $(T, \Pi) \models \exists \pi.\ \phi$ if there exists a trace~$t \in T$ such that $(T,\Pi[\pi \rightarrow t]) \models \phi$, and 
	\item $(T, \Pi) \models \forall \pi.\ \phi$ if for all traces~$t \in T$: $(T,\Pi[\pi \rightarrow t]) \models \phi$. 
\end{itemize}
We say that $T$ \emph{satisfies} a sentence~$\phi$ if $(T, \Pi_\emptyset) \models \phi$. In this case, we write $T \models \phi$ and say that $T$ is a \emph{model} of $\phi$. 
Two \hyltl sentences~$\varphi$ and $\varphi'$ are equivalent if $T \models \varphi$ if and only if $T \models \varphi'$ for every set~$T$ of traces.
Although \hyltl sentences are required to be in prenex normal form, they are closed under Boolean combinations, which can easily be seen by transforming such a formula into an equivalent formula in prenex normal form. 

\subsection{\texorpdfstring{\hyctlstar}{HyperCTL*}.}

The formulas of \hyctlstar are given by the grammar
\begin{align*}
\phi & {} \cceq {} a_\pi \mid \neg \phi \mid \phi \vee \phi \mid \X \phi \mid \phi \U \phi \mid \exists \pi.\ \phi \mid \forall \pi.\ \phi
\end{align*}
where $a$ ranges over atomic propositions in $\ap$ and where $\pi$ ranges over a fixed countable set~$\var$ of \emph{(path) variables}, and where we require that each temporal operator appears in the scope of a path quantifier. Again, other Boolean connectives and temporal operators are derived as usual.
Sentences are formulas without free variables.

Let $\tsys$ be a transition system. The semantics of \hyctlstar is defined with respect to a \emph{path assignment}, a partial mapping~$\Pi$ from variables in $\var$ to paths of $\tsys$. The assignment with empty domain is denoted by $\Pi_\emptyset$. Given a path assignment~$\Pi$, a variable~$\pi$, and a path~$\rho$ we denote by $\Pi[\pi \rightarrow \rho]$ the assignment that coincides with $\Pi$ everywhere but at $\pi$, which is mapped to $\rho$. 
Furthermore, $\suffix{\Pi}{j}$ denotes the path assignment mapping every $\pi$ in $\Pi$'s domain to $\Pi(\pi)(j)\Pi(\pi)(j+1)\Pi(\pi)(j+2) \cdots $, its suffix from position $j$ onwards.

 For transition systems~$\tsys$ and path assignments~$\Pi$ we define 
\begin{itemize}
	\item $(\tsys, \Pi) \models a_\pi$ if $a \in \lambda(\Pi(\pi)(0))$, where $\lambda$ is the labelling function of $\tsys$,
	\item $(\tsys, \Pi) \models \neg \psi$ if $(\tsys, \Pi) \not\models \psi$,
	\item $(\tsys, \Pi) \models \psi_1 \vee \psi_2 $ if $(\tsys, \Pi) \models \psi_1$ or $(\tsys, \Pi) \models \psi_2$,
	\item $(\tsys, \Pi) \models \X \psi$ if $(\tsys,\suffix{\Pi}{1}) \models \psi$,
	\item $(\tsys, \Pi) \models \psi_1 \U \psi_2$ if there exists a $j \ge 0$ such that $(\tsys,\suffix{\Pi}{j}) \models \psi_2$ and for all $0 \le j' < j$: $(\tsys,\suffix{\Pi}{j'}) \models \psi_1$, 
	\item $(\tsys, \Pi) \models \exists \pi.\ \phi$ if there exists a path~$\rho$ of $\tsys$, starting in $\last(\Pi)$, such that $(\tsys,\Pi[\pi \rightarrow \rho]) \models \phi$, and 
	\item $(\tsys, \Pi) \models \forall \pi.\ \phi$ if for all paths~$\rho$ of $\tsys$ starting in $\last(\Pi)$: $(\tsys,\Pi[\pi \rightarrow \rho]) \models \phi$. 
\end{itemize}
Here, $\last(\Pi)$ is the initial vertex of $\Pi(\pi)$, where $\pi$ is the path variable most recently added to or changed in $\Pi$, and the initial vertex of $\tsys$ if $\Pi$ is empty.\footnote{For the sake of simplicity, we refrain from formalising this notion properly, which would require to keep track of the order in which variables are added to or changed in $\Pi$.}
We say that $\tsys$ \emph{satisfies} a sentence~$\phi$ if $(\tsys, \Pi_\emptyset) \models \phi$. In this case, we write $\tsys \models \phi$ and say that $\tsys$ is a \emph{model} of $\phi$. 

\subsection{Complexity Classes for Undecidable Problems.}
A type~$0$ object is a natural number~$n \in \nats$, a type~$1$ object is a function $f \colon \nats \rightarrow \nats$, and a type~$2$ object is a function~$f\colon (\nats \rightarrow \nats) \rightarrow \nats$.
As usual, predicate logic with quantification over type~$0$ objects (first-order quantifiers) is called first-order logic.
Second- and third-order logic are defined similarly.

We consider formulas of arithmetic, i.e.\ predicate logic with signature~$(0,1, +, \cdot, <)$ evaluated over the natural numbers. With a single free variable of type~$0$, such formulas define sets of natural numbers (see, e.g.\ Rogers~\cite{Rogers87} for more details):
\begin{itemize}
	\item $\Sigma_1^0$ contains the sets of the form $\set{x \in \nats \mid \exists x_0.\  \cdots \exists x_k.\ \psi(x, x_0, \ldots, x_k)}$ where $\psi$ is a quantifier-free arithmetic formula and the $x_i$ are variables of type~$0$. 
	\item $\Sigma_1^1$ contains the sets of the form $\set{x \in \nats \mid \exists x_0.\  \cdots \exists x_k.\ \psi(x, x_0, \ldots, x_k)}$ where $\psi$ is an arithmetic formula with arbitrary (existential and universal) quantification over type~$0$ objects and the $x_i$ are variables of type~$1$. 
	\item $\Sigma_1^2$ contains the sets of the form $\set{x \in \nats \mid \exists x_0.\  \cdots \exists x_k.\ \psi(x, x_0, \ldots, x_k)}$ where $\psi$ is an arithmetic formula with arbitrary (existential and universal) quantification over type~$0$ and type~$1$ objects and the $x_i$ are variables of type~$2$.
\end{itemize}
Note that there is a bijection between functions of the form~$f\colon \nats\rightarrow \nats$ and subsets of $\nats$, which is implementable in  arithmetic. 
Similarly, there is a bijection between functions of the form~$f \colon (\nats \rightarrow \nats)\rightarrow\nats$ and subsets of $\pow{\nats}$, which is again implementable in arithmetic.
Thus, whenever convenient, we use quantification over sets of natural numbers and over sets of sets of natural numbers, instead of quantification over type~$1$ and type~$2$ objects; in particular when proving lower bounds. We then include $\in$ in the signature. 

Also, note that $0$ and $1$ are definable in first-order arithmetic. 
Thus, whenever convenient, we drop $0$ and $1$ from the signature of arithmetic.
In the same vein, every fixed natural number is definable in first-order arithmetic.

\section{\texorpdfstring{\hyltl}{HyperLTL} satisfiability is \texorpdfstring{$\Sigma_1^1$}{Sigma11}-complete}
\label{sec:sat}
In this section, we settle the complexity of the satisfiability problem for HyperLTL:
given a \hyltl sentence, determine whether it has a model.

\begin{thm}\label{thm:hyltl-sat}
  \hyltl satisfiability is $\Sigma^1_1$-complete.
\end{thm}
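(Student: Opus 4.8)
The plan is to prove the two bounds separately: $\Sigma_1^1$-hardness by a reduction from the recurrent tiling problem, and $\Sigma_1^1$-membership by combining the countable-model theorem of Finkbeiner and Zimmermann~\cite{FZ17} with a uniform arithmetisation of the semantics in which every existential choice is pulled to the front as a type-$1$ object.

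For the lower bound I reduce from the recurrent tiling problem, a standard $\Sigma_1^1$-complete problem: given a finite tile set with coloured sides and a distinguished tile~$d_0$, decide whether $\nats \times \nats$ admits a colour-consistent tiling in which $d_0$ occurs infinitely often in the first row. From such an instance I build a \hyltl sentence over atomic propositions encoding the tiles together with a single marker proposition~$m$. The intended models contain exactly one trace per row: the trace for row~$r$ places the tiles of that row at positions $0,1,2,\dots$ (so position~$c$ always denotes column~$c$) and carries $m$ at exactly position~$r$. The sentence asserts three things. First, every trace is a legal row: exactly one tile per position, a horizontal matching constraint $\G(\cdots)$ relating the east colour at each position to the west colour at the next, and exactly one occurrence of~$m$. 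Second, there is a row-$0$ trace, i.e.\ a trace with $m$ at position~$0$, on which $d_0$ occurs infinitely often ($\G\F$ applied to the proposition encoding~$d_0$). Third, every trace has a vertical successor: here $\forall \pi.\ \exists \pi'.\ \G(m_\pi \leftrightarrow \X m_{\pi'})$ forces the marker of $\pi'$ one position later (so $\pi'$ is the next row), conjoined with a vertical matching constraint $\G(\cdots)$ relating the north colour of $\pi$ to the south colour of $\pi'$ at every position. These are Boolean combinations of prenex \hyltl formulas and hence expressible as a single sentence. For correctness: a tiling yields the canonical one-trace-per-row model; conversely, from any model I pick the guaranteed row-$0$ trace and repeatedly follow the successors furnished by the third clause, reading off a tiling that is horizontally consistent (first clause), vertically consistent (third clause), and has $d_0$ infinitely often in its first row (second clause).

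For the upper bound I first invoke~\cite{FZ17}, so that a sentence is satisfiable iff it has a \emph{countable} model. A countable model can be coded as a single type-$1$ object, a function $\nats \times \nats \to \pow{\ap}$ (finite $\ap$) whose $i$-th row is the $i$-th trace~$t_i$; the trace quantifiers then range over indices $i \in \nats$, i.e.\ over type-$0$ objects. The difficulty is that the sentence~$\varphi$ is the \emph{input}, so its quantifier prefix and its body both vary, and I cannot simply recurse on a fixed formula. I handle this with two devices, each producing type-$1$ existentials that bundle into the single leading $\exists$-block that $\Sigma_1^1$ permits. First, I Skolemise the trace quantifiers: a type-$1$ object~$\sigma$ supplies, for each existential variable, the index it chooses as a function of the preceding universal choices, after which all remaining trace quantifiers are universal (type-$0$). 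Second, to evaluate the variable \ltl-body uniformly I existentially quantify a single type-$1$ \emph{satisfaction table} $\hat S$ that, given a code of a trace-index assignment together with a subformula code and a position, returns a truth value; a type-$0$ formula $\Psi_{\mathrm{cons}}$ asserts that $\hat S$ obeys the \ltl clauses everywhere (the $\U$ clause unfolds to $\exists k\ \forall i$ over positions, all type-$0$), which pins $\hat S$ down. Satisfiability then becomes $\exists T\ \exists \sigma\ \exists \hat S.\ \Psi_{\mathrm{cons}} \wedge \forall a.\ (a \text{ respects } \sigma) \to \hat S(a,\psi,0)=1$, where $a$ ranges over type-$0$ assignment codes and the whole matrix is arithmetic in the code of~$\varphi$; this is exactly the form of $\Sigma_1^1$.

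The step I expect to be the main obstacle is precisely this uniform arithmetisation: everything must remain inside one leading block of type-$1$ existentials followed by type-$0$ quantification, which forbids placing a type-$1$ witness inside the scope of a universal trace quantifier. Skolemising the trace quantifiers and collecting the per-assignment satisfaction data into one index-parametrised table are exactly what make the witnesses prefix-independent and keep the formula in $\Sigma_1^1$, while the countable-model theorem is what makes the model itself a type-$1$ rather than a type-$2$ object. Getting the coding bookkeeping right so that a single arithmetic formula correctly decodes an \emph{arbitrary} input sentence (its quantifier pattern, its body, and the dependency of $\sigma$ on the universal choices) is the technical heart of the membership proof; the two bounds together give $\Sigma_1^1$-completeness.
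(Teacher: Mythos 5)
Your proposal is correct and follows essentially the same route as the paper: the lower bound is the same reduction from the recurrent tiling problem (you merely transpose the encoding, letting traces be rows with a marker indexing the row, where the paper lets traces be columns, and you recover the tiling by following the $\forall\exists$ successor chain instead of adding a clause forcing traces with equal markers to agree), and the upper bound is the paper's argument verbatim in spirit --- the countable-model theorem of~\cite{FZ17}, a type-$1$ encoding of the model, a type-$1$ Skolem function for the trace quantifiers, and a type-$1$ expansion/satisfaction table pinned down by type-$0$ consistency conditions. No gaps.
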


We should contrast this result with the $\Sigma^0_1$-completeness of
\hyltl satisfiability restricted to
\emph{finite} sets of ultimately periodic traces~\cite[Theorem 1]{FHH18}. The $\Sigma^1_1$-completeness
of \hyltl satisfiability in the general case implies that, in particular, the set of
satisfiable \hyltl sentences is neither recursively enumerable nor co-recursively
enumerable. A semi-decision procedure, like the one introduced in \cite{FHH18}
for finite sets of ultimately periodic traces, therefore cannot exist in general.

\subsection{\texorpdfstring{\hyltl}{HyperLTL} satisfiability is in \texorpdfstring{$\Sigma_1^1$}{Sigma11}}
The $\Sigma^1_1$ upper bound relies on the fact that every satisfiable \hyltl
formula has a countable model~\cite{FZ17}. 
This allows us to represent these models, and Skolem functions on them, by sets of natural numbers, which are 
type~$1$ objects. In this encoding, trace assignments are type~$0$ objects, as traces in a countable set can be identified by natural numbers. With some more existential type~$1$ quantification one can then express the existence of a function witnessing that every trace assignment consistent with the Skolem functions satisfies the quantifier-free part of the formula under consideration. 

\begin{lem}
\label{lemma:hyltl_membership}
\hyltl satisfiability is in $\Sigma_1^1$.	
\end{lem}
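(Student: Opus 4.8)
The plan is to exploit the countable model property of \hyltl~\cite{FZ17}: since every satisfiable sentence has a countable model, it suffices to existentially quantify over a \emph{countable} set of traces, which is a type~$1$ object. Concretely, fix an input sentence in prenex normal form $Q_1\pi_1 \cdots Q_n\pi_n.\ \psi$, where each $Q_i \in \set{\exists,\forall}$ and $\psi$ is a quantifier-free formula over the propositions $a_{\pi_i}$. A countable set $T = \set{t_0, t_1, t_2, \ldots}$ of traces over $\ap$ is encoded by a single function $\nats \to \nats$ (identifying $\pow{\ap}$ with a finite set of numbers and pairing the trace index with the position), so I would begin the formula with $\exists T$ over this type~$1$ object, together with a conjunct asserting that $T$ is nonempty and well-formed.

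Next I would Skolemise the trace quantifiers. Because every trace of a countable model is named by a natural number, a Skolem function for an existential quantifier $\exists \pi_i$ is a map from the indices chosen for the universal variables preceding $\pi_i$ to a trace index, i.e.\ again a type~$1$ object. Thus $T \models Q_1\pi_1\cdots Q_n\pi_n.\ \psi$ is equivalent to the existence of such Skolem functions $f_j$ (one per existential quantifier) such that, for \emph{all} choices of indices for the universally quantified variables (a block of \emph{type~$0$} universal quantifiers), the resulting trace assignment $\Pi$ --- whose value on each $\pi_i$ is read off either from the appropriate $f_j$ or directly from the universal indices --- satisfies $\psi$. This reduces satisfaction of the quantifier prefix to an $\exists^1 \forall^0$ shape, which is compatible with $\Sigma_1^1$.

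The crux is to express, arithmetically and uniformly in the (decoded) input, that a given trace assignment $\Pi$ satisfies the quantifier-free $\psi$. Here I would guess a further type~$1$ object $L$: a labelling that assigns to every tuple of universal indices and every position $j$ the set of subformulas of $\psi$ claimed to hold at $j$ in the induced assignment. The matrix then asserts, by $\forall/\exists$ quantification over positions (type~$0$) and over the finitely many subformula codes decoded from the input, the usual local consistency of $L$ (membership of $a_\pi$ reflects the relevant bit of the trace; the Boolean cases mirror set operations; $\X\chi \in L$ at $j$ iff $\chi \in L$ at $j+1$; and the expansion law for $\U$), the \emph{fulfillment} condition (whenever $\chi_1 \U \chi_2 \in L$ at $j$ there is $m \ge j$ with $\chi_2 \in L$ at $m$), and finally that $\psi$ itself is claimed at position~$0$. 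A routine induction on the structure of $\chi$ shows that any $L$ meeting these constraints is sound ($\chi \in L$ at $j$ implies $(\Pi,j) \models \chi$), while the genuine satisfaction relation always provides such an $L$; hence ``$\exists L$ with these constraints'' captures ``$\Pi \models \psi$'' exactly.

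Assembling the pieces yields a formula of the form $\exists T\, \exists f_1 \cdots \exists f_k\, \exists L.\ \forall(\text{type-}0).\ [\text{arithmetic}]$, with only existential type~$1$ quantifiers in front of an arithmetic matrix using type~$0$ quantification and the guessed objects as parameters; this is exactly the $\Sigma_1^1$ form. I expect the main obstacle to be this last, uniform handling of temporal satisfaction: since $\psi$ is part of the input its temporal nesting is unbounded, so the recursion of the \hyltl semantics cannot be unfolded into a single fixed arithmetic formula and must instead be certified by the guessed labelling, with the fulfillment clause being exactly what rules out spurious $\U$-labels. The remaining work --- fixing concrete pairing functions, checking that the encodings of countable trace sets and Skolem functions are arithmetically definable, and verifying the soundness and completeness of the labelling --- is routine but needs to be carried out carefully.
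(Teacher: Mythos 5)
Your proposal is correct and follows essentially the same route as the paper: invoke the countable-model theorem of Finkbeiner and Zimmermann, encode the model and Skolem functions for the existential trace quantifiers as type~$1$ objects, and certify satisfaction of the quantifier-free matrix by an additional guessed type~$1$ object (your labelling $L$ is the paper's ``expansion'' function $E$), checked by type~$0$ quantification against local consistency conditions. The only cosmetic difference is that you handle $\U$ via the one-step expansion law plus a separate fulfillment clause, whereas the paper states the until condition directly as a type~$0$-quantified equivalence; both are standard and equivalent.
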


\begin{proof}
Let $\varphi$ be a \hyltl formula, let $\Phi$ denote the set of quantifier-free subformulas of~$\varphi$, and let $\Pi$ be a trace assignment whose domain contains the variables of $\varphi$. The expansion of $\varphi$ on $\Pi$ is the function~$\expansion_{\varphi, \Pi} \colon \Phi \times \nats \rightarrow \set{0,1}$ with
\[
\expansion_{\varphi, \Pi}(\psi, j)= \begin{cases}
1 &\text{if $\suffix{\Pi}{j} \models \psi$, and}\\
0 &\text{otherwise.}
\end{cases} 
\]
The expansion is completely characterised by the following consistency conditions:
\begin{itemize}
	\item $\expansion_{\varphi, \Pi} (a_\pi,j)=1$ if and only if $a \in \Pi(\pi)(j)$.
	\item $\expansion_{\varphi, \Pi} (\neg \psi,j)=1$ if and only if  $\expansion_{\varphi, \Pi} (\psi,j)=0$.
	\item $\expansion_{\varphi, \Pi} (\psi_1 \vee \psi_2,j)=1$ if and only if $\expansion_{\varphi, \Pi} (\psi_1,j)=1$ or $\expansion_{\varphi, \Pi} (\psi_2,j)=1$.
	\item $\expansion_{\varphi, \Pi} (\X\psi,j)=1$ if and only if $\expansion_{\varphi, \Pi} (\psi,j+1)=1$.
	\item $\expansion_{\varphi, \Pi} (\psi_1 \U \psi_2,j)=1$ if and only if there is a $j' \ge j$ such that $\expansion_{\varphi, \Pi} (\psi_2,j')=1$ and $\expansion_{\varphi, \Pi} (\psi_2,j'')=1$ for all $j''$ in the range~$j \le j'' < j'$.
\end{itemize}

Every satisfiable \hyltl sentence has a countable model~\cite{FZ17}.
Hence, to prove that the \hyltl satisfiability problem is in $\Sigma_1^1$, we express, for a given \hyltl sentence~$\varphi$ encoded as a natural number, the existence of the following type~$1$ objects (relying on the fact that there is a bijection between finite sequences over $\nats$ (denoted by $\nats^*$) and $\nats$ itself):
\begin{itemize}

	\item A countable set of traces over the propositions of $\varphi$ encoded as a function~$T$ from~$\nats \times \nats$ to $\nats$, mapping trace names and positions to (encodings of) subsets of the set of propositions appearing in $\varphi$.
	
	\item A function~$S$ from $\nats \times \nats^*$ to $\nats$ to be interpreted as Skolem functions for the existentially quantified variables of $\varphi$, i.e.\ we map a variable (identified by a natural number) and a trace assignment of the variables preceding it (encoded as a sequence of natural numbers) to a trace name.

	\item A function~$E$ from $\nats \times \nats \times \nats$ to $\nats$, where, for a fixed~$a \in\nats$ encoding a trace assignment~$\Pi$, the function~$x,y\mapsto E(a, x, y)$ is interpreted as the expansion of $\varphi$ on $\Pi$, i.e.\ $x$ encodes a subformula in $\Phi$ and $y$ is a position.

\end{itemize}
Then, we express the following properties using only type~$0$ quantification: For every trace assignment of the variables in $\varphi$, encoded by $a \in \nats$, if $a$ is consistent with the Skolem function encoded by $S$, then the function~$x,y\mapsto E(a, x, y)$ satisfies the consistency conditions characterising the expansion, and we have $E(a,x_0, 0) = 1$, where $x_0$ is the encoding of the maximal quantifier-free subformula of $\varphi$.
 We leave the tedious, but standard, details to the industrious reader. 
\end{proof}

\subsection{\texorpdfstring{\hyltl}{HyperLTL} satisfiability is \texorpdfstring{$\Sigma_1^1$}{Sigma11}-hard} 
To prove a matching lower bound, we reduce from the recurrent tiling problem~\cite{Harel85}, a standard $\Sigma_1^1$-complete problem.

\begin{lem}
\label{lemma:hyltl_hardness}
\hyltl satisfiability is $\Sigma_1^1$-hard.	
\end{lem}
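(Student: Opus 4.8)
The plan is to reduce from the \emph{recurrent tiling problem}~\cite{Harel85}: given a finite set of tiles~$\tileset$ with horizontal and vertical compatibility relations and a distinguished tile~$t_0 \in \tileset$, decide whether there is a tiling of the grid~$\nats \times \nats$ respecting the compatibility constraints in which $t_0$ occurs infinitely often in the bottom row (equivalently, in column~$0$). This is a canonical $\Sigma_1^1$-complete problem, so a computable mapping from tiling instances to \hyltl sentences that preserves the yes/no answer establishes $\Sigma_1^1$-hardness.

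The core idea is to encode a tiling as a set~$T$ of traces, one trace per grid row, and to use \hyltl's trace quantifiers to enforce the local consistency constraints between adjacent rows. First I would fix the set~$\ap$ of atomic propositions to contain one proposition per tile (so each position of a trace carries exactly one tile), plus a handful of auxiliary propositions used for \emph{addressing} rows and synchronising columns. The central difficulty is that, within a single row-trace, the horizontal constraint is easy to state (consecutive positions must carry horizontally compatible tiles, a local \ltl condition of the form $\G \bigvee_{(s,s') \text{ h-compatible}} (s \wedge \X s')$), but the vertical constraint relates \emph{two different traces at the same column index}, and \hyltl has no primitive that aligns two traces position-by-position across a universal/existential pair while also expressing that one trace is the ``successor row'' of another. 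So the heart of the reduction is a gadget that (i)~linearly orders the row-traces as row~$0$, row~$1$, row~$2$, \dots, and (ii)~lets the formula compare a row to its immediate successor column-synchronously.

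The standard way to do this, which I expect to be the main obstacle to get exactly right, is to attach to each row-trace a unique \emph{row identifier} encoded in the auxiliary propositions, together with a \emph{pointer} to the identifier of the next row; I would use a proposition such as a ``row marker'' whose pattern along the trace names the row, and a second block naming the successor row, so that a formula can assert ``these two traces have identifiers~$n$ and~$n+1$.'' With such identifiers in place, the vertical compatibility check becomes a $\forall\pi\,\exists\pi'$ statement: for every row-trace~$\pi$ there exists a row-trace~$\pi'$ whose identifier is the successor of~$\pi$'s, and at every column the tile of~$\pi'$ is vertically compatible with the tile of~$\pi$, which is an $\G$ over a disjunction of column-aligned tile pairs~$a_\pi \wedge a'_{\pi'}$. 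Column-alignment is automatic because both are infinite traces read at the same position by the synchronous \ltl semantics, so no further synchronisation machinery is needed once the identifiers are correct. I would also add: a formula pinning down that some trace carries identifier~$0$ (row~$0$ exists) and that every identifier-value is realised by exactly one trace, so that the set of row-traces is forced to be (a superset encoding) the genuine grid rather than a spurious set.

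Finally I would encode the recurrence condition. Since $t_0$ must appear infinitely often in column~$0$, and column~$0$ of row~$n$ is the $0$-th letter of the trace with identifier~$n$, I would assert $\forall\pi\,\exists\pi'$ of the form ``$\pi'$ has a strictly larger row-identifier than~$\pi$ and carries~$t_0$ at position~$0$,'' which forces infinitely many rows to have~$t_0$ in their first column; comparing identifiers for \myquot{strictly larger} is again a local/eventually condition on the identifier-encoding. The whole specification~$\varphi_\tileset$ is the conjunction of: a well-formedness formula (exactly one tile per position, identifiers well-formed, row~$0$ and successor-existence present), the horizontal constraint, the vertical constraint, and the recurrence constraint; prenexing the Boolean combination of these quantified formulas yields a single \hyltl sentence, and the map $\tileset \mapsto \varphi_\tileset$ is clearly computable. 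The correctness argument then has two halves: a valid recurrent tiling yields the model~$T$ consisting of the row-traces (so $\varphi_\tileset$ is satisfiable), and conversely any model of~$\varphi_\tileset$ yields, by reading off the tiles along the row-traces indexed by their identifiers, a valid recurrent tiling. The delicate point in the converse is that a model may contain extraneous traces, so the well-formedness formula must be strong enough that the sub-collection of genuine row-traces is uniquely determined and covers every row exactly once; ensuring this uniqueness while keeping the identifier gadget expressible in \hyltl is where I expect most of the technical care to be required.
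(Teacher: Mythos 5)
Your proposal is correct and follows essentially the same route as the paper: a reduction from the recurrent tiling problem in which each grid line becomes a trace, linearly ordered by a designated proposition that holds at exactly one position per trace, so that ``line $n+1$ follows line $n$'' is expressible as $\F(x_{\pi_1}\wedge \X x_{\pi_2})$ and no separate successor pointer is needed. The only real difference is that the paper transposes the grid (traces are columns, time is the row index), which makes the recurrence condition a single $\exists \pi.\ (x_\pi \wedge \G\F (\tau_0)_\pi)$ on the column-$0$ trace instead of your $\forall\exists$ unboundedness formula over identifiers, and which swaps which of the two matching constraints is local to a trace; both variants work, and the paper also gets by with the weaker requirement that traces sharing an identifier carry the same tiles rather than your ``exactly one trace per identifier.''
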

\begin{proof}
A \emph{tile} is a function $\tau\colon\{\east,\west,\north,\south\}\to \mathcal{C}$
that maps directions into a finite set $\mathcal{C}$ of colours.
Given a finite set $\tileset$ of tiles, a \emph{tiling of the positive quadrant}
with $\tileset$ is a function $\tiling\colon\mathbb{N}\times\mathbb{N}\to \tileset$ with the property that:
\begin{itemize}
	\item if $\tiling(i,j)=\tau_1$ and $\tiling(i+1,j)=\tau_2$, then $\tau_1(\east)=\tau_2(\west)$ and
	\item if $\tiling(i,j)=\tau_1$ and $\tiling(i,j+1)=\tau_2$, then $\tau_1(\north)=\tau_2(\south)$.
\end{itemize}
The \emph{recurring tiling problem} is to determine, given a finite set $\tileset$ of tiles and a designated $\tau_0\in \tileset$, whether there is a tiling~$\tiling$ of the positive quadrant with $\tileset$ such that there are infinitely many $j\in\mathbb{N}$ such that $\tiling(0,j)=\tau_0$. 
This problem is known to be $\Sigma_1^1$-complete~\cite{Harel85}, so reducing it to \hyltl 
satisfiability will establish the desired hardness result. 

In our reduction, each $x$-coordinate in the positive quadrant will be represented by a trace, and each $y$-coordinate by a point in time.\footnote{Note that this means that if we were to visually represent this construction, traces would be arranged vertically.} In order to keep track of which trace represents which $x$-coordinate, we use one designated atomic proposition $x$ that holds on exactly one time point in each trace: $x$ holds at time $i$ if and only if the trace represents $x$-coordinate $i$.

For this purpose, let $\tileset$ and $\tau_0$ be given, and define the following formulas\footnote{Technically, the formula we define is not a \hyltl formula, since it is not in prenex normal form. Nevertheless, it can be trivially transformed into one by moving the quantifiers over conjunctions.} over $\ap = \set{ x } \cup \tileset$:

\begin{itemize}
  \item Every trace has exactly one point where $x$ holds:
  \[\varphi_1 = \forall \pi.\ (\neg x_\pi \U (x_\pi\wedge \X\G\neg x_\pi))\]
  \item   For every $i\in\mathbb{N}$, there is a trace with $x$ in the $i$-th position:
  \[\phi_2 = (\exists \pi.\ x_\pi) \wedge (\forall \pi_1.\ \exists \pi_2.\ \F(x_{\pi_1}\wedge \X x_{\pi_2}))\]
  \item
  If two traces represent the same $x$-coordinate, then they contain the same tiles:
  \[\varphi_3 = \forall \pi_1,\pi_2.\ (\F(x_{\pi_1}\wedge x_{\pi_2})\rightarrow \G(\bigwedge_{\tau\in \tileset}(\tau_{\pi_1}\leftrightarrow \tau_{\pi_2})))\]
  \item Every time point in every trace contains exactly one tile:
  \[\varphi_4=\forall \pi.\ \G\bigvee_{\tau\in \tileset}(\tau_\pi\wedge \bigwedge_{\tau'\in \tileset\setminus \{\tau\}}\neg (\tau')_\pi)\]
  \item Tiles match vertically:
  \[\varphi_5=\forall \pi.\ \G\bigvee_{\tau\in \tileset}(\tau_\pi\wedge \bigvee\nolimits_{\tau'\in\{\tau'\in \tileset\mid \tau(\north)=\tau'(\south)\}}\X (\tau')_\pi)\]
  \item Tiles match horizontally:
  \[\varphi_6=\forall \pi_1,\pi_2.\ (\F(x_{\pi_1}\wedge \X x_{\pi_2})\rightarrow \G\bigvee_{\tau\in \tileset}(\tau_{\pi_1}\wedge \bigvee\nolimits_{\tau'\in \{\tau'\in \tileset\mid \tau(\east)=\tau'(\west)\}}(\tau')_{\pi_2}))\]
  \item Tile~$\tau_0$ occurs infinitely often at $x$-position $0$:
  \[\varphi_7=\exists \pi.\ (x_\pi \wedge \G\F (\tau_0)_\pi)\]
\end{itemize}

Finally, take $\varphi_{\tileset} = \bigwedge_{1\leq n \leq 7}\varphi_n$. Collectively, subformulas $\varphi_1$--$\varphi_3$ are satisfied in exactly those sets of traces that can be interpreted as $\mathbb{N}\times\mathbb{N}$. Subformulas $\varphi_4$--$\varphi_6$ then hold if and only if the $\mathbb{N}\times\mathbb{N}$ grid is correctly tiled with $\tileset$. Subformula $\varphi_7$, finally, holds if and only if the tiling uses the tile $\tau_0$ infinitely often at $x$-coordinate $0$. Overall, this means $\varphi_{\tileset}$ is satisfiable if and only if $\tileset$ can recurrently tile the positive quadrant.

The $\Sigma^1_1$-hardness of \hyltl satisfiability therefore follows from the $\Sigma_1^1$-hardness of the recurring tiling problem~\cite{Harel85}.
\end{proof}

The $\Sigma^1_1$-completeness of \hyltl satisfiability still holds if we restrict to ultimately periodic traces.

\begin{thm}
\label{thm:hyltlsatup-copmleteness}
  \hyltl satisfiability restricted to sets of ultimately periodic traces is $\Sigma^1_1$-complete.
\end{thm}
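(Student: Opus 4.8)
The plan is to reprove both bounds of Theorem~\ref{thm:hyltl-sat} under the extra constraint that models consist of ultimately periodic traces. Membership in $\Sigma^1_1$ is essentially free: there are only countably many ultimately periodic traces (each is finitely describable), so any set of them is automatically countable, and one does not even need the countable-model theorem of~\cite{FZ17}. I would therefore reuse the encoding from the proof of Lemma~\ref{lemma:hyltl_membership} verbatim --- quantifying over a countable model $T$, Skolem functions $S$, and expansions $E$ --- and simply conjoin the arithmetical requirement that every trace of $T$ is ultimately periodic, i.e.\ that for each trace name $m$ there exist $s,p>0$ with $T(m,n)=T(m,n+p)$ for all $n\ge s$. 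This adds only type-$0$ quantification, so the resulting formula still witnesses membership in $\Sigma^1_1$.

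For hardness, the obstacle is that the reduction of Lemma~\ref{lemma:hyltl_hardness} produces models in which a single trace encodes an entire column of the tiling, which need not be ultimately periodic. I would therefore re-encode the recurrent tiling problem so that each trace describes a \emph{single cell} and is eventually constant. Concretely, over $\ap=\{x,y\}\cup\tileset$ I would represent a cell $(i,j)$ carrying tile $\tau$ by a trace that marks $x$ exactly at position $i$, marks $y$ exactly at position $j$, carries $\tau$ at position $0$ (and no tile elsewhere), and is empty otherwise; such a trace is eventually $\emptyset^\omega$, hence ultimately periodic. The coordinate comparisons from Lemma~\ref{lemma:hyltl_hardness} transfer directly: $\F(x_{\pi_1}\wedge x_{\pi_2})$ expresses that two traces share the same $x$-coordinate (as each has a unique $x$), while $\F(x_{\pi_1}\wedge \X x_{\pi_2})$ expresses that $\pi_2$'s $x$-coordinate is one larger, and analogously for $y$.

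With these building blocks I would assemble a formula consisting of: \emph{format} constraints (as in $\varphi_1$) forcing exactly one $x$, exactly one $y$, and exactly one tile at position~$0$ in every trace; a \emph{generation} constraint forcing a representative for every cell, namely a base trace for $(0,0)$ together with $\forall\exists$ clauses producing, from any cell, its rightward and upward neighbours (matching one coordinate and increasing the other by one); a \emph{consistency} constraint making all traces with identical $(x,y)$-coordinates carry the same tile; horizontal and vertical \emph{matching} constraints phrased exactly as $\varphi_5$ and $\varphi_6$ but now reading the neighbours selected by the comparisons above; and a \emph{recurrence} constraint stating that for every cell in column~$0$ (i.e.\ $x$ at position~$0$) there is a cell in column~$0$ with strictly larger $y$-coordinate carrying $\tau_0$, expressible via $\F(y_{\pi_1}\wedge \X\F y_{\pi_2})$.

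Finally I would verify both directions. Given a recurrent tiling, the family of cell-traces described above is an ultimately periodic model. Conversely, from any model the generation and consistency constraints let me define, by induction on $i+j$, a total tiling $\tiling$, the matching constraints guarantee its validity, and the recurrence constraint yields infinitely many $j$ with $\tiling(0,j)=\tau_0$; since this direction uses only that the trace set is a model, it applies in particular to ultimately periodic models. The main obstacle is exactly this re-encoding step: redistributing the two grid dimensions into \emph{(position of a marker)} $\times$ \emph{(choice of trace)} rather than \emph{(time)} $\times$ \emph{(choice of trace)}, so that every trace becomes eventually constant while the $\forall\exists$ navigation between neighbouring cells and the recurrence along column~$0$ remain expressible in \hyltl.
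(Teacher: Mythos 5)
Your proof is correct, but the hardness part takes a genuinely different route from the paper. The paper does not re-encode the recurrent tiling of the full quadrant; instead it switches to a different $\Sigma_1^1$-complete problem from the same source~\cite{Harel85}, namely tiling the region $\{(i,j) \mid i \ge j\}$ below the diagonal with a designated tile occurring on every row, and then keeps the trace-per-column encoding of \autoref{lemma:hyltl_hardness} essentially unchanged: each column is simply truncated at the diagonal (i.e.\ at the position where $x$ holds), with all later positions carrying a special \myquot{null} tile, so every trace becomes essentially finite and hence ultimately periodic. The recurrence condition then stays local to single traces. You instead keep the recurrent tiling problem itself and redistribute the grid so that each trace encodes one cell via two position markers, which forces you to redo the generation, consistency, and matching machinery for a two-dimensional family of cell traces and to express recurrence along column~$0$ by a $\forall\exists$ clause demanding a strictly larger $y$-coordinate (via $\F(y_{\pi_1}\wedge\X\F y_{\pi_2})$) — all of which checks out, including the induction producing infinitely many occurrences of $\tau_0$. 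The trade-off: the paper's proof is a three-line delta on the previous reduction but relies on citing a second tiling variant (and a footnote translating between above/below the diagonal and rows/columns), whereas yours is self-contained relative to the recurrent tiling problem already used, at the cost of a heavier encoding. Both yield models of the form $w\cdot c^\omega$ for a constant letter $c$, consistent with the paper's remark after \autoref{thm:hyltlsatup-copmleteness} that one may restrict to essentially finite traces. Your membership argument coincides with the paper's (which only gestures at adapting \autoref{lemma:hyltl_membership}); your observation that countability of the model is automatic for ultimately periodic traces, so that~\cite{FZ17} is not needed here, is a small but valid simplification.
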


\begin{proof}
The problem of whether there is a tiling of $\{(i,j)\in \mathbb{N}^2\mid i\geq j\}$, i.e.\ the part of $\mathbb{N}\times\mathbb{N}$ below the diagonal, such that a designated tile $\tau_0$ occurs on every row, is also $\Sigma_1^1$-complete~\cite{Harel85}.\footnote{The proof in \cite{Harel85} is for the part \emph{above} the diagonal with $\tau_0$ occurring on every column, but that is easily seen to be equivalent.} We reduce this problem to \hyltl satisfiability on ultimately periodic traces.

The reduction is very similar to the one discussed above, with the necessary changes being: (i) every time point beyond $x$ satisfies the special tile ``null'', (ii) horizontal and vertical matching are only checked at or before time point $x$ and (iii) for every trace~$\pi_1$ there is a trace~$\pi_2$ such that $\pi_2$ has designated tile~$\tau_0$ at the time where $\pi_1$ satisfies $x$ (so $\tau_0$ holds at least once in every row).

Membership in $\Sigma_1^1$ can be shown similarly to the proof of \autoref{lemma:hyltl_membership}. So, the problem is $\Sigma^1_1$-complete.
\end{proof}

Furthermore, a careful analysis of the proof of \autoref{thm:hyltlsatup-copmleteness} shows that we can restrict ourselves to ultimately periodic traces of the form~$x\cdot\emptyset^\omega$, i.e.\ to essentially finite traces.

Finally, let us also state for completeness the complexity of finite-state satisfiability, i.e., the question whether a given \hyltl sentence is satisfied by the set of traces of some finite transition system.
The result is a direct consequence of \hyltl model-checking being decidable~\cite{FinkbeinerRS15} (i.e., one can exhaustively enumerate and model-check all finite transition systems), yielding the upper bound, and a careful analysis of the undecidability proof for \hyltl satisfiability due to Finkbeiner and Hahn: they present a reduction from Post's correspondence problem (PCP) to \hyltl satisfiability so that the following are equivalent for any PCP instance~$P$ and resulting \hyltl sentence~$\varphi_P$:
\begin{itemize}
    \item $P$ has a solution.
    \item $\varphi_P$ is satisfiable.
    \item $\varphi_P$ has a finite model of ultimately periodic traces.
\end{itemize}

\begin{propC}[\cite{FinkbeinerRS15,FinkbeinerH16}]
\label{prop_hyltlfs}
\hyltl finite-state satisfiability is $\Sigma_1^0$-complete.
\end{propC}

\section{The \texorpdfstring{\hyltl}{HyperLTL} Quantifier Alternation Hierarchy}
\label{sec:hierarchy}
The number of quantifier alternations in
a formula is a crucial parameter in the complexity of \hyltl model-checking
\cite{FinkbeinerRS15,Rabe16}.
A natural question is then to understand \emph{which} properties
can be expressed with $n$ quantifier alternations, that is, given
a sentence~$\varphi$, determine if there exists an equivalent one with at
most $n$ alternations.
In this section, we show that this problem is in fact exactly as hard as the
\hyltl unsatisfiability problem (which asks whether a \hyltl sentence has no model),
and therefore $\Pi^1_1$-complete. Here, $\Pi_1^1$ is the co-class of $\Sigma_1^1$, i.e.\ it contains the complements of the $\Sigma_1^1$ sets.

\subsection{Definition and strictness of the hierarchy}

Formally, the \hyltl quantifier alternation hierarchy is defined as follows.
Let $\varphi$ be a \hyltl formula. We say that $\varphi$ is a $\Sigma_0$- or
a $\Pi_0$-formula if it is quantifier-free. It is a $\Sigma_n$-formula
if it is of the form~$\varphi = \exists \pi_1.\  \cdots \exists \pi_k.\ \psi$
and $\psi$ is a $\Pi_{n-1}$-formula. It is a $\Pi_n$-formula if
it is of the form~$\varphi = \forall \pi_1.\ \cdots \forall \pi_k.\ \psi$ and
$\psi$ is a $\Sigma_{n-1}$-formula.
We do not require each block of quantifiers to be non-empty, i.e.\ we may have
$k=0$ and $\varphi = \psi$.
Note that formulas in $\Sigma_0 = \Pi_0$ have free variables.
As we are only interested in sentences, we disregard $\Sigma_0 = \Pi_0$ in the following and only consider the levels~$\Sigma_n $ and $ \Pi_n$ for $n>0$.

By a slight abuse of notation, we also let $\Sigma_n$ denote the set of hyperproperties
definable by a $\Sigma_n$-sentence, that is, the set of all 
$L(\varphi) = \{ T \subseteq (2^\AP)^\omega \mid T \models \varphi\}$
such that $\varphi$ is a $\Sigma_n$-sentence of \hyltl.

\begin{thmC}[{\cite[Corollary 5.6.5]{Rabe16}}]
  The quantifier alternation hierarchy of \hyltl is strict:
  for all $n > 0$, $\Sigma_n \subsetneq \Sigma_{n+1}$.
\end{thmC}

The strictness of the hierarchy also holds if we restrict our attention to
sentences whose models consist of finite sets of traces that end in the suffix~$\emptyset^\omega$, i.e.\ that are essentially finite.

\begin{thm}\label{thm:strictness-finite}
  For all $n > 0$, there exists a $\Sigma_{n+1}$-sentence $\varphi$ of \hyltl
  that is not equivalent to any $\Sigma_{n}$-sentence, and such that for all
  $T \subseteq {(2^{\AP})}^\omega$, if $T \models \varphi$ then $T$ contains
  finitely many traces and $T \subseteq {(2^{\AP})}^\ast\emptyset^\omega$.
\end{thm}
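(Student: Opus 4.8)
The plan is to embed the (strict) quantifier-alternation hierarchy of first-order logic over finite words into the \hyltl hierarchy, level by level. Concretely, I would start from the dot-depth hierarchy, i.e.\ the quantifier-alternation hierarchy of $\FO[<,{+}1]$ over finite words, which is strict by Cohen--Brzozowski and Thomas~\cite{CohenB71,Thomas81}: for every $n>0$ there is a $\Sigma_{n+1}(\FO[<,{+}1])$-sentence $\psi$ over some alphabet $\Gamma$ that is not equivalent to any $\Sigma_n(\FO[<,{+}1])$-sentence. I would encode a finite word $w=w_0\cdots w_{m-1}$ as a finite set $T_w$ of essentially finite traces over $\AP=\{x\}\cup\Gamma$, using one trace per position: the trace for position~$i$ is the pure ``blip'' $\emptyset^i\,(\{x\}\cup\{w_i\})\,\emptyset^\omega$. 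The whole point of this sparse encoding is that, under any trace assignment, the only nonempty time points visible to an \ltl\ matrix are the marked positions themselves; this is exactly what prevents a collapse of the hierarchy (in a ``full word in every trace'' encoding a single trace quantifier would already expose the entire word to the matrix, trivialising the levels).

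I would then define a forward translation $\encPhi{\cdot}$ from prenex $\FO[<,{+}1]$-sentences to \hyltl. Position quantifiers become trace quantifiers blockwise, so a $\Sigma_{n+1}$-sentence is mapped to a $\Sigma_{n+1}$-sentence; the atoms translate into fixed \ltl\ formulas read off at the blips: $y<y'$ via an $\U$ comparing the two $x$-markers, $y'=y{+}1$ via $\X$ at the marker, and a letter predicate $P_a(y)$ via $\F(x_\pi\wedge a_\pi)$. To meet the structural requirement of the theorem, I would conjoin well-formedness constraints forcing every model to be a finite set of essentially finite traces: that each trace is empty except for a single $x$-blip carrying exactly one letter, that traces sharing a marked position carry the same letter (so at most one trace per position), and that there is a maximal marked position. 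The first constraints are $\Pi_1$, and the last is the $\Sigma_2$-sentence expressing ``there is a trace whose $x$-mark occurs no earlier than that of any other trace''; since $n{+}1\ge 2$ all of these lie in $\Sigma_{n+1}$, so after reprenexing $\encPhi{\psi}$ is still a $\Sigma_{n+1}$-sentence, and a maximal mark together with one trace per position forces finiteness. A routine induction then yields the forward correspondence $T_w\models\encPhi{\psi}\iff w\models\psi$ on canonical encodings.

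The heart of the argument is a backward translation showing that, on canonical encodings, $\Sigma_n$ of \hyltl captures no more than $\Sigma_n$ of $\FO[<,{+}1]$. Given any $\Sigma_n$-sentence $\chi$ of \hyltl, I would turn its $n$ trace-quantifier blocks into $n$ position-quantifier blocks and translate its \ltl-matrix $\theta(\pi_1,\dots,\pi_k)$ into a first-order formula $\hat\theta(y_1,\dots,y_k)$ over words, so that the zipped blip-traces for positions $p_1,\dots,p_k$ satisfy $\theta$ iff $w\models\hat\theta(p_1,\dots,p_k)$. The crucial claim --- and the main obstacle --- is that $\hat\theta$ can be taken \emph{quantifier-free}, so that no alternation is added and $\chi$ maps to a $\Sigma_n(\FO[<,{+}1])$-sentence. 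This is where the blip encoding pays off: since the matrix only ever inspects the $k$ marked positions, the truth value of every subformula at a time point~$t$ depends only on the quantifier-free $\FO[<,{+}1]$-type of $(t,p_1,\dots,p_k)$ in $w$. One proves by induction on $\theta$ that $\X$ merely tests successor-adjacency of marks and that $\U$, firing over long empty stretches, only records the order and bounded distances of the marks that lie ahead, so each operator reduces to a Boolean combination of the order-, successor-, and letter-atoms over $y_1,\dots,y_k$. I expect the careful bookkeeping of these finitely many types --- in particular verifying that an $\U$ ranging over unmarked positions genuinely collapses to such a Boolean combination rather than smuggling in hidden quantification over the word --- to be the most delicate part of the whole proof.

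Finally I would assemble the pieces. Take $\psi\in\Sigma_{n+1}(\FO[<,{+}1])\setminus\Sigma_n(\FO[<,{+}1])$ and set $\varphi=\encPhi{\psi}$, a $\Sigma_{n+1}$-sentence of \hyltl all of whose models are finite sets of essentially finite traces, as required. If $\varphi$ were equivalent to some $\Sigma_n$-sentence $\chi$ of \hyltl, then for every finite word~$w$ we would obtain $w\models\psi\iff T_w\models\varphi\iff T_w\models\chi\iff w\models\hat\chi$ with $\hat\chi\in\Sigma_n(\FO[<,{+}1])$, whence $\psi\equiv\hat\chi$, contradicting the strictness of the dot-depth hierarchy. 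Hence $\varphi$ is a $\Sigma_{n+1}$-sentence not equivalent to any $\Sigma_n$-sentence, and it has only finite, essentially finite models, as claimed.
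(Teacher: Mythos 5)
Your proposal follows the same overall strategy as the paper---embedding the strict first-order alternation hierarchy over finite words into the \hyltl hierarchy via a one-trace-per-position encoding, with a forward translation preserving the quantifier prefix and a backward translation collapsing the quantifier-free matrix to something alternation-free---but the technical realisation of the backward step is genuinely different, and that is where the two arguments diverge in difficulty. The paper does \emph{not} place the blip at time exactly~$i$: it marks position~$i$ at time $f(i)$ for an arbitrary stretch function~$f$ (and keeps the letter at time~$0$), proves that the translated formulas are stretch-invariant, and then evaluates any candidate $\Sigma_n$-matrix on an encoding whose marks are spaced further apart than the matrix's temporal depth. An Ehrenfeucht--Fra\"iss\'e argument then shows the matrix can only see the \emph{order} of the marks and the initial label, so it collapses to a disjunction of order-type formulas $\F(o_{\pi_i}\land\F o_{\pi_j})$, landing back in plain $\FO[\le]$. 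Your rigid encoding deliberately exposes exact positions and distances to the matrix (via $\X$), which forces you to target the dot-depth hierarchy $\FO[<,{+}1]$ instead and to prove the much more delicate claim that a bounded-depth \ltl matrix over sparse blip-words computes a \emph{quantifier-free} function of the threshold-types of $(p_1,\dots,p_k)$. You correctly identify this as the crux but do not carry it out; note also that atoms such as ``$y_1$ is at absolute position $3$'' (testable by $\X\X\X$) are quantifier-free in $\FO[<,{+}1]$ only if the signature includes the constant $\min$, so you must work with the dot-depth signature including $\min$ (and cite strictness for that signature). I believe your crux claim is true and the argument can be completed, but the paper's stretch-invariance device exists precisely to make this step trivial, and as written your proof leaves its hardest part as an assertion. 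One further small point: your well-formedness constraints do not force the marked positions to be contiguous, which is harmless for the theorem and for the contradiction (which only uses canonical encodings), but worth stating explicitly since word-successor must coincide with time-successor on the encodings you actually quantify over.
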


This property 
is a necessary ingredient for our argument that membership
at some fixed level of the quantifier alternation hierarchy is $\Pi_1^1$-hard.
It could be derived from a small adaptation of the proof in~\cite{Rabe16},
and we provide for completeness an alternative proof by exhibiting a connection between the \hyltl quantifier alternation hierarchy and the quantifier alternation hierarchy for first-order logic over finite
words, which is known to be strict~\cite{CohenB71,Thomas82}.
The remainder of the subsection is dedicated to the proof of \autoref{thm:strictness-finite}.

The proof is organised as follows. We first define an encoding of finite words
as sets of traces. We then show that every first-order formula can be
translated into an equivalent (modulo encodings) \hyltl formula with the
same quantifier prefix (\autoref{lem:FO-to-hyltl}).
Finally, we show how to translate back \hyltl formulas into $\FOw$ formulas
with the same quantifier prefix (\autoref{lem:Qsimpl}), so that if the \hyltl quantifier alternation would
hierarchy collapse, then so would the hierarchy for $\FOw$.

\paragraph{First-Order Logic over Words.}
Let $\AP$ be a finite set of atomic propositions. A finite word over $\AP$
is a finite sequence $w = w(0)w(1) \cdots w(k)$ with $w(i) \in 2^\AP$ for all
$i$. We let $|w|$ denote the \emph{length} of $w$, and
$\pos(w) = \{0,\ldots,|w|-1\}$ the set of \emph{positions} of $w$.
The set of all finite words over $\AP$ is ${(2^\AP)}^\ast$.

Assume a countably infinite set of variables $\varfo$.
The set of $\FOw$ formulas is given by the grammar
\[
  \varphi ::= a(x) \mid x \le y \mid\lnot \varphi
  \mid \varphi \lor \varphi \mid  \exists x.\ \varphi \mid \forall x.\ \varphi \, ,
\]
where $a \in \AP$ and $x,y \in \varfo$.
The set of free variables of $\varphi$ is denoted $\Free(\varphi)$.
A sentence is a formula without free variables.

The semantics is defined as follows, $w \in {(2^\AP)}^\ast$ being a finite word
and $\nu : \Free(\varphi) \to \pos(w)$ an interpretation mapping variables to
positions in $w$:
\begin{itemize}
\item $(w, \nu) \models a(x)$ if $a \in w(\nu(x))$.
\item $(w, \nu) \models x \le y$ if $\nu(x) \le \nu(y)$.
\item $(w, \nu) \models \lnot \varphi$ if $w, \nu \not\models \varphi$.
\item $(w, \nu) \models \varphi \lor \psi$ if $w, \nu \models \varphi$ or
  $(w, \nu) \models \psi$.
\item $(w, \nu) \models \exists x.\ \varphi$ if there exists a position
  $n \in \pos(w)$ such that $(w,\nu[x \mapsto n]) \models \varphi$.
  \item $(w, \nu) \models \forall x.\ \varphi$ if for all positions
  $n \in \pos(w)$: $(w,\nu[x \mapsto n]) \models \varphi$.
\end{itemize}
If $\varphi$ is a sentence, we write
$w \models \varphi$ instead of $(w,\nu) \models \varphi$.

As for \hyltl, a $\FOw$ formula in prenex normal form is a $\Sigma_n$-formula if its quantifier
prefix consists of $n$ alternating blocks of
quantifiers (some of which may be empty), starting with a block of
existential quantifiers.
We let $\Sfo n$ denote the class of languages of finite words definable
by $\Sigma_n$-sentences.

\begin{thmC}[\cite{Thomas82,CohenB71}]\label{thm:strictness-FO}
  The quantifier alternation hierarchy of $\FOw$ is strict: for all
  $n \ge 0$, $\Sfo {n} \subsetneq \Sfo {n+1}$.
\end{thmC}

\paragraph{Encodings of Words}
The idea to prove \autoref{thm:strictness-finite} is to encode a word $w \in {(2^\AP)}^\ast$ as a set of traces
$T$ where each trace in $T$ corresponds to a position in $w$;
letters in the word are reflected in the label of the first
position of the corresponding trace in $T$, while the total order $<$ is
encoded using a fresh proposition $o \notin \ap$. More precisely, each trace
has a unique position labelled~$o$, distinct from one trace to another,
and traces are ordered according to the order of appearance of
the proposition $o$.
Note that there are several possible encodings for a same word, and we may fix
a canonical one when needed.
This is defined more formally below.

A \emph{stretch function} is a monotone funtion~$f: \Nat \to  \Nat\setminus\set{0}$, i.e.\ it satisfies $0 < f(0) < f(1) < \cdots$.
For all words $w \in {(2^{\AP})}^\ast$ and stretch functions~$f$, we define the set of traces~$\enc w f = \{t_n \mid n \in \pos(w)\} \subseteq (2^{\AP \cup \{o\}})^\ast\emptyset^\omega$
as follows: for all $i \in \Nat$,
\begin{itemize}
\item for all $a \in \AP$,
  $a \in t_n(i)$ if and only if $i = 0$ and $a \in w(n)$
\item $o \in t_n(i)$ if and only if $i = f(n)$.
\end{itemize}

It will be convenient to consider encodings with arbitrarily large spacing
between $o$'s positions.
To this end, for every $N \in \Nat$, we define a particular encoding
\[
  \encn w N = \enc w {n \mapsto N(n+1)} \, .
\]
So in $\encn w N$, two positions with non-empty labels are at distance at
least $N$ from one another.

Given $T = \enc w f$ and a trace assignment
$\Pi \colon \var \rightarrow T$,
we let $\tn T N = \encn w N$, and
$\nun \Pi N \colon \var \rightarrow \tn T N$ the trace assignment defined
by shifting the $o$ position in each $\Pi(\pi)$ accordingly, i.e.
\begin{itemize}
	\item $o \in \Pi^{(N)}(\pi) (N(i+1))$ if and only if $o \in \Pi(\pi)(f(i))$ and
	\item for all $a \in \ap$: $a \in \Pi^{N}(\pi)(0)$ if and only if $a \in \Pi(\pi)(0)$.
\end{itemize}

\paragraph{From FO to \hyltl}
We associate with every $\FOw$ formula $\varphi $ in prenex normal form
a \hyltl formula $\encPhi \varphi$ over $\ap \cup \{o\}$ by replacing
in $\varphi$:
\begin{itemize}
\item $a(x)$ with $a_x$, and
\item $x \le y$ with $\F(o_x \land \F o_y)$. 
\end{itemize}
In particular, $\encPhi \varphi$ has the same quantifier prefix as $\varphi$, which means that we treat variables of $\varphi$ as trace variables of $\encPhi{\varphi}$.

\begin{lem}\label{lem:FO-to-hyltl}
  For every $ \FOw$ sentence $\varphi $ in prenex normal form,
  $\varphi$ is equivalent to $\encPhi \varphi$ in the following sense:
  for all $w \in {(2^{\AP})}^\ast$ and all stretch functions~$f$,
  \[
    w \models \varphi \quad\text{if and only if}\quad
    \enc w f \models \encPhi \varphi \, .
  \]
\end{lem}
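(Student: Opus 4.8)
The plan is to prove the biconditional by induction on the structure of $\varphi$, generalising the statement to formulas with free variables so that the induction goes through. Specifically, I would prove that for every $\FOw$ formula $\varphi$ (not necessarily a sentence), every word $w \in (2^{\AP})^\ast$, every stretch function $f$, and every interpretation $\nu \colon \Free(\varphi) \to \pos(w)$, we have $(w,\nu) \models \varphi$ if and only if $(\enc w f, \Pi_\nu) \models \encPhi\varphi$, where $\Pi_\nu$ is the trace assignment mapping each free variable $x$ to the trace $t_{\nu(x)}$ of $\enc w f$ corresponding to the position $\nu(x)$. The sentence case is then the instance with $\nu$ and $\Pi_\nu$ both empty. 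The key correspondence driving the whole argument is that positions $n \in \pos(w)$ are in bijection with traces $t_n \in \enc w f$, and that this bijection respects the linear order: $t_n$ places its unique $o$ at time $f(n)$, so $m \le n$ in $\pos(w)$ exactly when the $o$ of $t_m$ occurs no later than the $o$ of $t_n$.

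First I would handle the two atomic cases, which contain all the genuine content. For $a(x)$: by definition $(w,\nu)\models a(x)$ iff $a \in w(\nu(x))$, and by the definition of $\enc w f$ the trace $t_{\nu(x)}$ carries $a$ at time $0$ iff $a \in w(\nu(x))$; since $a_x$ is evaluated at the initial position of $\Pi_\nu(x) = t_{\nu(x)}$, the two sides coincide. For $x \le y$: I need $\nu(x) \le \nu(y)$ iff $\enc w f, \Pi_\nu \models \F(o_x \land \F o_y)$. The right-hand formula asserts there is a time $j$ with $o \in t_{\nu(x)}(j)$ and a later time $j' \ge j$ with $o \in t_{\nu(y)}(j')$. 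Since each trace has its $o$ at a single, unique position — $o \in t_n(i)$ iff $i = f(n)$ — this holds precisely when $f(\nu(x)) \le f(\nu(y))$, and because $f$ is strictly monotone this is equivalent to $\nu(x) \le \nu(y)$. Here the strict monotonicity of the stretch function is exactly what makes the order faithfully recoverable; this is the crux of the encoding.

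Next I would dispatch the Boolean cases $\lnot\varphi$ and $\varphi \lor \psi$, which are immediate from the induction hypothesis since both semantics treat negation and disjunction identically and the encoding commutes with these connectives. The quantifier cases are the heart of the structural step: for $\exists x.\ \varphi$, on the $\FOw$ side I range over positions $n \in \pos(w)$, and on the \hyltl side $\exists x.\ \encPhi\varphi$ ranges over traces $t \in \enc w f$. The point is that $\enc w f$ consists of exactly the traces $\{t_n \mid n \in \pos(w)\}$, so choosing a witness position $n$ corresponds precisely to choosing the witness trace $t_n$, and under this correspondence $\Pi_{\nu[x\mapsto n]} = \Pi_\nu[x \mapsto t_n]$; the induction hypothesis then closes the case. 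The universal case is dual.

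The main obstacle — and the only subtle point — is ensuring that the set of traces quantified over in $\enc w f$ contains \emph{exactly} the traces indexed by positions of $w$, with no spurious traces and no missing ones, so that trace quantification and position quantification match bijectively. This is guaranteed by the definition $\enc w f = \{t_n \mid n \in \pos(w)\}$, but it must be invoked carefully in the quantifier step. A minor related care is that a temporal subformula such as $\F(o_x \land \F o_y)$ is evaluated at the current (initial) position of the relevant traces, so I should keep the invariant that in the inductive statement we always evaluate $\encPhi\varphi$ at $\Pi_\nu$ with each trace at time $0$; since $\FOw$ has no temporal shifting and the encoded atoms only read the $o$-markers at their fixed absolute positions $f(n)$, no suffix-shifting ever occurs and this invariant is trivially maintained throughout the induction.
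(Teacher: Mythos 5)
Your proposal is correct and takes essentially the same route as the paper: the paper's proof is exactly an induction over the construction of $\varphi$ relying on the bijection between positions of $w$ and traces of $\enc w f$, which you have simply spelled out in full (including the correct use of strict monotonicity of $f$ to recover the order from $\F(o_x \land \F o_y)$).
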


\begin{proof}
By induction over the construction of $\varphi$, relying on the fact that traces in $\enc w f$ are in bijection with positions in $w$.
\end{proof}

In particular, note that the evaluation of $\encPhi \varphi$ on $\enc w f$ does
not depend on $f$. We call such a formula \emph{stretch-invariant}:
a \hyltl sentence  $\varphi$ is \emph{stretch-invariant} if for all finite words $w$ and all stretch functions~$f$ and $g$,
\[
  \enc w f \models \varphi \quad\text{ if and only if }\quad
  \enc w g \models \varphi \, .
\]

\begin{lem}\label{lem:stretchinv}
  For all $\varphi \in \FOw$, $\encPhi \varphi$ is stretch-invariant.
\end{lem}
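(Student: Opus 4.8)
The plan is to read off stretch-invariance as an immediate consequence of \autoref{lem:FO-to-hyltl}. That lemma already pins down the truth value of $\enc w f \models \encPhi \varphi$ as a quantity that has nothing to do with the chosen stretch function: it equals $w \models \varphi$, a condition on the word $w$ alone. So no new induction is needed — the inductive work was done once and for all in the proof of \autoref{lem:FO-to-hyltl}, and here I only have to quote it.

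First I would note that stretch-invariance is defined only for \hyltl sentences, and that $\encPhi \varphi$ is a sentence exactly when $\varphi$ is an $\FOw$ sentence in prenex normal form; this is the situation to which the statement applies, so I fix such a $\varphi$. Then, for an arbitrary finite word $w \in {(2^{\AP})}^\ast$ and arbitrary stretch functions $f$ and $g$, I would apply \autoref{lem:FO-to-hyltl} twice, once with $f$ and once with $g$, obtaining
\[
  \enc w f \models \encPhi \varphi
  \;\Longleftrightarrow\; w \models \varphi
  \;\Longleftrightarrow\; \enc w g \models \encPhi \varphi ,
\]
where the middle condition is manifestly independent of the stretch function. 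Chaining the two equivalences gives $\enc w f \models \encPhi \varphi$ if and only if $\enc w g \models \encPhi \varphi$, which is exactly the definition of $\encPhi \varphi$ being stretch-invariant.

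There is no genuine obstacle here: the entire difficulty is hidden in \autoref{lem:FO-to-hyltl}, whose proof establishes the bijection between traces of $\enc w f$ and positions of $w$ and handles the atomic cases $a(x) \mapsto a_x$ and $x \le y \mapsto \F(o_x \land \F o_y)$ uniformly in $f$. The only point that requires a moment's care is the remark above that stretch-invariance, being phrased for sentences, presupposes $\varphi$ is a sentence, so that $\encPhi \varphi$ has no free trace variables and can be evaluated on $\enc w f$ under the empty assignment; once this is observed, the lemma is simply the contentless-but-useful restatement that the right-hand side of \autoref{lem:FO-to-hyltl} does not mention $f$.
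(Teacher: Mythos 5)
Your proof is correct, but it takes a genuinely different route from the paper's. The paper proves \autoref{lem:stretchinv} directly, by induction over the construction of $\encPhi \varphi$, relying on the fact that the only temporal subformulas of $\encPhi \varphi$ are of the form $\F(o_x \land \F o_y)$, whose truth value on an encoding depends only on the relative order of the $o$-labelled positions and hence not on the stretch function. You instead obtain the statement as an immediate corollary of \autoref{lem:FO-to-hyltl}: since $\enc w f \models \encPhi \varphi$ is equivalent to $w \models \varphi$ for every stretch function $f$, and the latter condition does not mention $f$, two applications of that lemma (with $f$ and with $g$) chain into exactly the definition of stretch-invariance. This is legitimate: \autoref{lem:FO-to-hyltl} is established before \autoref{lem:stretchinv} and its proof does not depend on it, so there is no circularity; indeed the paper itself makes your observation in the prose immediately after \autoref{lem:FO-to-hyltl}, noting that the evaluation of $\encPhi \varphi$ on $\enc w f$ does not depend on $f$, even though its formal proof then proceeds by a separate induction. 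Your derivation is shorter and reuses the inductive work already done; the paper's direct argument is self-contained and isolates the syntactic reason for invariance, but buys nothing extra here. Your side remark that the lemma should be read as ranging over $\FOw$ sentences in prenex normal form (so that $\encPhi \varphi$ is a \hyltl sentence, the only case in which stretch-invariance is defined) is the correct reading and matches every later use of the lemma.
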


\begin{proof}
By induction over the construction of $\encPhi{\varphi}$, relying on the fact that the only temporal subformulas of $\encPhi{\varphi}$ are of the form~$\F(o_x \land \F o_y)$.
\end{proof}

\paragraph{Going Back From \hyltl to FO}
Let $\simpleHyperLTL$ denote the fragment of \hyltl consisting of
all formulas $\encPhi \varphi$, where $\varphi$ is an $\FOw$ formula in
prenex normal form.
Equivalently, $\psi \in \simpleHyperLTL$ if it is a \hyltl formula of the
form $\psi = Q_1x_1 \cdots Q_kx_k.\ \psi_0$, where $\psi_0$ is a Boolean
combination of formulas of the form  $a_x$ or $\F(o_x \land \F o_y)$.

Let us prove that every \hyltl sentence is equivalent, over
sets of traces of the form~$\enc w f$, to a sentence in $\simpleHyperLTL$
with the same quantifier prefix.
This means that if a \hyltl sentence $\encPhi \varphi$ is equivalent to a \hyltl
sentence with a smaller number of quantifier alternations, then it is also
equivalent over all word encodings to one of the form $\encPhi \psi$, which
in turns implies that the $\FOw$ sentences $\varphi$ and $\psi$ are equivalent.

The \emph{temporal depth} of a quantifier-free formula in \hyltl
is defined inductively as
\begin{itemize}\label{defdepth}
\item 	$\depth(a_\pi) = 0$,
\item  $\depth(\lnot \varphi) = \depth(\varphi)$,
\item $\depth(\varphi \lor \psi) = \max(\depth(\varphi),\depth(\psi))$,
\item $\depth(\X \varphi) = 1 + \depth(\varphi)$, and
\item $\depth(\varphi \U \psi) = 1 + \max(\depth(\varphi,\psi))$.
\end{itemize}
For a general \hyltl formula $\varphi = Q_1 \pi_1 \cdots Q_k \pi_k.\ \psi$,
we let $\depth(\varphi) = \depth(\psi)$.

\begin{lem}\label{lem:QFsimpl}
  Let $\psi$ be a quantifier-free formula of \hyltl.
  Let $N = \mathit{depth}(\psi)+1$.
  There exists a quantifier-free formula $\simpl \psi \in \simpleHyperLTL$
  such that for all $T = \enc w f$ and trace assignments $\Pi$,
  \[
    (\tn T N, \nun \Pi N) \models \psi \quad\text{if and only if}\quad (T,\Pi) \models \simpl \psi
     \, .
  \]
\end{lem}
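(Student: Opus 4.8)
The plan is to reduce the statement to a single ``profile determinacy'' claim and then read off $\simpl\psi$ as a finite Boolean combination. Write $\pi_1,\dots,\pi_m$ for the free (trace) variables of $\psi$. Given $T=\enc w f$ and a trace assignment $\Pi$, call the \emph{profile} of $(T,\Pi)$ the pair consisting of (i) the position-$0$ label $\Pi(\pi_j)(0)\cap\AP$ of each variable, and (ii) the linear preorder on $\{\pi_1,\dots,\pi_m\}$ induced by the positions of the $o$-markers (i.e.\ $\pi_i\preceq\pi_j$ iff the unique $o$ in $\Pi(\pi_i)$ occurs no later than the one in $\Pi(\pi_j)$). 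Since $\AP$ and $m$ are finite, there are only finitely many profiles. The core claim is that the truth of $(\tn T N,\nun\Pi N)\models\psi$ depends only on the profile of $(T,\Pi)$.

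Granting the claim, I build $\simpl\psi$ as follows. Each profile $P$ is described by a quantifier-free $\simpleHyperLTL$ formula $\phi_P$: the label part is a conjunction of literals $a_{\pi_j}$ / $\lnot a_{\pi_j}$ (read at position $0$), and the order part is a conjunction of formulas $\F(o_{\pi_i}\wedge\F o_{\pi_j})$ and their negations, which exactly express $\pi_i\preceq\pi_j$ (equalities being captured by requiring both directions). Setting $\simpl\psi=\bigvee\{\phi_P : P \text{ is a profile on which } \psi \text{ holds}\}$ yields a quantifier-free formula of $\simpleHyperLTL$. It then remains to observe that these simple building blocks read exactly the profile of $(T,\Pi)$ on the \emph{unstretched} $T$ as well, since both the position-$0$ labels and the marker order coincide in $T$ and $\tn T N$; hence $(T,\Pi)\models\simpl\psi$ iff the profile of $(T,\Pi)$ is one on which $\psi$ holds, which by the claim is iff $(\tn T N,\nun\Pi N)\models\psi$.

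The content is the claim, proved by induction on the subformulas $\chi$ of $\psi$ using the crucial spacing bound $N=\depth(\psi)+1$. In $\tn T N$ the only non-empty positions are position $0$ (carrying $\AP$-labels) and the $o$-markers, which sit at the multiples $N,2N,3N,\dots$; consecutive non-empty positions are therefore at distance $\ge N>\depth(\chi)$ for every subformula $\chi$. The inductive invariant states that the truth of $\chi$ at a suffix position $\ell$ of $(\tn T N,\nun\Pi N)$ is determined by the ``local view'' at $\ell$, namely which single event (if any) lies in the look-ahead window $[\ell,\ell+\depth(\chi)]$ together with its label, plus the order of the markers lying strictly beyond that window. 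The spacing bound guarantees that such a window contains at most one event (two events would be within $\depth(\chi)<N$ of one another), so exact distances between events are invisible to $\chi$ and only their order survives. The Boolean and $\X$-cases are routine shifts of the window; the delicate case is $\chi_1\U\chi_2$, where one must show that the ``$\chi_1$ throughout, then $\chi_2$'' pattern is governed only by the per-event truth values of $\chi_1,\chi_2$ (supplied by the induction hypothesis) and the marker order. The key auxiliary fact, itself a consequence of the invariant, is that any subformula is \emph{constant} across a maximal empty stretch with the same remaining-marker order; this collapses the universal ``for all intermediate positions'' requirement of $\U$ over long empty gaps to a check at the finitely many event positions, so no distance counting can occur.

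I expect the main obstacle to be formulating the inductive invariant precisely enough to push the $\U$-case through: one has to simultaneously track the bounded, distance-sensitive reach of nested $\X$'s against the unbounded, order-only reach of $\U$, while handling the intermediate (non-event) suffix positions produced by shifting, which are not themselves of the form $(\tn{T}{N},\nun{\Pi'}{N})$. Once the ``constant on empty stretches'' lemma is in place, the remaining cases are bookkeeping, and the construction of $\simpl\psi$ above is immediate.
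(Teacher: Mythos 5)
Your proposal is correct and follows essentially the same route as the paper: the paper likewise reduces everything to the fact that $\psi$'s value on the combined word depends only on the initial label and the total preorder of the $o$-markers (its formulas $\varphi_{A,\preceq}$ are exactly your profile formulas $\phi_P$), defines $\simpl\psi$ as the disjunction over satisfying profiles, and finishes by noting that profiles are read identically on $T$ and $\tn T N$. The only difference is that the paper establishes the profile-determinacy claim via an Ehrenfeucht--Fra\"iss\'e-game argument (its Claim on words in $\emptyset^m a_0\emptyset^n\emptyset^\ast a_1\emptyset^n\emptyset^\ast\cdots$), whereas you prove the same inseparability fact by a direct structural induction; both are standard and your flagged care with the $\U$-case and the window near event positions is exactly where the precision is needed.
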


\begin{proof} 
  Assume that $\Free(\psi) = \{\pi_1, \ldots, \pi_k\}$ is the set of free variables of $\psi$.
  Note that the value of $(\tn T N, \nun \Pi N) \models \psi$ depends
  only on the traces $\nun \Pi N(\pi_1),\ldots,\nun \Pi N(\pi_k)$.
  We see the tuple $(\nun \Pi N(\pi_1),\ldots,\nun \Pi N(\pi_k))$ as a single
  trace $w_{T,\Pi,N}$ over the set of propositions
  $\AP' = \{a_{\pi} \mid a \in \AP\cup\set{o} \land \pi \in \Free(\psi)\}$,
  and $\psi$ as an LTL formula over $\AP'$.
  
  We are going to show that the evaluation of $\psi$ over words $w_{T,\Pi,N}$ is
  entirely determined by the ordering of $o_{\pi_1}, \ldots, o_{\pi_n}$ in
  $w_{T,\Pi,N}$ and the label of $w_{T,\Pi,N}(0)$, which we can both
  describe using a formula in $\simpleHyperLTL$.
  The intuition is that non-empty labels in $w_{T,\Pi,N}$ are
  at distance at least $N$ from one another, and  a temporal formula of depth
  less than $N$ cannot distinguish between $w_{T,\Pi,N}$ and other words
  with the same sequence of non-empty labels and sufficient spacing between
  them.
  More generally, the following can be easily proved via
  Ehrenfeucht-Fra\"iss\'e games:

  \begin{clm}\label{claim:EF}
    Let $m,n \ge 0$, $(a_i)_{i \in \Nat}$ be a sequence of letters in
    $2^{\AP'}$, and
    \[
      w_1, w_2 \in
      \emptyset^m a_0 \emptyset^{n}\emptyset^\ast
      a_1 \emptyset^{n}\emptyset^\ast a_2 \emptyset^{n}\emptyset^\ast \cdots
    \]
    Then for all LTL formulas $\varphi$ such that $\depth(\varphi) \le n$,
    $w_1 \models \varphi$ if and only if $w_2 \models \varphi$.
  \end{clm}

  Here we are interested in words of a particular shape.
  Let $L_N$ be the set of infinite words $w \in {(2^{\AP'})}^\omega$ such that:
  \begin{itemize}
  \item For all $\pi \in \{\pi_1,\ldots,\pi_k\}$, there is a unique
    $i \in \Nat$ such that $o_{\pi} \in w(i)$. Moreover, $i \ge N$.
  \item If $o_{\pi} \in w(i)$ and $o_{\pi'} \in w(i') $,
    then $|i-i'| \ge N$ or $i = i'$.
  \item If $a_{\pi} \in w(i)$ for some $a \in \AP$ and
    $\pi \in \{\pi_1,\ldots,\pi_k\}$, then $i = 0$.
  \end{itemize}
  Notice that $w_{T,\Pi,N} \in L_N$ for all $T$ and all $\Pi$.
  
  For $w_1, w_2 \in L_N$, we write $w_1 \sim w_2$ if $w_1$ and $w_2$ differ
  only in the spacing between non-empty positions, that is, if there are
  $\ell \le k$ and $a_0, \ldots, a_\ell \in 2^{\AP'}$ such that
  $w_1,w_2 \in a_0 \emptyset^\ast a_1 \emptyset^\ast \cdots a_\ell \emptyset^\omega$.    
  Notice that $\sim$ is of finite index.
  Moreover, we can distinguish between its equivalence classes using formulas
  defined as follows. For all
  $A \subseteq \{a_{\pi} \mid a \in \AP \land \pi \in \{\pi_1,\ldots,\pi_k\}\}$
  and all total preorders $\preceq$ over $\{\pi_1, \ldots, \pi_k\}$,\footnote{
    I.e.\ $\preceq$ is required to be transitive and for all
    $\pi,\pi' \in \{\pi_1, \ldots, \pi_k\}$, we have $\pi \preceq \pi'$
    or $\pi' \preceq \pi$ (or both)} we let
  \[
    \varphi_{A,\preceq} =
    \bigwedge_{a \in A} a \land \bigwedge_{a \notin A} \lnot a
    \land \bigwedge_{\pi_i \preceq \pi_j} \F(o_{\pi_i} \land \F o_{\pi_j})
    \, .
  \]
  Note that every word $w \in L_N$ satisfies exactly one formula
  $\varphi_{A,\preceq}$, and that all words in an equivalence class
  satisfy the same one.
  We denote by $L_{A,\preceq}$ the equivalence class of $L_N/{\sim}$
  consisting of words satisfying $\varphi_{A,\preceq}$.
  So we have $L_N = \biguplus L_{A,\preceq}$.

  Since $\psi$ is of depth less than $N$, by \autoref{claim:EF}
  (with $n = N-1$ and $m = 0$), for all $w_1 \sim w_2$ we have
  $w_1 \models \psi$ if and only if $w_2 \models \psi$.
  Now, define $\simpl \psi$ as the disjunction of all $\varphi_{A,\preceq}$
  such that $\psi$ is satisfied by elements in the class $L_{A,\preceq}$.
  Then $\simpl \psi \in \simpleHyperLTL$, and
  \[
    \text{for all } w \in L_N, \quad w \models \simpl \psi
    \text{ if and only if } w \models \psi \, .
  \]
  In particular, for every $T$ and every $\Pi$, we have
  $(\tn T N,\nun \Pi N) \models \simpl \psi$ if and only if 
  $(\tn T N,\nun \Pi N) \models \psi$.
  Since the preorder between propositions $o_\pi$ and the label of the initial
  position are the same in $(\tn T N,\nun \Pi N)$ and $(T,\Pi)$,
  we also have $(T,\Pi) \models \simpl \psi$ if and only if
  $(\tn T N,\nun \Pi N) \models \simpl \psi$.
  Therefore,
  \[
(\tn T N, \nun \Pi N) \models \psi \quad\text{if and only if}\quad     (T,\Pi) \models \simpl \psi
     \, . \qedhere
  \]
\end{proof}

For a quantified \hyltl sentence $\varphi = Q_1\pi_1 \cdots Q_k\pi_k.\ \psi$, we let
$\simpl \varphi = Q_1\pi_1 \ldots Q_k\pi_k.\ \simpl \psi$, where $\simpl \psi$ is
the formula obtained through \autoref{lem:QFsimpl}.

\begin{lem}\label{lem:Qsimpl}
  For all \hyltl formulas~$\varphi $,
  for all $T = \enc w f$ and trace assignments~$\Pi$,
  \[
 (\tn T N, \nun \Pi N) \models \varphi    \text{ if and only if }
    (T, \Pi) \models \simpl \varphi \, ,
  \]
  where $N = \mathit{depth}(\varphi)+1$.
\end{lem}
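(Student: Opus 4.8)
The plan is to prove the statement by induction on the number~$k$ of trace quantifiers in $\varphi$, peeling off the outermost quantifier at each step. A useful preliminary remark is that $N = \depth(\varphi)+1$ stays fixed throughout the induction: since $\depth(\varphi)$ is defined to be the depth of the quantifier-free matrix of $\varphi$, removing a leading quantifier does not change $N$, so the induction hypothesis can be applied with the same value of~$N$. The base case $k = 0$, where $\varphi$ is quantifier-free, is precisely \autoref{lem:QFsimpl}.

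For the inductive step, write $\varphi = Q_1\pi_1.\ \varphi'$ with $Q_1 \in \set{\exists,\forall}$ and $\varphi'$ a formula with $k-1$ quantifiers, so that by construction $\simpl\varphi = Q_1\pi_1.\ \simpl{\varphi'}$. The traces of $T = \enc w f$ are exactly $\set{t_n \mid n \in \pos(w)}$, and the traces of $\tn T N = \encn w N$ are obtained by relocating the unique $o$-position of each $t_n$ from $f(n)$ to $N(n+1)$ while keeping its initial label; this gives a bijection between the traces of $T$ and those of $\tn T N$, and I write $\hat t_n$ for the trace of $\tn T N$ corresponding to $t_n$. The crucial point is that this bijection is compatible with the stretching operation on assignments: extending $\Pi$ by mapping $\pi_1$ to $t_n$ and then stretching yields the same assignment as stretching $\Pi$ first and then mapping $\pi_1$ to $\hat t_n$. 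In symbols, $\nun{(\Pi[\pi_1 \mapsto t_n])}N = (\nun\Pi N)[\pi_1 \mapsto \hat t_n]$, which follows immediately from the definition of $\nun{\cdot}N$, since both sides agree on $\pi_1$ (both equal $\hat t_n$) and on every other variable of $\Pi$.

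With this identity in hand, the existential case unfolds routinely: $(\tn T N, \nun\Pi N) \models \exists\pi_1.\ \varphi'$ holds iff some trace $\hat t_n$ of $\tn T N$ satisfies $(\tn T N, (\nun\Pi N)[\pi_1 \mapsto \hat t_n]) \models \varphi'$; by the compatibility identity this is $(\tn T N, \nun{(\Pi[\pi_1\mapsto t_n])}N) \models \varphi'$, which by the induction hypothesis applied to $\varphi'$ is equivalent to $(T, \Pi[\pi_1 \mapsto t_n]) \models \simpl{\varphi'}$; since $t_n$ ranges over exactly the traces of $T$ as $n$ ranges over $\pos(w)$, this is $(T, \Pi) \models \exists\pi_1.\ \simpl{\varphi'}$, i.e.\ $(T,\Pi)\models\simpl\varphi$. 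The universal case is identical, with ``some trace'' replaced by ``every trace''. I expect the only delicate point to be the verification of the compatibility identity $\nun{(\Pi[\pi_1 \mapsto t_n])}N = (\nun\Pi N)[\pi_1 \mapsto \hat t_n]$, namely that stretching and extension of trace assignments commute under the bijection between $T$ and $\tn T N$; once this is established, the two quantifier cases are a direct unfolding of the semantics together with the induction hypothesis.
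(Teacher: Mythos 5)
Your proposal is correct and follows essentially the same route as the paper: induction peeling off the outermost quantifier, with \autoref{lem:QFsimpl} as the base case and the two quantifier cases handled by unfolding the semantics via the bijection between the traces of $T$ and those of $\tn T N$. You even make explicit two points the paper leaves implicit — that $N$ is invariant under the induction because $\depth$ of a quantified formula is defined as the depth of its matrix, and the commutation identity $\nun{(\Pi[\pi_1 \mapsto t_n])}N = (\nun\Pi N)[\pi_1 \mapsto \hat t_n]$ — which is a welcome clarification rather than a deviation.
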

\begin{proof}
  We prove the result by induction. Quantifier-free formulas are covered by \autoref{lem:QFsimpl}. 
  We have
  \begin{align*}
    (T, \Pi) \models \exists \pi.\ \simpl \psi
    & \quad\Leftrightarrow\quad
      \exists t \in T \text{ such that }
      (T,\Pi[\pi \mapsto t]) \models \simpl \psi \\
    & \quad\Leftrightarrow\quad
      \exists t \in T \text{ such that }
      (\tn T N, \nun {(\Pi[\pi \mapsto t])} N)
      \models \psi && \text{(IH)} \\
    & \quad\Leftrightarrow\quad
        \exists t \in \tn T N \text{ such that }
      (\tn T N, \nun {\Pi} N[\pi \mapsto t])
      \models \psi \\
    & \quad\Leftrightarrow\quad
        (\tn T N, \nun \Pi N) \models \exists \pi.\ \psi \, ,
  \end{align*}
  and similarly,
  \begin{align*}
    (T, \Pi) \models \forall \pi.\ \simpl \psi
    & \quad\Leftrightarrow\quad
      \forall t \in T, \text{we have }
      (T,\Pi[\pi \mapsto t]) \models \simpl \psi \\
    & \quad\Leftrightarrow\quad
      \forall t \in T, \text{we have }
      (\tn T N, \nun {(\Pi[\pi \mapsto t])} N)
      \models \psi && \text{(IH)} \\
    & \quad\Leftrightarrow\quad
        \forall t \in \tn T N, \text{we have }
      (\tn T N, \nun \Pi N [\pi \mapsto t])
      \models \psi \\
    & \quad\Leftrightarrow\quad
        (\tn T N, \nun \Pi N) \models \forall \pi.\ \psi \, . && \qedhere
  \end{align*}
\end{proof}

As a consequence, we obtain the following equivalence.

\begin{lem}\label{lem:eqsimpl}
  For all stretch-invariant \hyltl sentences $\varphi$ and
  for all $T = \enc w f$,
  \[
    T \models \varphi \quad\text{if and only if}\quad
    T \models \simpl \varphi \, .
  \]
\end{lem}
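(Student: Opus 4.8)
The plan is to derive the statement directly from \autoref{lem:Qsimpl} by specialising it to sentences, i.e.\ to the empty trace assignment, and then to close the remaining gap using the stretch-invariance hypothesis. Concretely, \autoref{lem:Qsimpl} already equates the evaluation of $\varphi$ on the ``stretched'' encoding $\tn T N$ with the evaluation of $\simpl\varphi$ on $T$; what is still missing is only that $\varphi$ itself does not care whether it is read on $T$ or on $\tn T N$, which is precisely what stretch-invariance provides.

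First I would set $N = \depth(\varphi)+1$ and instantiate \autoref{lem:Qsimpl} with $\Pi = \Pi_\emptyset$. Since the empty assignment is unaffected by the shifting operation, $\nun{\Pi_\emptyset}{N} = \Pi_\emptyset$, and by definition $\tn T N = \encn w N$. Hence the lemma specialises to
\[
  \encn w N \models \varphi \quad\text{if and only if}\quad T \models \simpl\varphi .
\]
Next I would observe that both $T = \enc w f$ and $\encn w N = \enc w {n \mapsto N(n+1)}$ are encodings of the \emph{same} word $w$, the latter using the stretch function $n \mapsto N(n+1)$, which is monotone with positive values and hence a legitimate stretch function. Because $\varphi$ is stretch-invariant, this yields
\[
  T \models \varphi \quad\text{if and only if}\quad \encn w N \models \varphi .
\]
Chaining the two equivalences gives $T \models \varphi$ if and only if $T \models \simpl\varphi$, as required.

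I do not expect a genuine obstacle here: all the real work has already been carried out in \autoref{lem:QFsimpl} and \autoref{lem:Qsimpl}, and this lemma is essentially a bookkeeping step that merely passes from arbitrary assignments to the sentence level. The only points requiring care are verifying that $n \mapsto N(n+1)$ is a valid stretch function so that the stretch-invariance hypothesis genuinely applies, and checking that specialising \autoref{lem:Qsimpl} to the empty assignment behaves as claimed, in particular that $\nun{\Pi_\emptyset}{N}$ collapses to $\Pi_\emptyset$ and that the relevant $N$ is computed from $\varphi$ viewed as a sentence. Once these observations are in place, the two displayed equivalences combine immediately.
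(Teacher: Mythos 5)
Your proposal is correct and follows exactly the same route as the paper: apply \autoref{lem:Qsimpl} at the empty assignment to get $\tn T N \models \varphi \Leftrightarrow T \models \simpl\varphi$, and use stretch-invariance (noting that $\tn T N = \encn w N$ is just another encoding of $w$) to get $T \models \varphi \Leftrightarrow \tn T N \models \varphi$. The paper's proof is a one-liner with the same two steps; your added checks that $n \mapsto N(n+1)$ is a legitimate stretch function and that $\nun{\Pi_\emptyset}{N} = \Pi_\emptyset$ are exactly the right details to verify.
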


\begin{proof}
  By definition of $\varphi$ being stretch-invariant, we have
  $T \models \varphi$ if and only if
  $\tn T N \models \varphi$, which by \autoref{lem:Qsimpl} is
  equivalent to $T \models \simpl \varphi$.
\end{proof}

We are now ready to prove the strictness of the \hyltl quantifier alternation
hierarchy.

\begin{proof}[Proof of \autoref{thm:strictness-finite}.]
  Suppose towards a contradiction that the hierarchy collapses at
  level~$n > 0$, i.e.\ every \hyltl $\Sigma_{n+1}$-sentence is equivalent to some
  $\Sigma_n$-sentence.
  Let us show that the $\FOw$ quantifier alternation hierarchy also
  collapses at level $n$, a contradiction with \autoref{thm:strictness-FO}.

  Fix a $\Sigma_{n+1}$-sentence $\varphi$ of $\FOw$.
  The \hyltl sentence $\encPhi \varphi$ has the same quantifier prefix as
  $\varphi$, i.e.\ is also a $\Sigma_{n+1}$-sentence.
  Due to the assumed hierarchy collapse, there exists a  \hyltl
  $\Sigma_n$-sentence $\psi$ that is equivalent to $\encPhi \varphi$,
  and is stretch-invariant by \autoref{lem:stretchinv}.
  Then the \hyltl sentence $\simpl \psi$ defined above is also a $\Sigma_n$-sentence.
  Moreover, since $\simpl \psi \in \simpleHyperLTL$, there exists
  a $\FOw$ sentence $\varphi'$ such that $\simpl \psi = \encPhi {\varphi'}$,
which has the same quantifier prefix as $\simpl \psi$, i.e.\
  $\varphi'$ is a $\Sigma_n$-sentence of $\FOw$.
  For all words $w \in (2^\AP)^\ast$, we now have
  \begin{align*}
    w \models \varphi
    & \quad\text{if and only if}\quad \enc w f \models \encPhi \varphi
    && \text{(\autoref{lem:FO-to-hyltl})} \\
    & \quad\text{if and only if}\quad \enc w f \models \psi
    && \text{(assumption)} \\
    & \quad\text{if and only if}\quad \enc w f \models \simpl \psi
    && (\text{\autoref{lem:eqsimpl} and \autoref{lem:stretchinv}}) \\
    & \quad\text{if and only if}\quad \enc w f \models \encPhi {\varphi'}
    && (\text{definition}) \\
    & \quad\text{if and only if}\quad w \models \varphi'
    && \text{(\autoref{lem:FO-to-hyltl})}
  \end{align*}
  for an arbitrary stretch function~$f$.
  Therefore, $\Sfo {n+1} = \Sfo n$,  yielding the desired contradiction.

  This proves not only that for all $n > 0$, there is a \hyltl $\Sigma_{n+1}$-sentence
  that is not equivalent to any $\Sigma_n$-sentence,
  but also that there is one of the form $\encPhi \varphi$.
  Now, the proof still goes through if we replace $\encPhi \varphi$ by
  any formula equivalent to $\encPhi \varphi$ over all $\enc{w}{f}$,
  and in particular if we replace $\encPhi \varphi$ by
  $\encPhi \varphi \land \psi$, where the sentence
  \[
    \psi = \exists \pi.\ \forall \pi'.\
    (\F \G \emptyset_\pi) \land
    \G( \G \emptyset_\pi \rightarrow  \G \emptyset_{\pi'})  \]
    with $
    \emptyset_\pi = \bigwedge_{a \in \AP} \lnot a_\pi$
  selects models that contain finitely many traces, all in
  $(2^\AP)^\ast \cdot \emptyset^\omega$.
  Indeed, all $\enc w f$ satisfy~$\psi$.
  Notice that $\psi$ is a $\Sigma_2$-sentence, and since $n+1 \ge 2$,
  (the prenex normal form of) $\encPhi \varphi \land \psi$ is still a $\Sign {n+1}$-sentence.
\end{proof}

\subsection{Membership problem}

In this subsection, we investigate the complexity of the membership problem for the \hyltl quantifier alternation hierarchy.
Our goal is to prove the following result.

\begin{thm}\label{thm:Pi11}
  Fix $n > 0$. The problem of deciding whether a given \hyltl sentence is equivalent
  to some $\Sigma_n$-sentence is $\Pi^1_1$-complete.
\end{thm}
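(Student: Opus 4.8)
The plan is to prove $\Pi^1_1$-completeness by establishing both directions, treating membership (upper bound) and hardness (lower bound) separately, with the hardness direction being where the real work lies.

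For **membership in $\Pi^1_1$**, I would express ``$\varphi$ is equivalent to some $\Sigma_n$-sentence'' as a $\Pi^1_1$ formula. The natural formulation is: there exists a $\Sigma_n$-sentence $\psi$ (a type-$0$ quantifier, since formulas are finite objects encodable as natural numbers) such that for all sets $T$ of traces, $T \models \varphi \Leftrightarrow T \models \psi$. The subtlety is that the inner equivalence quantifies over all sets of traces, which are uncountable objects. Here I would invoke the countable-model machinery: by the result of Finkbeiner and Zimmermann used in \autoref{lemma:hyltl_membership}, equivalence of two \hyltl sentences can be checked over countable models only, so the universal quantification over $T$ becomes a $\Pi^1_1$ (indeed type-$1$) quantification, and the inner satisfaction relation is arithmetic via the expansion-consistency encoding already developed. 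Thus the whole statement has the shape $\exists(\text{type-}0)\,\forall(\text{type-}1)\,(\text{arithmetic})$, which collapses into $\Pi^1_1$ after absorbing the leading type-$0$ existential.

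For **$\Pi^1_1$-hardness**, I would reduce from \hyltl unsatisfiability, which is $\Pi^1_1$-complete by \autoref{thm:hyltl-sat}. The idea is: given an arbitrary \hyltl sentence $\chi$, I want to build a sentence $\Theta$ that is equivalent to some $\Sigma_n$-sentence if and only if $\chi$ is unsatisfiable. The key ingredient is the hard sentence $\varphi^* := \encPhi\varphi \land \psi$ from the proof of \autoref{thm:strictness-finite}, a $\Sigma_{n+1}$-sentence that is provably \emph{not} equivalent to any $\Sigma_n$-sentence and whose models are essentially finite. I would combine $\varphi^*$ with $\chi$ (over disjoint proposition sets, using fresh propositions to tag which ``part'' of a model belongs to which subformula) so that: if $\chi$ is unsatisfiable, the contribution of $\chi$ vanishes and $\Theta$ becomes equivalent to a trivial (say, unsatisfiable, hence $\Sigma_n$) sentence; whereas if $\chi$ is satisfiable, a model of $\chi$ can be ``attached'' to models witnessing the full complexity of $\varphi^*$, forcing $\Theta$ to inherit the property of not being $\Sigma_n$-expressible.

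The **main obstacle** is designing this combination so that the hardness of $\varphi^*$ is preserved exactly when $\chi$ is satisfiable and destroyed exactly when it is not. The delicate point is that satisfiability of $\chi$ is itself $\Sigma^1_1$, so I must ensure the reduction is computable (the formula $\Theta$ depends only syntactically on $\chi$) while the semantic dichotomy tracks satisfiability. I would likely take $\Theta$ of the form ``(models tagged as $\chi$-models exist and satisfy $\chi$) $\to \varphi^*$-behaviour, else trivially true'', arranging via the fresh tagging propositions that when $\chi$ has no model the antecedent is vacuously handled and $\Theta$ degenerates into an explicitly $\Sigma_n$ (e.g.\ unsatisfiable or universally true) sentence, but when $\chi$ is satisfiable the construction from \autoref{thm:strictness-finite} embeds faithfully and the non-collapse argument applies verbatim. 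Verifying both directions of the equivalence ``$\Theta$ is $\Sigma_n$-expressible $\Leftrightarrow$ $\chi$ unsatisfiable'' — in particular ruling out that some clever $\Sigma_n$-sentence could capture $\Theta$ even when $\chi$ is satisfiable — is the crux, and it is exactly here that the essentially-finite-model guarantee of \autoref{thm:strictness-finite} and the stretch-invariance/back-translation lemmas (\autoref{lem:stretchinv}, \autoref{lem:Qsimpl}) do the heavy lifting.
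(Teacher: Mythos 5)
Your upper bound is essentially the paper's: existentially quantify (type~0) over a candidate $\Sigma_n$-sentence $\psi$ and reduce equivalence of $\varphi$ and $\psi$ to unsatisfiability of their exclusive or, which is $\Pi^1_1$ by \autoref{thm:hyltl-sat}; your reformulation via universal quantification over countable models is the same argument unfolded, and it is sound (using that \hyltl sentences are closed under Boolean combinations so that the countable-model theorem applies to the difference formula).

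The hardness direction, however, has a genuine gap at exactly the point you yourself flag as ``the crux''. Your plan --- combine the hierarchy witness $\varphi_{n+1}$ with the input sentence on disjoint, tagged parts of the model, so that unsatisfiability of the input trivialises the combination --- matches the paper's (\autoref{lem:dupsi} builds this combination over \emph{split} models, which separate both the trace set and the time domain). But the hard half of correctness is: if the input $\varphi$ is satisfiable and the combined sentence $\Theta$ were equivalent to some $\Sigma_n$-sentence $\xi$, one must derive that $\varphi_{n+1}$ itself is $\Sigma_n$-definable. The strictness machinery you invoke (\autoref{lem:stretchinv}, \autoref{lem:Qsimpl}, the essentially-finite-model guarantee of \autoref{thm:strictness-finite}) only shows that $\varphi_{n+1}$ is not $\Sigma_n$-definable; it says nothing about whether a \emph{combination} of $\varphi_{n+1}$-behaviour on one part and $\varphi$-behaviour on another could nevertheless be captured by some clever $\Sigma_n$-sentence, so ``the non-collapse argument applies verbatim'' is not justified. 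What is needed --- and what the paper proves as \autoref{lem:split}, by induction on quantifier depth with a counter-free-automaton decomposition of the quantifier-free base case --- is a Feferman--Vaught-style decomposition: over split models, every $\Sigma_n$-sentence is equivalent to a finite disjunction over pairs $(\psi^i_\ell,\psi^i_r)$ of $\Sigma_n$-sentences speaking only about the left and right parts. Fixing a model $T_\varphi$ of $\varphi$ on the right then yields $\varphi_{n+1} \equiv \phib \land \bigvee_{i\,:\, T_\varphi\models\psi^i_r}\psi^i_\ell$, the desired contradiction. Note that this also uses that boundedness is $\Pi_1$-definable (\autoref{lemma:bounded}), hence a $\Sigma_n$-sentence only for $n>1$; the paper therefore treats $n=1$ by a separate, simpler argument (every satisfiable $\Sigma_1$-sentence has a model with finitely many traces, so it suffices to force infinitely many traces). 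Your proposal omits both the decomposition lemma and the $n=1$ case distinction.
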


The easier part of the proof will be the upper bound, since a corollary of
\autoref{thm:hyltl-sat} is that the problem of deciding whether two
\hyltl formulas are equivalent is $\Pi^1_1$-complete.

The lower bound will be proven by a reduction from the \hyltl unsatisfiability problem.
The proof relies on \autoref{thm:strictness-finite}: given a sentence
$\varphi$, we are going to combine $\varphi$ with some $\Sigma_{n+1}$-sentence
$\varphi_{n+1}$ witnessing the strictness of the hierarchy, to construct a sentence $\psi$
such that $\varphi$ is unsatisfiable if and only if $\psi$ is equivalent to
a $\Sigma_n$-sentence.
Intuitively, the formula $\psi$ will describe models consisting of the
``disjoint union'' of a model of $\varphi_{n+1}$ and a model of~$\varphi$.
Here ``disjoint'' is to be understood in a strong sense: we split both
the set of traces and the time domain into two parts, used respectively to
encode the models of $\varphi_{n+1}$ and those of $\varphi$.

\begin{figure}
\centering
  \begin{tikzpicture}[yscale=0.6,font=\small]
    \node (03) at (0,3) {$\{a,b\}$};
    \node (13) at (1,3) {$\{a\}$};
    \node (23) at (2,3) {$\{a\}$};
    \node (02) at (0,2) {$\{b\}$};
    \node (12) at (1,2) {$\emptyset$};
    \node (22) at (2,2) {$\{a\}$};
    \begin{pgfonlayer}{background}
      \draw[teal,fill,fill opacity = 0.2,semithick]
      (03.north west) rectangle (8.5,1.6);
    \end{pgfonlayer}
    \node[teal] at (-1,2.5) {\large $T_\ell$} ;
    
    \node (31) at (3,1) {$\{a\}$};
    \node (41) at (4,1) {$\{a\}$};
    \node (51) at (5,1) {$\{a,b\}$};
    \node (61) at (6,1) {$\emptyset$};
    \node (71) at (7,1) {$\{a\}$};
    \node (81) at (8,1) {${\cdots}$};
    \node (30) at (3,0) {$\{b\}$};
    \node (40) at (4,0) {$\{a\}$};
    \node (50) at (5,0) {$\{b\}$};
    \node (60) at (6,0) {$\{a,b\}$};
    \node (70) at (7,0) {$\{a\}$};
    \node (80) at (8,0) {${\cdots}\vphantom{\{a\}}$};
    \begin{pgfonlayer}{background}
      \draw[brown,fill, fill opacity = 0.2,semithick]
      (31.north west) rectangle (8.5,-0.4) ;
      \node[brown] at (9,0.5) {\large $T_r$} ;
    \end{pgfonlayer}

    \foreach \i in {0,...,2} {
      \foreach \j in {0,1} {
        \node (ij) at (\i,\j) {$\{ \$ \}$} ;
      }
    }
    \foreach \i in {3,...,7} {
      \foreach \j in {2,3} {
        \node (ij) at (\i,\j) {$\{ \$ \}$} ;
      }
    }
    \node at (8,3) {${\cdots}$};
    \node at (8,2) {${\cdots}$};
  \end{tikzpicture}
  \caption{Example of a split set of traces where
  each row represents a trace and $b=3$.
      \label{fig:split}}
\end{figure}

\smallskip

To make this more precise, let us introduce some notations.
We assume a distinguished symbol $\$ \notin \AP$.
We say that a set of traces $T \subseteq {(2^{\AP\cup \{\$\}})}^\omega$ is
\emph{bounded} if there exists $b \in \Nat$ such that
$T \subseteq {(2^{\AP})}^b \cdot \{\$\}^\omega$.

\begin{lem}\label{lemma:bounded}
  There exists a  $\Pi_1$-sentence $\phib$ 
  such that for all $T \subseteq {(2^{\AP\cup \{\$\}})}^\omega$, we have
  $T \models \phib$ if and only if $T$ is bounded.
\end{lem}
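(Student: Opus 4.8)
The plan is to exhibit an explicit $\Pi_1$-sentence and then verify both implications directly. Boundedness of $T$ amounts to two separate conditions: (a) \emph{shape}, i.e.\ every trace has the form $u\cdot\{\$\}^\omega$ with $u \in (2^\AP)^\ast$, meaning $\$$ is false on an initial segment and afterwards the label is \emph{exactly} $\{\$\}$ forever; and (b) \emph{uniformity}, i.e.\ the switch point is the same in all traces. Condition (a) is a property of a single trace and condition (b) a property of pairs of traces, so both can be phrased as universally quantified statements over traces, which is precisely what a $\Pi_1$-sentence permits.

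Concretely, I would first write the single-trace shape predicate
\[
  \mathrm{shape}(\pi) \;=\; (\neg \$_\pi)\ \U\ \G\Big(\$_\pi \wedge \bigwedge_{a \in \AP} \neg a_\pi\Big),
\]
whose strong until forces each trace to consist of a (possibly empty) prefix on which $\$$ never holds, followed by a suffix on which the label is constantly $\{\$\}$; note that the conjunct $\bigwedge_{a}\neg a_\pi$ rules out labels strictly larger than $\{\$\}$ after the switch, and the left argument $\neg\$_\pi$ forbids intermittent occurrences of $\$$ before it. I would then set
\[
  \phib \;=\; \forall \pi_1.\ \forall \pi_2.\ \Big(\mathrm{shape}(\pi_1) \wedge \mathrm{shape}(\pi_2) \wedge \G(\$_{\pi_1} \leftrightarrow \$_{\pi_2})\Big),
\]
which is visibly a $\Pi_1$-sentence, since its matrix is quantifier-free.

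For the forward implication, assume $T$ is bounded, say $T \subseteq (2^\AP)^b\cdot\{\$\}^\omega$. Then in every trace $\$$ holds exactly on the positions $\ge b$ and the label there is $\{\$\}$, so $\mathrm{shape}$ holds of each trace and any two traces agree on where $\$$ holds; hence every instance of the matrix is satisfied and $T \models \phib$ (the case $T = \emptyset$ holding vacuously). For the converse, suppose $T \models \phib$. Instantiating both quantifiers with the same trace yields $\mathrm{shape}(t)$ for every $t \in T$, so each $t$ switches to $\{\$\}^\omega$ at a well-defined position $b_t$, namely the first (and only) position at which $\$$ begins to hold. Instantiating the two quantifiers with two traces $t_1,t_2$, the conjunct $\G(\$_{\pi_1}\leftrightarrow\$_{\pi_2})$ forces $\$$ to hold on the same positions in both, so $b_{t_1}=b_{t_2}$; fixing this common value $b$ gives $T \subseteq (2^\AP)^b\cdot\{\$\}^\omega$, i.e.\ $T$ is bounded.

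I do not anticipate a genuine obstacle; the only points needing care are (i) confirming that the strong-until formulation of $\mathrm{shape}$ captures ``exactly $\{\$\}$ after the switch, with no earlier $\$$'' rather than merely ``$\$$ eventually'', and (ii) checking the degenerate cases: the empty set (handled by vacuous satisfaction of the universal quantifiers) and the switch point $b = 0$, where a trace equal to $\{\$\}^\omega$ satisfies $\mathrm{shape}$ because the until is witnessed at position $0$ with no earlier positions to constrain.
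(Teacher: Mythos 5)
Your proof is correct and takes essentially the same approach as the paper: a $\Pi_1$-sentence with two universal trace quantifiers whose matrix enforces the shape condition $(2^{\AP})^\ast\{\$\}^\omega$ on each trace via a strong until, plus synchronisation of the switch point across traces. The only (cosmetic) differences are that the paper states the ``only $\$$ after the switch'' condition as a separate conjunct $\bigwedge_{a\in\AP}\G\lnot(a_\pi\land\$_\pi)$ and synchronises via $\F(\lnot\$_\pi\land\lnot\$_{\pi'}\land\X\$_\pi\land\X\$_{\pi'})$, whereas you fold the former into the until and use $\G(\$_{\pi_1}\leftrightarrow\$_{\pi_2})$ for the latter, which also handles the degenerate switch point $b=0$ cleanly.
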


\begin{proof}
  We let
    \begin{align*}
    \phib =
    \forall \pi, \pi'.\
    & (\lnot \$_\pi \U \G \$_\pi) \land
      \bigwedge_{a \in \AP} \G(\lnot (a_\pi \land \$_\pi)) \land
      \F \left(
      \lnot \$_\pi \land \lnot \$_{\pi'} \land \X \$_\pi \land \X \$_{\pi'}
      \right) \, .
    \end{align*}
    The conjunct $(\lnot \$_\pi \U \G \$_\pi) \land
    \bigwedge_{a \in \AP} \G(\lnot (a_\pi \land \$_\pi))$
    ensures that every trace is in ${(2^{\AP})}^\ast \cdot \{\$\}^\omega$,
    while
    $\F \left(
      \lnot \$_\pi \land \lnot \$_{\pi'} \land \X \$_\pi \land \X \$_{\pi'}
    \right)$
    ensures that the $\$$'s in any two traces $\pi$ and $\pi'$ start at the same
    position.
\end{proof}

We say that a nonempty set~$T$ of traces is \emph{split} if there exist a $b \in \Nat$ and $T_1$, $T_2$
such that $T = T_1 \uplus T_2$,
$T_1 \subseteq {(2^{\AP})}^b \cdot \{\$\}^\omega$, and
$T_2 \subseteq \{\$\}^b \cdot {(2^{\AP})}^\omega$.
Note that $b$ as well as $T_1$ and $T_2$ are unique then.
Hence, we define the left and right part of $T$ as $\Tl = T_1$ and
$\Tr  = \{t \in {(2^{\AP})}^\omega \mid \{\$\}^b \cdot t \in T_2\}$, respectively
(see \autoref{fig:split}).

It is easy to combine \hyltl specifications for the left and right part
of a split model into one global formula.

\begin{lem}
    \label{lem:dupsi}
  For all \hyltl sentences~$\varphi_\ell, \varphi_r$, one can construct
  a sentence $\psi$ such that for all split 
  $T \subseteq {(2^{\AP\cup \{\$\}})}^\omega$,
  it holds that   $\Tl \models \varphi_\ell$ and $\Tr  \models \varphi_r$ if and only if $T \models \psi$.
\end{lem}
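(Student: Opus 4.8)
The plan is to build $\psi$ as (the prenex normal form of) a conjunction $\psi_\ell \wedge \psi_r$, where $\psi_\ell$ and $\psi_r$ are relativizations of $\varphi_\ell$ and $\varphi_r$ whose trace quantifiers are restricted to the left, respectively right, traces of a split model, and where $\psi_r$ additionally shifts the temporal evaluation past the common initial $\{\$\}$-block. The first step is to observe that the two kinds of traces of a split $T$ are definable by quantifier-free \hyltl: a trace lies in $\Tl$ exactly when it eventually carries $\$$ forever, and lies (after the prefix shift) in $\Tr$ exactly when $\$$ holds on an initial segment and then never again. Accordingly I set $\alpha_\ell^\pi = \lnot \$_\pi \U \G \$_\pi$ and $\alpha_r^\pi = \$_\pi \U \G \lnot \$_\pi$. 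A short case analysis, using that $T$ is split (so every trace has one of the two prescribed shapes) and that $\$ \notin \AP$, shows that over any split $T$ a trace satisfies $\alpha_\ell^\pi$ iff it is a left trace and satisfies $\alpha_r^\pi$ iff it is a right trace, and that these two classes partition $T$.

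For the left part I relativize every quantifier of $\varphi_\ell$ to $\alpha_\ell$: replace each $\exists \pi.\,\phi$ by $\exists \pi.\,(\alpha_\ell^\pi \wedge \phi)$ and each $\forall \pi.\,\phi$ by $\forall \pi.\,(\alpha_\ell^\pi \rightarrow \phi)$, leaving the quantifier-free matrix unchanged; call the result $\psi_\ell$. Since $\Tl$ is by definition exactly the set of left traces of $T$, and since $\varphi_\ell$ mentions only propositions from $\AP$ (so the $\$$-labels appended to a left trace are invisible to it), evaluating the matrix of $\psi_\ell$ from position $0$ on the left traces of $T$ is literally the same as evaluating $\varphi_\ell$ on $\Tl$. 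The guards $\alpha_\ell^\pi$ occur only inside Boolean connectives, hence are evaluated at position $0$ as intended, and therefore $T \models \psi_\ell$ iff $\Tl \models \varphi_\ell$.

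The right part is the only delicate point, and I expect it to be the main obstacle: $\Tr$ is obtained from the right traces by deleting their common prefix $\{\$\}^b$, but $b$ is not fixed in advance. Writing $\varphi_r = Q_1 \pi_1 \cdots Q_m \pi_m.\,\chi$, I relativize its quantifiers to $\alpha_r$ exactly as above, and additionally replace the matrix $\chi$ by its shifted form $\$_{\pi_1} \U (\lnot \$_{\pi_1} \wedge \chi)$; call the result $\psi_r$. The outermost variable $\pi_1$ is, after relativization, bound to a right trace whenever the matrix is reached, so it serves as a reference that reads off the boundary $b$: since $\$_{\pi_1}$ holds precisely on positions $0,\ldots,b-1$, the unique $j$ with $\$_{\pi_1}$ on $[0,j)$ and $\lnot\$_{\pi_1}$ at $j$ is $j=b$, so $\$_{\pi_1} \U (\lnot \$_{\pi_1} \wedge \chi)$ holds at position $0$ iff $\chi$ holds at position $b$. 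Because all right traces share the same $b$ and \hyltl operators advance all traces synchronously and only look forward, evaluating $\chi$ from position $b$ on the right traces $\{\$\}^b t$ coincides with evaluating $\chi$ from position $0$ on the corresponding $t \in \Tr$ (again using $\$ \notin \AP$). Hence $T \models \psi_r$ iff $\Tr \models \varphi_r$. The degenerate cases are harmless: if $\chi$ contains no atom then $m=0$ and no shift is needed, and if $T$ has no right trace then the relativized quantifiers and $\Tr = \emptyset$ agree on the truth of $\varphi_r$.

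Finally, after renaming the bound variables of $\psi_\ell$ and $\psi_r$ apart, the conjunction satisfies $T \models \psi_\ell \wedge \psi_r$ iff $\Tl \models \varphi_\ell$ and $\Tr \models \varphi_r$; bringing $\psi_\ell \wedge \psi_r$ into prenex normal form (possible since \hyltl is closed under Boolean combinations) yields the required sentence $\psi$. The crux of the argument is the right-hand shift by the a priori unknown offset $b$, and the idea that resolves it is to recover $b$ from a reference right trace through the until-formula above, exploiting that all right traces agree on $b$.
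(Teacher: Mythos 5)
Your proof is correct and follows essentially the same route as the paper's: relativize the quantifiers of $\varphi_\ell$ and $\varphi_r$ to the left/right traces via quantifier-free guards, shift the matrix of the right formula past the unknown prefix $\{\$\}^b$ with the until-formula $\$_{\pi} \U (\lnot \$_{\pi} \land \chi)$ anchored on a variable that is guaranteed to hold a right trace, and take the prenex normal form of the conjunction. The only (immaterial) difference is the syntactic form of the guards — you use $\lnot \$_\pi \U \G \$_\pi$ and $\$_\pi \U \G \lnot \$_\pi$ where the paper uses $\F\G\lnot\$_\pi$ for the right part — which are equivalent over split models; your treatment of the offset $b$ and of the degenerate cases is in fact more explicit than the paper's.
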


\begin{proof}
  Let $\widehat {\varphi_r}$ denote the formula obtained from $\varphi_r$
  by replacing:
  \begin{itemize}
  \item every existential quantification $\exists \pi.\ \varphi$ with
    $\exists \pi.\ ((\F\G \lnot \$_\pi) \land \varphi)$;
  \item every universal quantification $\forall \pi.\ \varphi$ with
    $\forall \pi.\ ((\F\G \lnot \$_\pi) \rightarrow \varphi)$;
  \item the quantifier-free part $\varphi$ of $\varphi_r$ with
    $\$_\pi \U (\lnot \$_\pi \land \varphi)$,
    where $\pi$ is some free variable in $\varphi$.
  \end{itemize}
 Here, the first two replacements restrict quantification to traces in the right part while the last one requires the formula to hold at the first position of the right part. 
  We define $\widehat {\varphi_\ell}$ by similarly relativizing quantifications
  in $\varphi_\ell$.
  The formula $\widehat {\varphi_\ell} \land \widehat {\varphi_r}$ can then
  be put back into prenex normal form to define $\psi$.
\end{proof}

Conversely, any \hyltl formula that only has split models can be decomposed
into a Boolean combination of formulas that only talk about the left or right
part of the model. This is formalised in the lemma below.

\begin{lem}\label{lem:split}
  For all \hyltl $\Sigma_n$-sentences $\varphi$
  there exists a finite family $(\varphi^i_\ell, \varphi^i_r)_i$ of $\Sigma_n$-sentences 
  such that for all 
  split $T \subseteq {(2^{\AP\cup \{\$\}})}^\omega$: 
  $T \models \varphi$ if and only if there is an $i$ with
  $\Tl \models \varphi^i_\ell$ and $\Tr  \models \varphi^i_r$.
\end{lem}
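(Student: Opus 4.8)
The plan is to decompose a $\Sigma_n$-sentence over split models by pushing the quantifiers inside a $\$$-based case distinction on which part of the model each quantified trace lives in. The key observation is that in a split model $T$, every trace is either in $\Tl$ (satisfying $\F\G\$_\pi$, i.e.\ eventually always $\$$) or, after the shared prefix $\{\$\}^b$, corresponds to a trace in $\Tr$ (satisfying $\G\F\lnot\$_\pi$, equivalently $\F\G\lnot\$_\pi$ fails to detect it as left; more precisely a right trace has $\$$ exactly on its length-$b$ prefix). So I would first fix the two relativization predicates $\mathit{left}(\pi) = \F\G\$_\pi$ and $\mathit{right}(\pi) = \lnot\mathit{left}(\pi)$, which partition the traces of any split $T$.

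First I would prove the statement by induction on $n$, or more directly by induction on the quantifier prefix, exploiting that in prenex normal form $\varphi = Q_1\pi_1\cdots Q_k\pi_k.\ \psi$. The core step is to rewrite each quantifier $\exists\pi_i$ as $\exists\pi_i.\ (\mathit{left}(\pi_i)\land\cdots) \lor \exists\pi_i.\ (\mathit{right}(\pi_i)\land\cdots)$ and dually each $\forall\pi_i$ as a conjunction over the two cases. Distributing these choices over all $k$ variables yields a disjunction (after moving to prenex form) indexed by functions $\sigma\colon\{1,\ldots,k\}\to\{\ell,r\}$ assigning each variable to the left or right part. For a fixed $\sigma$, the variables assigned to $\ell$ range over $\Tl$ and those assigned to $r$ range over $\Tr$, and crucially the quantifier-free matrix $\psi$, read on the shared time domain, can be split: I would argue that on a split model the truth of $\psi$ is determined by separately evaluating a left part $\psi_\ell^\sigma$ (mentioning only left variables, evaluated on $\Tl$) and a right part $\psi_r^\sigma$ (mentioning only right variables, evaluated on $\Tr$ after stripping the common prefix $\{\$\}^b$). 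This separation is the content that makes the whole reduction work, and it holds because a left trace carries information only before the boundary while a right trace carries information only after it, so no temporal subformula can correlate a left and a right variable except through the position of the boundary, which is uniform across the model.

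The main obstacle will be the last point: the quantifier-free matrix $\psi$ need not syntactically separate into a left part and a right part, since a single temporal subformula may mix atoms $a_{\pi_i}$ for left variables $\pi_i$ and right variables $\pi_j$. To handle this I would use the boundedness structure: on $\Tl$ every trace is eventually $\$^\omega$ and on $\Tr$ every trace begins with $\$^b$, so the boundary position~$b$ is common to all traces. The clean way is to first move to the ``stretched/normalized'' view as in the proof of \autoref{lem:QFsimpl}, or more simply to note that, because both $\Tl$ and $\Tr$ are bounded-prefix sets, any LTL matrix over a fixed finite set of variables has its truth value on the combined trace determined by a finite amount of data on each side; I would then realize $\psi$ as a finite Boolean combination $\bigvee_m (\psi_{\ell,m}^\sigma \land \psi_{r,m}^\sigma)$ of formulas each talking only about left or only about right variables, by taking a disjunctive normal form over the finitely many ``$\$$-profiles'' of the traces. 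Grouping all resulting conjuncts over all $\sigma$ and all $m$ gives the desired finite family $(\varphi_\ell^i,\varphi_r^i)_i$, where each $\varphi_\ell^i$ (resp.\ $\varphi_r^i$) is obtained by prefixing the appropriately relativized left (resp.\ right) quantifiers to $\psi_{\ell,m}^\sigma$ (resp.\ $\psi_{r,m}^\sigma$), and hence is again a $\Sigma_n$-sentence.

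To finish, I would verify the equivalence: for a split $T$, $T\models\varphi$ holds iff some assignment $\sigma$ of the quantified variables to the two parts together with some profile $m$ makes both sides true, which by construction is exactly $\Tl\models\varphi_\ell^i$ and $\Tr\models\varphi_r^i$ for the corresponding index $i$. The universal quantifiers contribute conjunctions rather than disjunctions, but after distributing everything into prenex form and collecting terms the result is still a finite disjunction of pairs, as required. Care must be taken that relativizing the left quantifiers with $\F\G\$_\pi$ and the right quantifiers (after the standard $\F\G\lnot\$_\pi$ relativization used in \autoref{lem:dupsi}) does not increase the number of quantifier alternations, which holds because the relativizing guards are quantifier-free and merge into the matrix without introducing new alternations.
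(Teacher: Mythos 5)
Your overall strategy---classifying each quantified variable as ranging over $\Tl$ or over $\Tr$, and separating the quantifier-free matrix into left-only and right-only parts---is the same as the paper's. However, there is a genuine gap in how you recombine the pieces, and it sits exactly where the real work of the lemma lies. Your recipe is to first decompose the matrix as $\bigvee_m (\psi^\sigma_{\ell,m} \land \psi^\sigma_{r,m})$ and then ``prefix the appropriately relativized left (resp.\ right) quantifiers'' to each piece. This does not preserve the meaning of the formula: quantifiers do not commute with the disjunction over $m$. Concretely, $\forall \pi_1.\ \exists \pi_2.\ \bigvee_m \bigl(\alpha_m(\pi_1) \land \beta_m(\pi_2)\bigr)$ (with $\pi_1$ ranging over $\Tl$ and $\pi_2$ over $\Tr$) allows the choice of $m$ and of the witness $\pi_2$ to depend on $\pi_1$, whereas $\bigvee_m \bigl((\forall \pi_1.\ \alpha_m) \land (\exists \pi_2.\ \beta_m)\bigr)$ does not; if $\Tl$ contains one trace satisfying only $\alpha_1$ and another satisfying only $\alpha_2$, the first formula can hold while the second fails. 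Your closing remark that ``after distributing everything into prenex form and collecting terms the result is still a finite disjunction of pairs'' asserts precisely the step that needs proof. The paper avoids this by building the family bottom-up, one quantifier at a time: the existential case simply takes the union of the families for the two placements of the new variable, and the universal case is handled by dualizing to $\exists \pi.\ \lnot\psi$, applying the induction hypothesis, and then converting the resulting conjunction of disjunctions $\bigwedge_i(\lnot\xi^i_\ell \lor \lnot\xi^i_r)$ into a disjunction over choice functions $h\colon\{1,\dots,k\}\to\{\ell,r\}$ of conjunctions, which stays within $\Pi_n$ because $\Pi_n$-formulas are closed under conjunction. Some such argument (or an explicit Skolem-style justification) is indispensable; without it the construction is unsound.

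A secondary, smaller issue is the base case. Your claim that the truth of the LTL matrix on a split assignment is ``determined by a finite amount of data on each side'' is not literally true---the right part is an infinite suffix---and the $\$$-profile/DNF sketch does not obviously terminate in a finite family. What is actually needed is the language-theoretic splitting fact the paper invokes: an LTL-definable $L \subseteq \Sigma_1^\ast\cdot\Sigma_2^\omega$ with $\Sigma_1 \cap \Sigma_2 = \emptyset$ decomposes as a finite union $\bigcup_q L^q_1 \cdot L^q_2$ of products of LTL-definable languages, proved via counter-free automata. Your relativization guards and the observation that they do not add quantifier alternations are fine, but the alternation bookkeeping that actually matters is the closure of $\Pi_n$ under conjunction in the universal step, which your proposal does not address.
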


\begin{proof}
  To prove this result by induction, we need to strengthen the statement to make it dual and allow for formulas with free variables.
  We let $\Free(\varphi)$ denote the set of free variables of a formula
  $\varphi$. 
  We prove the following result, which implies \autoref{lem:split}.

\begin{clm}\label{lem:splitgen}
  For all \hyltl $\Sigma_n$-formulas (resp.\ $\Pin n$-formulas) $\varphi$,
  there exists a finite family of $\Sigma_n$-formulas
  (resp.\ $\Pi_n$-formulas)
  $(\varphi^i_\ell, \varphi^i_r)_{i}$
  such that for all $i$,
  $\Free(\varphi) = \Free(\varphi^i_\ell) \uplus \Free(\varphi^i_r)$,
  and for all split $T$ and $\Pi$: $(T, \Pi) \models \varphi$
  if and only if there exists $i$ such that
  \begin{itemize}
  \item For all $\pi \in \Free(\varphi)$, $\Pi(\pi) \in \Tl$ if and only if
    $\pi \in \Free(\varphi^i_{\ell})$
    (and thus $\Pi(\pi) \in T \setminus \Tl$ if and only if
    $\pi \in \Free(\varphi^i_{r})$).
  \item $(\Tl, \Pi) \models \varphi^i_\ell$;
  \item $(\Tr , \Pi') \models \varphi^i_r$,
    where $\Pi'$ maps every $\pi \in \Free(\varphi^i_r)$ to the trace in
    $\Tr $ corresponding to $\Pi(\pi)$ in $T$
    (i.e.\ $\Pi(\pi) = \set{\$}^b \cdot \Pi'(\pi)$ for some $b$).
  \end{itemize}
\end{clm}

  To simplify, we can assume that the partition of the free variables of
  $\varphi$ into a left and right part is fixed, i.e.\
  we take $\Vl \subseteq \Free(\varphi)$
  and $\Vr = \Free(\varphi) \setminus \Vl$, and we restrict our attention
  to split $T$ and $\Pi$ such that $\Pi(\Vl) \subseteq \Tl$
  and $\Pi(\Vr) \subseteq T \setminus \Tl$.
  The formulas $(\varphi^i_\ell, \varphi^i_r)_{i}$ we are looking
  for should then be such that $\Free(\varphi^i_\ell) = \Vl$ and
  $\Free(\varphi^i_r) = \Vr$.
  If we can define sets of formulas $(\varphi^i_\ell, \varphi^i_r)$ for each
  choice of $\Vl, \Vr$, then the general case is solved by taking the union
  of all of those.
  So we focus on a fixed $\Vl,\Vr$, and prove the result by induction on
  the quantifier depth of $\varphi$.
  
  \paragraph{Base case}
  If $\varphi$ is quantifier-free, then it can be seen as an LTL formula
  over the set of propositions
  $\{a_\pi, \$_\pi \mid \pi \in \Free(\varphi), a \in \AP\}$,
  and any split model of $\varphi$ consistent with $\Vl,\Vr$
  can be seen as a word in $\Sigma_\ell^\ast \cdot \Sigma_r^\omega$,
  where
  \begin{align*}
    \Sigmal & = \big\{ \alpha \cup \{\$_\pi \mid \pi \in \Vr\} \mid 
    \alpha \subseteq {\{a_\pi \mid \pi \in \Vl \land a \in \AP\}} \big\} \text{ and }\\
    \Sigmar & = \big\{ \alpha \cup \{\$_\pi \mid \pi \in \Vl\} \mid 
    \alpha \subseteq {\{a_\pi \mid \pi \in \Vr \land a \in \AP\}} \big\} \, .
  \end{align*}
  Note in particular that $\Sigmal \cap \Sigmar = \emptyset$.
  We can thus conclude by applying the following standard result of
  formal language theory:
  
  \begin{clm}\label{lem:splitLTL}
    Let $L \subseteq \Sigma_1^\ast \cdot \Sigma_2^\omega$, where
    $\Sigma_1 \cap \Sigma_2 = \emptyset$.
    If $L = L(\varphi)$ for some LTL formula $\varphi$, then there exists a finite family~$(\varphi^i_1, \varphi^i_2)_{i}$  of LTL formulas 
    such that $L = \bigcup_{1 \le i \le k} L(\varphi^i_1) \cdot L(\varphi^i_2)$
    and for all $i$, $L(\varphi^i_1) \subseteq \Sigma_1^\ast$
    and $L(\varphi^i_2) \subseteq \Sigma_2^\omega$.
  \end{clm}

  \begin{proof}
    A language is definable in LTL if and only if it is accepted by some
    counter-free automaton~\cite{DG08,Thomas81}.
    Let $\mathcal{A}$ be a counter-free automaton for $L$.
    For every state~$q$ in~$\mathcal{A}$, let
    \begin{align*}
      L^q_1 & = \{w \in \Sigma_1^\ast \mid q_0 \xrightarrow{w} q
              \text{ for some initial state $q_0$}\} \text{ and }\\
      L^q_2 & = \{w \in \Sigma_2^\omega \mid \text{there is an accepting run
              on $w$ starting from $q$}\} \, .
    \end{align*}
    We have $L = \bigcup_q L^q_1\cdot  L^q_2$.
    Moreover, $L^q_1$ and $L^q_2$ are still recognisable by counter-free
    automata, and therefore LTL definable.
  \end{proof}
  
  \paragraph{Case $\varphi = \exists \pi.\ \psi$}
  Let $(\psi^i_{\ell,1},\psi^i_{r,1})$ and $(\psi^i_{\ell,2},\psi^i_{r,2})$
  be the formulas constructed respectively for
  $(\psi,\Vl \cup \{\pi\},\Vr)$ and $(\psi,\Vl,\Vr \cup \{\pi\})$.
  We take the union of all $(\exists \pi.\ \psi^i_{\ell,1},\psi^i_{r,1})$
  and $(\psi^i_{\ell,2},\exists \pi.\ \psi^i_{r,2})$.

  \paragraph{Case $\varphi = \forall \pi.\ \psi$}
  Let $(\xi^i_{\ell},\xi^i_{r})_{1 \le i \le k}$ be the formulas obtained for
  $\exists \pi.\ \lnot \psi$.
  We have ${(T,\Pi) \models \varphi}$ if and only if for all $i$,
  $(\Tl,\Pi) \not\models \xi^i_{\ell}$ or $(\Tr ,\Pi') \not\models \xi^i_r$;
  or, equivalently, if there exists $h: \{1,\ldots,k\} \to  \{\ell,r\}$ such
  that $(\Tl,\Pi) \models \bigwedge_{h(i) = \ell} \lnot \xi^i_{\ell}$ and
  $(\Tr ,\Pi') \models \bigwedge_{h(i) = r} \lnot \xi^i_{r}$.
  Take the family $(\varphi^h_{\ell},\varphi^h_{r})_h$, where
  $\varphi^h_{\ell} = \bigwedge_{h(i) = \ell} \lnot \xi^i_{\ell}$ and
  $\varphi^h_{r} = \bigwedge_{h(i) = r} \lnot \xi^i_{r}$.
  Since $\varphi = \forall \pi.\ \psi$ is a $\Pi_n$-formula,
  the formula $\exists \pi.\ \lnot \psi$ and by induction all
  $\xi^i_{\ell}$ and $\xi^i_{r}$ are $\Sigma_n$-formulas.
  Then all $\lnot \xi^i_{r}$ are $\Pi_n$-formulas, and since
  $\Pi_n$-formulas are closed under conjunction
  (up to formula equivalence), all $\varphi^h_{\ell}$ and $\varphi^h_{r}$
  are $\Pi_n$-formulas as well.
\end{proof}

We are now ready to prove \autoref{thm:Pi11}. 

\begin{proof}[Proof of \autoref{thm:Pi11}]
  The upper bound is an easy consequence of \autoref{thm:hyltl-sat}: Given a \hyltl sentence~$\varphi$, we express the existence of a $\Sigma_n$-sentence~$\psi$ using first-order quantification and encode equivalence of $\psi$ and $\varphi$
via the formula~$(\lnot \varphi \land \psi) \lor (\varphi \land \lnot \psi)$, which is unsatisfiable if and only if $\varphi$ and $\psi$ are equivalent. Altogether, this shows membership in $\Pi_1^1$, as $\Pi_1^1$ is closed under existential first-order quantification (see, e.g.~\cite[Page 82]{hinman}).

  We prove the lower bound by reduction from the unsatisfiability problem for
  \hyltl. So given a \hyltl sentence $\varphi$, we want to
  construct $\psi$ such that $\varphi$ is unsatisfiable if and only if
  $\psi$ is equivalent to a $\Sigma_n$-sentence.

  \smallskip
  
  We first consider the case $n > 1$.
  Fix a $\Sigma_{n+1}$-sentence~$\varphi_{n+1}$ that is in not
  equivalent to any $\Sigma_n$-sentence, and such that every model of
  $\varphi_{n+1}$ is bounded.
  The existence of such a formula is a consequence of
  \autoref{thm:strictness-finite}:
  There we have constructed $\varphi_{n+1}$ such that it only has finite models where every trace in the model ends in $\emptyset^\omega$. 
  The construction can easily be adapted so that $\varphi_{n+1}$ only has finite models where every trace in the model ends in $\emptyset^\omega$, i.e., it is bounded.
  Now, by \autoref{lem:dupsi}, there exists a computable $\psi$ such that for all split models
  $T$, we have $T \models \psi$ if and only if $\Tl \models \varphi_{n+1}$
  and $\Tr  \models \varphi$.
  
  First, it is clear that if $\varphi$ is unsatisfiable, then $\psi$ is
  unsatisfiable as well, and thus equivalent to
  $\exists \pi.\ a_\pi \land \lnot a_\pi$, which is a $\Sigma_n$-sentence
  since $n \ge 1$.
  
  Conversely, suppose towards a contradiction that $\varphi$ is satisfiable
  and that $\psi$ is equivalent to some $\Sigma_n$-sentence.
  Let $(\psi^i_\ell,\psi^i_r)_{i}$ be the finite family of $\Sigma_n$-sentences given
  by \autoref{lem:split} for $\psi$.
  Fix a model $T_\varphi$ of $\varphi$.
  For a bounded $T$, we let $\overline T$ denote the unique split set of traces such that
  $\overline \Tl = T$ and $\overline \Tr  = T_\varphi$.
  For all $T$, we then have $T \models \varphi_{n+1}$ if and only if $T$ is
  bounded and $\overline T \models \psi$.
  Recall that the set of bounded models can be defined by a
  $\Pi_1$-sentence~$\phib$ (\autoref{lemma:bounded}), which is also
  a $\Sigma_n$-sentence since $n > 1$.
  We then have $T \models \varphi_{n+1}$ if and only if $T \models \phib$ and
  there exists $i$ such that $T \models \psi^i_\ell$ and
  $T_\varphi \models \psi^i_r$.
  So $\varphi_{n+1}$ is equivalent to
  \[
    \phib \land \bigvee\nolimits_{i \text{ with } T_{\varphi} \models \psi^i_r}  \psi^i_\ell \, ,
  \]
  which, since $\Sigma_n$-sentences are closed (up to logical equivalence)
  under conjunction and disjunction, is equivalent to a $\Sigma_n$-sentence.
  This contradicts the definition of $\varphi_{n+1}$.

  \smallskip

  We are left with the case $n = 1$.
  Similarly, we construct $\psi$ such that $\varphi$ is unsatisfiable
  if and only if $\psi$ is unsatisfiable, and if and only if
  $\psi$ is equivalent to a $\Sigma_1$-sentence.
  However, we do not need to use bounded or split models here.
  Every satisfiable $\Sigma_1$-sentence has a model with finitely
  many traces.
  Therefore, a simple way to construct $\psi$ so that it is not equivalent to any
  $\Sigma_1$-sentence (unless it is unsatisfiable) is to ensure that every model
  of $\psi$ contains infinitely many traces.

  Let $x \notin \AP$, and
  $T_\omega = \{ \emptyset^n \{x\} \emptyset^\omega \mid n \in \Nat \}$.
  As seen in the proof of \autoref{lemma:hyltl_hardness},
  $T_\omega$ is definable in \hyltl:  There is a sentence
  $\varphi_\omega$ such that $T\subseteq (\pow{\ap \cup\set{x}})^\omega$ is a model of  $\varphi_\omega$ if and only if
  $T = T_\omega$.
  By relativising quantifiers in $\varphi_\omega$ and $\varphi$ to traces
  with or without the atomic proposition~$x$, one can construct a \hyltl
  sentence $\psi$ such that $T \models \psi$ if and only if
  $T_\omega \subseteq T$ and $T \setminus T_\omega \models \varphi$.

  Again, if $\varphi$ is unsatisfiable then $\psi$ is unsatisfiable and
  therefore equivalent to $\exists \pi.\ a_\pi \land \lnot a_\pi$,
  a $\Sigma_1$-sentence.
  Conversely, all models of $\psi$ contain infinitely many traces and therefore,
  if $\psi$ is equivalent to a $\Sigma_1$-sentence then it is unsatisfiable,
  and so is $\varphi$.
\end{proof}

\section{\texorpdfstring{\hyctlstar}{HyperCTL*} satisfiability is \texorpdfstring{$\Sigma^2_1$}{Sigma21}-complete}
\label{sec:ctlsat}

Here, we consider the \hyctlstar satisfiability problem: given a \hyctlstar sentence, determine whether it has a model $\tsys$ (of arbitrary size). We prove that it is much harder than \hyltl satisfiability.
As a key step of the proof, which is interesting in its own right, we also prove that every satisfiable sentence
admits a model of cardinality at most $\cont$ (the cardinality of the continuum). Conversely, we exhibit
a satisfiable \hyctlstar sentence whose models are all of cardinality at least~$\cont$.

\begin{thm}\label{thm:hyctl-sat}
  \hyctlstar satisfiability is $\Sigma^2_1$-complete.
\end{thm}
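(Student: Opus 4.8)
**Proof plan for Theorem (HyperCTL* satisfiability is $\Sigma_1^2$-complete)**

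The plan is to establish the two directions separately, as the proof naturally splits into an upper bound ($\Sigma_1^2$-membership) and a matching lower bound ($\Sigma_1^2$-hardness), mirroring the structure used for \hyltl in \autoref{thm:hyltl-sat} but one type level higher.

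For the \emph{upper bound}, the key obstacle is that \hyctlstar sentences are evaluated over transition systems that may be arbitrarily large, whereas $\Sigma_1^2$ only permits quantification over type~$2$ objects (equivalently, subsets of $\pow{\nats}$), which have cardinality at most $\contcard$. So the first and most important step is to prove a \emph{small-model property}: every satisfiable \hyctlstar sentence has a model of cardinality at most $\contcard$. I expect this to be the main technical effort, generalising the Finkbeiner--Zimmermann countable-model argument for \hyltl~\cite{FZ17} to the branching-time setting; the difficulty is that path quantifiers in \hyctlstar range over paths of the transition system and can appear nested under temporal operators, so one cannot simply extract a countable sub-model as in the linear case, but must instead close off under the paths needed to witness existential quantifiers at every vertex, yielding a bound of $\contcard$ rather than $\aleph_0$. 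Once such a model of size $\contcard$ exists, it (together with its edge relation and labelling) can be coded by type~$1$ objects indexed over $\nats$, or more precisely by a subset of $\pow\nats$, so that a single existential type~$2$ quantifier guesses the model. Satisfaction of the sentence is then phrased as the existence of a winning strategy in the induced model-checking game: one further type~$2$ quantifier guesses the strategy (again an object over the continuum-sized vertex set), and the condition that it is winning --- that every consistent play satisfies the quantifier-free LTL obligations via an expansion function as in \autoref{lemma:hyltl_membership} --- is expressible using only quantification over type~$0$ and type~$1$ objects. This places \hyctlstar satisfiability in $\Sigma_1^2$.

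For the \emph{lower bound}, I would reduce directly from truth of $\Sigma_1^2$-sentences of arithmetic, exploiting a satisfiable sentence~$\varphi_\contcard$ whose every model is forced to contain, for each $A \subseteq \nats$, a distinct path encoding $A$ (so models necessarily have cardinality at least $\contcard$). Given a $\Sigma_1^2$-arithmetic sentence $\exists F_1 \cdots \exists F_m\, \psi$ with $\psi$ having only type~$0$ and type~$1$ quantification, the reduction simulates each quantifier level by a different syntactic device: existential type~$2$ quantification is simulated by the \emph{choice of model} (a model encodes which subsets of $\pow\nats$ are ``present''), arbitrary type~$1$ quantification is simulated by HyperCTL* path quantification ranging over the paths that encode subsets of $\nats$, and type~$0$ quantification is simulated by path quantification restricted to paths encoding singletons. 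The arithmetic operations $+$, $\cdot$, $<$ and $\in$ are then implemented as \hyltl-definable (hence \hyctlstar-definable) relations on these encodings, exactly as sketched in the introduction. The resulting sentence is satisfiable precisely when the arithmetic sentence is true, giving $\Sigma_1^2$-hardness.

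The hardest part of the whole argument is the small-model property of size $\contcard$: getting the cardinality bound exactly right (rather than some larger cardinal) requires care in how one saturates the model with witnessing paths while controlling its size, and it is this bound --- shown tight via $\varphi_\contcard$ --- that pins the complexity at exactly $\Sigma_1^2$ rather than something higher. Combining the $\Sigma_1^2$-membership lemma with the $\Sigma_1^2$-hardness lemma yields the theorem.
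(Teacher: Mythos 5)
Your proposal is correct and follows essentially the same route as the paper: a small-model property bounding satisfiable \hyctlstar sentences by models of cardinality~$\contcard$ (proved there by a transfinite Skolem-closure argument stabilising at $\omega_1$), membership via existentially guessing the model and a winning Verifier strategy in the induced model-checking game as type~$2$ objects, and hardness via a sentence forcing all subsets of $\nats$ to appear as paths with distinct initial vertices, simulating type~$2$, type~$1$, and type~$0$ quantification by model choice, path quantification, and singleton-encoding paths respectively, with $+$ and $\cdot$ implemented in \hyltl. The only cosmetic difference is that the paper checks the winning condition directly on terminal game vertices rather than via an expansion function, but this does not change the argument.
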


The proof proceeds as follows: In \autoref{subsec:hyctlmodelub}, we present the upper bound on the size of models of \hyctlstar sentences, which is then used in \autoref{subsec:hyctlub} to prove that \hyctlstar membership is in $\Sigma^2_1$: Intuitively, the existence of a transition system of cardinality~$\cont$ and the fact that it is indeed a model can be expressed using existential quantification of type~$2$ objects.
Finally, in \autoref{subsec:hyctllb} we present the matching lower bounds.
To this end, we first show that there is a satisfiable \hyctlstar sentence whose models have to contain pairwise disjoint paths encoding all traces over some fixed $\ap$. Note that this formula has only models of cardinality~$\contcard$, thereby giving a matching lower bound to the upper bound on the size of models of \hyctlstar.
Using this model, we can reduce existential third-order arithmetic to \hyctlstar satisfiability.

\subsection{An Upper Bound on the Size of \texorpdfstring{\hyctlstar}{HyperCTL*} Models}
\label{subsec:hyctlmodelub}
Before we begin proving membership in $\Sigma^2_1$, we obtain a bound on the size of minimal models of satisfiable \hyctlstar sentences. For this, we use an argument based on Skolem functions, which is a transfinite generalisation of the proof that all satisfiable \hyltl sentences have a countable model~\cite{FZ17}.
Later, we complement this upper bound by a matching lower bound, which will be applied in the hardness proof.

In the following, we use $\omega$ and $\omega_1$ to denote the first infinite and the first uncountable ordinal, respectively, and write $\aleph_0$ and $\aleph_1$ for their cardinality. 
As $\cont$ is uncountable, we have $\aleph_1 \le \cont$.

\begin{lem}\label{lemma:hyctl-model}
Each satisfiable HyperCTL$^*$ sentence~$\phi$ has a model of size at most~$\cont$.
\end{lem}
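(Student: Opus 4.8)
The plan is to generalise the Skolem-function argument of~\cite{FZ17} from countable sets of traces to transition systems of cardinality~$\cont$, by carving out of a given model a sub-transition-system that is closed under witnessing paths for the existential quantifiers. First I would rewrite $\phi$ into negation normal form, pushing negations down to the atoms using the dual connective~$\wedge$, the dual quantifier, and the release operator~$\mathbf{R}$ (with $\psi_1 \mathbf{R} \psi_2 \equiv \neg(\neg \psi_1 \U \neg \psi_2)$). This normal form is what will make the eventual induction \emph{one-directional} and thereby sidestep the problematic direction of transferring universal quantification upward. Fix a model $\tsys = (V,E,v_\initmark,\lambda)$ with $\tsys \models \phi$. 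For each existential subformula $\exists \pi.\ \chi$ of the NNF of~$\phi$ and each path assignment~$\Pi$ over its free variables with $(\tsys,\Pi) \models \exists \pi.\ \chi$, I would use the axiom of choice to fix a witnessing path $f(\Pi)$ starting in $\last(\Pi)$ such that $(\tsys, \Pi[\pi \to f(\Pi)]) \models \chi$; there is one such Skolem function per existential subformula.

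Next I would build the vertex set of the submodel by transfinite closure under these Skolem functions. Starting from $W_0$, the countable set of vertices of one fixed infinite path out of $v_\initmark$, I would let $W_{\alpha+1}$ consist of $W_\alpha$ together with all vertices lying on witness paths $f(\Pi)$, where $\Pi$ ranges over all assignments into paths of the sub-transition-system induced on $W_\alpha$ (and over all existential subformulas), and take unions at limit ordinals. The submodel $\tsys'$ is then the system induced on $W = W_{\omega_1}$, with initial vertex $v_\initmark$. Since a vertex is only ever placed into $W$ by adding an \emph{entire} infinite path, each vertex of $\tsys'$ keeps an outgoing edge inside $W$, so $\tsys'$ is a genuine transition system, and every path of $\tsys'$ is a path of $\tsys$.

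The cardinality bound is the crux, and is exactly where $\omega_1$ and the inequality $\aleph_1 \le \cont$ enter. A system on $\kappa$ vertices has at most $\kappa^{\aleph_0}$ paths, hence at most $\kappa^{\aleph_0}$ assignments of bounded arity; as there are finitely many existential subformulas and each witness contributes $\aleph_0$ vertices, $|W_{\alpha+1}| \le |W_\alpha|^{\aleph_0}$. Because $\cont^{\aleph_0} = \cont$, an induction then gives $|W_\alpha| \le \cont$ for all $\alpha \le \omega_1$, the limit stages contributing at most $\aleph_1 \cdot \cont = \cont$. Moreover $W_{\omega_1}$ is already closed: any assignment into paths of $\tsys'$ mentions only countably many vertices, and since $\omega_1$ is regular these all appear in a single $W_\alpha$ with $\alpha < \omega_1$, so the corresponding witness was added at stage $\alpha+1 \le \omega_1$ and hence is a path of $\tsys'$. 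Thus $\tsys'$ has at most $\cont$ vertices and is closed under every Skolem function.

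Finally I would prove, by induction on the NNF subformula~$\theta$, that for every assignment~$\Pi$ into paths of $\tsys'$, $(\tsys,\Pi) \models \theta$ implies $(\tsys',\Pi) \models \theta$. Atoms and negated atoms depend only on $\Pi$ and $\lambda$, hence are preserved; the Boolean and temporal cases follow from the induction hypothesis applied to the relevant suffixes $\suffix{\Pi}{j}$, which remain assignments into $\tsys'$; the case $\forall \pi.\ \chi$ uses that every path of $\tsys'$ is a path of $\tsys$; and the case $\exists \pi.\ \chi$ uses closure, namely that the witness $f(\Pi)$ is a path of $\tsys'$. Applying this to $\theta = \phi$ and $\Pi = \Pi_\emptyset$ yields $\tsys' \models \phi$ with $|V(\tsys')| \le \cont$. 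I expect the main obstacle to be the cardinality argument itself, specifically checking that the closure stabilises at stage~$\omega_1$ and never exceeds~$\cont$, together with the bookkeeping required to make the Skolem functions well defined under the $\last(\cdot)$ semantics, where the chosen witness path must begin at the most recently bound vertex.
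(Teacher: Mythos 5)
Your proposal is correct and follows essentially the same route as the paper: Skolemise the existential quantifiers, close a single initial path under the Skolem functions by transfinite recursion up to $\omega_1$, use the regularity of $\omega_1$ to show the construction stabilises, and bound each stage by $\cont$ via $\cont^{\aleph_0}=\cont$. The only differences are cosmetic — you handle negations by passing to negation normal form where the paper tracks quantifier polarity, and you spell out the final one-directional semantic induction that the paper compresses into a single sentence about closure under the Skolem function.
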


The proof of \autoref{lemma:hyctl-model} uses a Skolem function to create a model. Before giving this proof, we should therefore first introduce Skolem functions for \hyctlstar.

Let $\phi$ be a \hyctlstar formula. A quantifier in $\phi$ occurs \emph{with polarity 0} if it occurs inside the scope of an even number of negations, and \emph{with polarity 1} if it occurs inside the scope of an odd number of negations. We then say that a quantifier \emph{occurs existentially} if it is an existential quantifier with polarity 0, or a universal quantifier with polarity 1. Otherwise the quantifier \emph{occurs universally}. A Skolem function will map choices for the universally occurring quantifiers to choices for the existentially occurring quantifiers.

For reasons of ease of notation, it is convenient to consider a single Skolem function for all existentially occurring quantifiers in a \hyctlstar formula $\phi$, so the output of the function is an $l$-tuple of paths, where $l$ is the number of existentially occurring quantifiers in~$\phi$. The input consists of a $k$-tuple of paths, where $k$ is the number of universally occurring quantifiers in $\phi$, plus an $l$-tuple of integers. The reason for these integers is that we need to keep track of the time point in which the existentially occurring quantifiers are invoked.

Consider, for example, a \hyctlstar formula of the form $\forall \pi_1.\ \G \exists \pi_2.\ \psi$. This formula states that for every path $\pi_1$, and for every future point $\pi_1(i)$ on that path, there is some $\pi_2$ starting in $\pi_1(i)$ satisfying $\psi$. So the choice of $\pi_2$ depends not only on $\pi_1$, but also on $i$. For each existentially occurring quantifier, we need one integer to represent this time point at which it is invoked. A \hyctlstar Skolem function for a formula $\phi$ on a transition system~$\tsys$ is therefore a function~$f$ of the form~$f\colon\mathit{paths}(\tsys)^k\times \mathbb{N}^l\rightarrow \mathit{paths}(\tsys)^l$, where $\mathit{paths}(\tsys)$ is the set of paths over $\tsys$, $k$ is the number of universally occurring quantifiers in $\phi$ and $l$ is the number of existentially occurring quantifiers. 
Note that not every function of this form is a Skolem function, but for our upper bound it suffices that every Skolem function is of that form.

Now, we are able to prove that every satisfiable \hyctlstar formula has a  model of size~$\cont$.

\begin{proof}[Proof of \autoref{lemma:hyctl-model}]
If $\phi$ is satisfiable, let $\tsys$ be one of its models, and let $f$ be a Skolem function witnessing the satisfaction of $\phi$ on $\tsys$. We create a sequence of transition systems $\tsys_\alpha$ as follows.
\begin{itemize}
	\item $\tsys_0$ contains the vertices and edges of a single, arbitrarily chosen, path of $\tsys$ starting in the initial vertex.
	\item $\tsys_{\alpha+1}$ contains exactly those vertices and edges from $\tsys$ that are (i) part of $\tsys_{\alpha}$ or (ii) among the outputs of the Skolem function $f$ when restricted to input paths from $\tsys_\alpha$.
	\item if $\alpha$ is a limit ordinal, then $\tsys_\alpha= \bigcup_{\alpha'<\alpha} \tsys_{\alpha'}$.
\end{itemize}
Note that if $\alpha$ is a limit ordinal then $\tsys_\alpha$ may contain paths~$\rho(0)\rho(1)\rho(2)\cdots$ that are not included in any $\tsys_{\alpha'}$ with $\alpha' <\alpha$, as long as each finite prefix $\rho(0) \cdots \rho(i)$ is included in some $\alpha'_i<\alpha$.

First, we show that this procedure reaches a fixed point at $\alpha = \omega_1$. Suppose towards a contradiction that $\tsys_{\omega_1+1}\not =\tsys_{\omega_1}$. Then there are $\vec{\rho}=(\rho_1,\ldots,\rho_k)\in \mathit{paths}(\tsys_{\omega_1})^k$ and $\vec{n}\in \mathbb{N}^l$ such that $f(\vec{\rho},\vec{n})\not \in \mathit{paths}(\tsys_{\omega_1})^l$. Then for every $i\in\mathbb{N}$ and every $1\leq j \leq k$, there is an ordinal $\alpha_{i,j}<\omega_1$ such that the finite prefix $\rho_j(0)\cdots \rho_j(i)$ is contained in $\tsys_{\alpha_{i,j}}$. The set $\{\alpha_{i,j}\mid i\in \mathbb{N}, 1\leq j \leq k\}$ is countable, and because $\alpha_{i,j}<\omega_1$ each $\alpha_{i,j}$ is also countable. A countable union of countable sets is itself countable, so $\sup \{\alpha_{i,j}\mid i\in \mathbb{N}, 1 \leq j \leq k\}=\bigcup_{i\in\mathbb{N}}\bigcup_{1\leq j \leq k}\alpha_{i,j}=\beta< \omega_1$. 

But then the $\vec{\rho}$ are all contained in $\tsys_\beta$, and therefore $f(\vec{\rho},\vec{n})\in \mathit{paths}(\tsys_{\beta+1})^l$. But $\beta+1< \omega_1$, so this contradicts the assumption that $f(\vec{\rho},\vec{n})\not \in \mathit{paths}(\tsys_{\omega_1})^l$. From this contradiction we obtain $\tsys_{\omega_1+1}=\tsys_{\omega_1}$, so we have reached a fixed point. Furthermore, because $\tsys_{\omega_1}$ is contained in $\tsys$ and closed under the Skolem function and $\tsys$ satisfies $\phi$, we obtain that $\tsys_{\omega_1}$ also satisfies $\phi$.

Left to do, then, is to bound the size of $\tsys_{\omega_1}$, by bounding the number of vertices that get added at each step in its construction. We show by induction that $\size{\tsys_\alpha}\leq \cont$ for every $\alpha$. So, in particular, we have $\tsys_{\omega_1} \le \cont$, as required.

As base case, we have $\size{\tsys_{0}}
\le \aleph_0 < \cont$, since it consists the vertices of a single path. 
Consider then $\size{\tsys_{\alpha+1}}$. For each possible input to $f$, there are at most $l$ new paths, and therefore at most $\size{\nats \times l}$ new vertices in $\tsys_{\alpha+1}$. Further, there are $\size{\mathit{paths}(\tsys_\alpha)}^k\times \size{\mathbb{N}}^l$ such inputs. By the induction hypothesis, $\size{\tsys_\alpha}\leq \cont$, which implies that $\size{\mathit{paths}(\tsys_\alpha)}\leq \cont$. As such, the number of added vertices in each step is limited to $\cont^k\times \aleph_0^l \times \aleph_0\times l= \cont$. So $\size{\tsys_{\alpha+1}}\leq \size{\tsys_\alpha}+\cont=\cont$.

If $\alpha$ is a limit ordinal, $\tsys_\alpha$ is a union of at most $\aleph_1$ sets, each of which has, by the induction hypothesis, a size of at most $\cont$. Hence $\size{\tsys_\alpha}\leq \aleph_1\times \cont = \cont$.
\end{proof}

\subsection{\texorpdfstring{\hyctlstar}{HyperCTL*} satisfiability is in \texorpdfstring{$\Sigma^2_1$}{Sigma21}}
\label{subsec:hyctlub}

With the upper bound on the size of models at hand, we can place \hyctlstar satisfiability in $\Sigma_1^2$, as the existence of a model of size~$\cont$ can be captured by quantification over type~$2$ objects.

\begin{lem}\label{lemma:hyctl-membership}
\hyctlstar satisfiability is in $\Sigma_1^2$.
\end{lem}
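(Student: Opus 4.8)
The plan is to invoke \autoref{lemma:hyctl-model} to restrict attention to models of cardinality at most~$\cont$, and then to encode such a model together with its satisfaction relation by type~$2$ objects quantified in a single leading existential block. Since there are exactly $\cont$ type~$1$ objects, I would identify the vertices of a candidate model with a subset of the type~$1$ objects. A transition system is then described by three type~$2$ objects: a domain~$D$ indicating which type~$1$ objects are vertices, an edge relation~$E$ (a set of pairs of type~$1$ objects, encodable as a type~$2$ object after pairing two type~$1$ objects into one), and a labelling~$\lambda$ (a map from type~$1$ objects into the finite set~$\pow{\ap}$). The initial vertex is a single type~$1$ object and can be quantified inside the arithmetic matrix.

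First I would fix the arithmetical encodings. A path of~$\tsys$ is an infinite sequence of vertices, i.e.\ a function $\nats \to (\text{type }1)$; via the arithmetical bijection $\nats \times \nats \cong \nats$ this is itself a type~$1$ object, and since $\phi$ has only finitely many variables a path assignment~$\Pi$ is a finite tuple of paths, hence again a type~$1$ object. All operations needed on assignments---shifting $\suffix{\Pi}{1}$, extending $\Pi[\pi \mapsto \rho]$, reading a label $\lambda(\Pi(\pi)(0))$, reading off $\last(\Pi)$, and checking that a type~$1$ object is a genuine path of~$\tsys$ starting in a given vertex (using $D$ and $E$)---are definable in arithmetic. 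The further object to guess is a satisfaction relation~$\mathrm{Sat}$: it relates codes of subformulas of~$\phi$ (type~$0$) to path assignments (type~$1$), so it is a subset of $\nats \times (\text{type }1)$ and, after pairing, a type~$2$ object.

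The matrix~$\psi$, using only type~$0$ and type~$1$ quantification, then asserts: (a) $D,E,\lambda$ describe a legal transition system, i.e.\ the domain is non-empty and every vertex has an outgoing edge (a $\forall$/$\exists$ over type~$1$ objects); (b) $\mathrm{Sat}$ satisfies, for every subformula code~$\chi$ (a type~$0$ $\forall$) and every assignment~$\Pi$ (a type~$1$ $\forall$), the biconditional matching the top-level shape of~$\chi$---the clauses for $a_\pi,\lnot,\lor,\X$ are immediate, the clause for $\U$ uses type~$0$ quantification over time points \emph{in its explicit form} (``$\exists j\ge 0$ with $\chi_2$ at $j$ and $\chi_1$ below $j$''), and the clauses for $\exists\pi,\forall\pi$ use type~$1$ quantification over candidate paths~$\rho$ starting in $\last(\Pi)$; and (c) $\mathrm{Sat}$ holds of the code of~$\phi$ together with the empty assignment~$\Pi_\emptyset$, whose $\last$ is the initial vertex. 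Since every quantifier introduced in (a)--(c) is of type~$0$ or type~$1$, the whole statement---$\exists D,E,\lambda,\mathrm{Sat}$ of type~$2$, followed by $\psi$---is in $\Sigma_1^2$.

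For correctness, soundness is immediate from \autoref{lemma:hyctl-model}: a satisfiable~$\phi$ has a model of size $\le\cont$, which I encode as above with $\mathrm{Sat}$ the genuine satisfaction relation, making $\psi$ true. The step requiring the most care is completeness: any guessed $\mathrm{Sat}$ meeting the consistency conditions of (b) must coincide with the true satisfaction relation of the encoded system, so that (c) indeed witnesses $\tsys\models\phi$. This is the \hyctlstar analogue of the expansion argument used for \hyltl in \autoref{lemma:hyltl_membership}, and it goes through by induction on the structure of~$\chi$: each clause defines the truth value of~$\chi$ \emph{explicitly} from the values of its strict subformulas on shifted or extended assignments, so---crucially, because the $\U$-clause is stated explicitly rather than as a fixpoint equation $\chi_2 \lor (\chi_1 \land \X(\chi_1\U\chi_2))$---there is no genuine fixpoint ambiguity and the conditions pin $\mathrm{Sat}$ down uniquely. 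The only remaining obligation is to verify that the encoding and decoding operations listed above are genuinely first-order arithmetical, which is routine.
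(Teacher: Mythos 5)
Your proposal is correct, but it certifies $\tsys \models \phi$ by a different device than the paper. Both arguments rest on \autoref{lemma:hyctl-model} and both guess the model itself (vertices identified with type~$1$ objects, edge relation and labelling as type~$2$ objects); the difference is in the witness for satisfaction. The paper existentially quantifies a \emph{winning strategy for Verifier in the model checking game} $\mcgame(\tsys,\phi)$ --- a type~$2$ function on game positions $(\Pi,\psi,b)$ --- and then states with type~$0$/type~$1$ quantification that every play consistent with it (a bounded-length sequence of type~$1$ objects) ends in a winning terminal vertex. You instead guess a full \emph{satisfaction relation} $\mathrm{Sat}$ on (subformula code, assignment) pairs and pin it down by consistency biconditionals, extending the expansion technique of \autoref{lemma:hyltl_membership} from the quantifier-free fragment to the quantifier clauses as well. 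Your uniqueness argument is sound: since every clause, including the explicit (non-fixpoint) form of the $\U$-clause and the clauses for $\exists\pi$ and $\forall\pi$, determines the value at $\chi$ from the values at \emph{strict} subformulas on shifted or extended assignments, structural induction forces $\mathrm{Sat}$ to coincide with genuine satisfaction. What each approach buys: yours is more uniform with the \hyltl membership proof and avoids introducing the game and its correctness lemma; the paper's strategy-based formulation localises the type~$2$ witness to a single choice function and leans on a standard game correspondence. One detail you should make explicit (the paper also only gestures at it, in a footnote) is that $\last(\Pi)$ is not recoverable from a bare tuple of paths --- it depends on which variable was most recently quantified --- but for a fixed sentence this is determined syntactically by the subformula occurrence, so it can be folded into the code $\chi$.
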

\begin{proof}
As every \hyctlstar formula is satisfied in a model of size at most $\cont$, these models can be represented by objects of type~$2$. Checking whether a formula is satisfied in a transition system is equivalent to the existence of a winning strategy for Verifier in the induced model checking game. Such a strategy is again a type~$2$ object, which is existentially quantified. Finally, whether it is winning can be expressed by quantification over individual elements and paths, which are objects of types~$0$ and $1$.
Checking the satisfiability of a \hyctlstar formula $\phi$ therefore amounts to existential third-order quantification (to choose a model and a strategy) followed by a second-order formula to verify that $\phi$ holds on the model (i.e.\ that the chosen strategy is winning). Hence \hyctlstar satisfiability is in $\Sigma^2_1$.

Formally, we encode the existence of a winning strategy for Verifier in the \hyctlstar model checking game~$\mcgame(\tsys, \varphi)$ induced by a transition system~$\tsys$ and a \hyctlstar sentence~$\varphi$. 
This game is played between Verifier and Falsifier, one of them aiming to prove that $\tsys \models \varphi$ and the other aiming to prove $\tsys \not\models \varphi$. It is played in a graph whose positions correspond to subformulas which they want to check (and suitable path assignments of the free variables): each vertex (say, representing a subformula $\psi$) belongs to one of the players who has to pick a successor, which represents a subformula of $\psi$. A play ends at an atomic proposition, at which point the winner can be determined.

Formally, a vertex of the game is of the form~$(\Pi, \psi,b)$ where $\Pi$ is a path assignment, $\psi$ is a subformula of $\varphi$, and $b \in \set{0,1}$ is a flag used to count (modulo two) the number of negations encountered along the play; the initial vertex is $(\Pi_\emptyset, \varphi,0)$.
Furthermore, for until-subformulas~$\psi$, we need auxiliary vertices of the form~$(\Pi,\psi, b, j)$ with $j \in \nats$.
The vertices of Verifier are 
\begin{itemize}
	\item of the form $(\Pi, \psi, 0)$ with $\psi = \psi_1 \vee \psi_2$, $\psi = \psi_1 \U \psi_2$, or $\psi = \exists \pi.\ \psi'$, 
	\item of the form $(\Pi, \forall \pi.\ \psi', 1)$, or
	\item of the form $(\Pi, \psi_1 \U \psi_2, 1, j)$.
\end{itemize}
The moves of the game are defined as follows:
\begin{itemize}
	
	\item A vertex~$(\Pi, a_\pi, b)$ is terminal. It is winning for Verifier if $b = 0$ and $a \in \lambda(\Pi(\pi)(0))$ or if $b=1$ and $a \notin \lambda(\Pi(\pi)(0))$, where $\lambda$ is the labelling function of $\tsys$.
	
	\item A vertex~$(\Pi, \neg \psi, b)$ has a unique successor~$(\Pi,\psi, b+1 \bmod 2)$.
	
	\item A vertex~$(\Pi, \psi_1\vee \psi_2, b)$ has two successors of the form~$(\Pi, \psi_i, b)$ for $i \in \set{1,2}$.
	
	\item A vertex~$(\Pi, \X \psi, b)$ has a unique successor~$(\suffix{\Pi}{1}, \psi, b)$.
	
	\item A vertex~$(\Pi, \psi_1\U \psi_2, b)$ has a successor~$(\Pi, \psi_1\U \psi_2, b,j)$ for every $j \in \nats$.
	
	\item A vertex~$(\Pi, \psi_1\U \psi_2, b,j)$ has the  successor~$(\suffix{\Pi}{j}, \psi_2, b)$ as well as for every $0 \le j' <j$ the successor~$(\suffix{\Pi}{j'}, \psi_1, b)$ .
	
	\item A vertex~$(\Pi, \exists \pi.\ \psi, b)$ has successors~$(\Pi[\pi\mapsto \rho],\psi, b)$ for every path~$\rho$ of $\tsys$ starting in $\last(\Pi)$.
	
	\item A vertex~$(\Pi, \forall \pi.\ \psi, b)$ has successors~$(\Pi[\pi\mapsto \rho],\psi, b)$ for every path~$\rho$ of $\tsys$ starting in $\last(\Pi)$.
\end{itemize}

A play of the model checking game is a finite path through the graph, starting at the initial vertex and ending at a terminal vertex. It is winning for Verifier if the terminal vertex is winning for her. 
Note that the length of a play is bounded by~$2d$, where $d$ is the depth\footnote{The depth is the maximal nesting of quantifiers, Boolean connectives, and temporal operators.} of $\varphi$, as the formula is simplified during at least every other move.

A strategy~$\sigma$ for Verifier is a function mapping each of her vertices~$v$  to some successor of~$v$. 
A play~$v_0 \cdots v_k$ is consistent with $\sigma$, if $v_{k'+1} = \sigma(v_{k'})$ for every $0 \le k' < k$ such that $v_{k'}$ is a vertex of Verifier.
A straightforward induction shows that Verifier has a winning strategy for $\mcgame(\tsys,\varphi)$ if and only if $\tsys\models\varphi$.
 
Recall that every satisfiable \hyctlstar sentence has a model of cardinality~$\contcard$~(see \autoref{lemma:hyctl-model}). 
Thus, to place \hyctlstar satisfiability in $\Sigma_1^2$, we express, for a given natural number encoding a \hyctlstar formula~$\varphi$, the existence of the following type~$2$ objects (using suitable encodings):
\begin{itemize}
	\item A transition system~$\tsys$ of cardinality~$\contcard$. 
	\item A function~$\sigma$ from $V$ to $V$, where $V$ is the set of vertices of $\mcgame(\tsys,\varphi)$. Note that a single vertex of $V$ is a type~$1$ object. 
\end{itemize}
Then, we express that $\sigma$ is a strategy for Verifier, which is easily expressible using quantification over type~$1$ objects. 
Thus, it remains to express that $\sigma$ is winning by stating that every play (a sequence of type~$1$ objects of bounded length) that is consistent with $\sigma$ ends in a terminal vertex that is winning for Verifier. 
Again, we leave the tedious, but standard, details to the reader.
\end{proof}

\subsection{\texorpdfstring{\hyctlstar}{HyperCTL*} satisfiability is \texorpdfstring{$\Sigma^2_1$}{Sigma21}-hard}
\label{subsec:hyctllb}

Next, we prove a matching lower bound.
We first describe a satisfiable \hyctlstar sentence $\phiset$ that does not have any model of
cardinality less than $\cont$ (more precisely, the initial vertex must have
uncountably many successors), thus matching the upper bound from
\autoref{lemma:hyctl-model}.
We construct $\phiset$ with one particular model~$\Kset$ in mind,
defined below, though it also has other models.

The idea is that we want all possible subsets of $A \subseteq \Nat$ to
be represented in $\Kset$ in the form of paths
$\rho_A$ such that $\rho_A(i)$ is labelled by $1$ if $i \in A$, and by $0$ otherwise.
By ensuring that the first vertices of these paths are pairwise distinct, we obtain the desired lower bound on the cardinality.
We express this in \hyctlstar as follows: First, we express that there is a part of the model (labelled by $\fbt$) where every reachable vertex has two successors, one labelled with $0$ and one labelled with $1$, i.e.\ the unravelling of this part contains the full binary tree.
Thus, this part has a path~$\rho_A$ 
as above for every subset~$A$, but their initial vertices are not necessarily distinct.
Hence, we also express that there is another part (labelled by $\pset$) that contains a copy of each path in the $\fbt$-part, and that these paths indeed start at distinct successors of the initial vertex.

\begin{figure}
    \centering
\def\dist{2.0cm}
\begin{tikzpicture}[->,
>=stealth', 
level/.style={sibling distance = 3cm/#1, level distance = \dist}, 
scale=0.7,
transform shape,
grow=left,thick]

\node (init) [state] {\phantom{0}}
child {
    node (t0) [state] {0} 
    child {
        node (t00) [state] {0} 
        child {edge from parent[dashed] }
        child {edge from parent[dashed] }
    }
    child {
        node (t01) [state] {1} 
        child {edge from parent[dashed] }
        child {edge from parent[dashed] }
    }
}
child {
    node (t1) [state] {1} 
    child {
        node (t10) [state] {0} 
        child {edge from parent[dashed] }
        child {edge from parent[dashed] }
    }
    child {
        node (t11) [state] {1} 
        child {edge from parent[dashed] }
        child {edge from parent[dashed] }
    }
}
;

{\color{red}
\node[state,right=6cm of t00] (s00) {0}; 
\node[state,right=1cm of s00] (s000) {0}; 
\node[state,right=1cm of s000] (s0000) {0}; 
\draw (init) -- (s00);
\draw (s00) -- (s000);
\draw (s000) -- (s0000);
\draw[dotted] (s0000) -- +(\dist,0);

\draw[dotted] (init) -- (2.5,1.33);
\draw[-,dotted] (2.5,1.33) -- +(5,0);
\draw[dotted] (init) -- (2.5,1.66);
\draw[-,dotted] (2.5,1.66) -- +(5,0);

\draw[dotted] (init) -- (2.5,0.25);
\draw[-,dotted] (2.5,0.25) -- +(5,0);

\draw[dotted] (init) -- (2.5,-0.25);
\draw[-,dotted] (2.5,-0.25) -- +(5,0);

\draw[dotted] (init) -- (2.5,-1.33);
\draw[-,dotted] (2.5,-1.33) -- +(5,0);
\draw[dotted] (init) -- (2.5,-1.66);
\draw[-,dotted] (2.5,-1.66) -- +(5,0);

\node[state,right=6cm of t01] (s01) {0}; 
\node[state,right=1cm of s01] (s010) {1}; 
\draw (init) -- (s01);
\draw (s01) -- (s010);
\draw[dotted] (s010) -- +(3.5cm,0);

\node[state,right=6cm of t10] (s10) {1}; 
\node[state,right=1cm of s10] (s100) {0}; 
\draw (init) -- (s10);
\draw (s10) -- (s100);
\draw[dotted] (s100) -- +(3.5,0);

\node[state,right=6cm of t11] (s11) {1}; 
\node[state,right=1cm of s11] (s111) {1}; 
\node[state,right=1cm of s111] (s1111) {1}; 
\draw (init) -- (s11);
\draw (s11) -- (s111);
\draw (s111) -- (s1111);
\draw[dotted] (s1111) -- +(\dist,0);

}
\draw[->] (0,1cm) -- (init);

\end{tikzpicture}
    \caption{A depiction of $\Kset$. Vertices in black (on the left including the initial vertex) are labelled by $\fbt$, those in red (on the right, excluding the initial vertex) are labelled by $\pset$.
}%

    \label{fig:phiset}
\end{figure}

We let $\Kset = (V_\cont,E_\cont,t_\varepsilon,\lambda_\cont)$ (see \autoref{fig:phiset}), where
\begin{itemize}
	\item $V_\cont = \{t_u \mid u \in \{0,1\}^\ast\}
      \cup \{s^i_A \mid i \in \Nat \land A \subseteq \Nat\}$,
      \item $E_\cont
    = \{(t_u,t_{u0}),(t_u,t_{u1}) \mid u \in \{0,1\}^\ast\} \cup {} 
      \{ (t_\varepsilon, s^0_A) \mid A \subseteq \Nat \} \cup
      \{ (s^i_A,s^{i+1}_A) \mid A \subseteq \Nat, i \in \Nat \}$,
      \item and the labelling~$\lambda_\cont$ is defined as 
      \begin{itemize}
      	\item $\lambda_\cont(t_\varepsilon)  = \{\fbt\}$
      	\item $\lambda_\cont(t_{u \cdot 0}) = \{\fbt,0\}$
      	\item $\lambda_\cont(t_{u \cdot 1}) = \{\fbt,1\}$, and 
		\item $\lambda_\cont(s^i_A) = \begin{cases}
    \{\pset,0\} & \text{if } i \notin A, \\
    \{\pset,1\} & \text{if } i \in A.
  \end{cases}$      	
      \end{itemize} 
\end{itemize}

\begin{lem}\label{lem:phiset}
There is a satisfiable \hyctlstar sentence~$\phiset$ that has only models of cardinality at least $\contcard$.
\end{lem}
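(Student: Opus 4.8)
The plan is to build $\phiset$ over the propositions $\{\fbt,\pset,0,1\}$ as a conjunction $\phiset = \phi_{\mathit{wf}} \wedge \phifb \wedge \phi_{\mathit{copy}} \wedge \phi_{\mathit{uniq}}$, engineered so that $\Kset$ is a model and so that \emph{every} model is forced to attach a distinct successor of the initial vertex to each subset of $\nats$. A well-formedness conjunct $\phi_{\mathit{wf}}$ (a $\forall\pi.\ \G(\cdots)$ sentence) will ensure that every $\fbt$-vertex other than the root and every $\pset$-vertex carries exactly one of the labels $0,1$, so that an infinite path through either part determines a subset of $\nats$ by reading its $0/1$-labels; the root carries neither. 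The three remaining conjuncts implement, respectively, a ``source'' of all subsets, a transfer of that source into the $\pset$-part, and a constraint that makes the transfer injective on entry vertices.

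First I would force the $\fbt$-part to contain the full binary tree. Exploiting that a freshly quantified path starts at $\last(\Pi)$, i.e.\ the current position of the most recently used variable, the sentence
\[
\phifb = (\forall \pi.\ \fbt_\pi) \wedge \forall \pi.\ \G\big(\fbt_\pi \rightarrow (\exists \pi'.\ \X(\fbt_{\pi'} \wedge 0_{\pi'})) \wedge (\exists \pi''.\ \X(\fbt_{\pi''} \wedge 1_{\pi''}))\big)
\]
says that the root is labelled $\fbt$ and that every reachable $\fbt$-vertex has both a $0$- and a $1$-labelled $\fbt$-successor (the $\forall\pi.\ \G(\cdots)$ ranges over all reachable $\fbt$-vertices, since each vertex lies on some infinite path). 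Hence the unravelling of the $\fbt$-part realises every sequence in $\{0,1\}^\omega$ along a path staying in $\fbt$, so every $A \subseteq \nats$ is encoded by some $\fbt$-path $\rho_A$ from the root. Next, a copy constraint transfers each such path into the $\pset$-part:
\[
\phi_{\mathit{copy}} = \forall \pi.\ \Big(\G\fbt_\pi \rightarrow \exists \pi'.\ \big(\X\G\pset_{\pi'} \wedge \X\G(0_\pi \leftrightarrow 0_{\pi'})\big)\Big),
\]
where $\pi'$ starts at the root (there $\last$ is the root) and enters $\pset$ at its first step. Thus for every $A$ there is a $\pset$-path entering at some successor $v_A$ of the root and encoding $A$; the $\X$'s absorb the one-step offset between the two encodings.

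The step I expect to be the main obstacle is forcing the entry vertices $v_A$ to be pairwise distinct, since HyperCTL$^*$ cannot compare vertices for equality. I would resolve this again through the $\last$ semantics: after an $\X$, the most recently used variable sits at its second vertex, so a subsequently quantified path is compelled to start there. Accordingly,
\[
\phi_{\mathit{uniq}} = \forall \pi'.\ \Big(\X\G\pset_{\pi'} \rightarrow \X\big(\forall \pi''.\ (\G\pset_{\pi''} \rightarrow \G(0_{\pi''} \leftrightarrow 0_{\pi'}))\big)\Big)
\]
states that any two $\pset$-paths sharing their entry vertex encode the same subset. In $\Kset$ this holds because each $s^0_A$ begins a single deterministic $\pset$-line; together with the easy checks for $\phifb$, $\phi_{\mathit{copy}}$, $\phi_{\mathit{wf}}$, this gives $\Kset \models \phiset$ and hence satisfiability.

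Finally I would run the counting argument in an arbitrary model $\tsys \models \phiset$. For distinct $A \neq A'$ the copy-paths $\pi'_A,\pi'_{A'}$ differ in some $0$-label position (as $\rho_A,\rho_{A'}$ realise different $\{0,1\}^\omega$-sequences, matched through $\phi_{\mathit{copy}}$). If their entry vertices agreed, $v_A = v_{A'}$, then applying $\phi_{\mathit{uniq}}$ at $\pi' = \pi'_A$ forces every $\pset$-path from $v_A$ — in particular the suffix of $\pi'_{A'}$ from position $1$ — to encode $A$; but that suffix encodes $A'$, a contradiction. Hence $A \mapsto v_A$ is injective into the successors of the initial vertex, so $\size{\tsys} \ge \size{\pow{\nats}} = \contcard$, matching the bound of \autoref{lemma:hyctl-model}. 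The delicate points beyond the equality-by-$\last$ trick are keeping the one-step offsets between the $\fbt$- and $\pset$-encodings consistent across $\phi_{\mathit{copy}}$ and $\phi_{\mathit{uniq}}$, and confirming that each of these conjuncts is a genuine HyperCTL$^*$ sentence (every temporal operator lies within the scope of a path quantifier).
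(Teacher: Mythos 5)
Your proposal is correct and follows essentially the same route as the paper's proof: force the unravelling of the $\fbt$-part to contain the full binary tree, copy every branch into a $\pset$-labelled path starting at a successor of the initial vertex, and enforce that all $\pset$-paths from a common vertex carry the same $\{0,1\}$-labelling so that distinct subsets get distinct entry vertices. The individual conjuncts differ only in minor syntactic choices (e.g.\ you place the $\G\pset$ guards inside $\phi_{\mathit{copy}}$ and $\phi_{\mathit{uniq}}$ where the paper uses separate structural separation conditions), and your counting argument matches the paper's.
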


\begin{proof}
  The formula $\phiset$ is defined as the conjunction of the formulas below:
  \begin{enumerate}
  \item The label of the initial vertex is $\{\fbt\}$ and the labels of non-initial vertices are $\set{\fbt, 0}$, $\set{\fbt, 1}$, $\set{\pset, 0}$, or $\set{\pset, 1}$:
    \[
      \forall \pi.\ ( \fbt_\pi \land \lnot 0_\pi \land \lnot 1_\pi \land
      \lnot \pset_\pi )\wedge \X \G \big(
      (\pset_\pi \leftrightarrow \lnot \fbt_\pi) \land 
      (0_\pi \leftrightarrow \lnot 1_\pi)      \big)
    \]

  \item All $\fbt$-labelled vertices have a successor with label $\{\fbt,0\}$
    and one with label $\{\fbt,1\}$, and all $\fbt$-labelled vertices that are additionally labelled by $0$ or $1$ have no $\pset$-labelled successor:
    \[
      \forall \pi.\ \G\big(\fbt_\pi \rightarrow (
      (\exists \pi_0.\ \X (\fbt_{\pi_0} \land 0_{\pi_0}))
      \land (\exists \pi_1.\ \X (\fbt_{\pi_1} \land  1_{\pi_1})) \land
      ( (0_\pi \vee 1_{\pi}) \rightarrow \forall \pi'.\ \X \fbt_{\pi'}))
      \big)
    \]
   \item From $\pset$-labeled vertices, only $\pset$-labeled vertices are reachable:  
   \[\forall \pi.\ \G (\pset \rightarrow \G \pset) \] 
    
  \item For every path of $\fbt$-labelled vertices starting at a successor
    of the initial vertex, there is a path of $\pset$-labelled vertices
    (also starting at a successor of the initial vertex) with the same
    $\{0,1\}$ labelling:
    \[
      \forall \pi.\  \big( (\X \fbt_\pi) \rightarrow
      \exists \pi'.\ \X( \pset_{\pi'} \land \G(0_\pi \leftrightarrow 0_{\pi'}))
      \big)
    \]
  \item Any two paths starting in the same $\pset$-labelled vertex have the same sequence of labels:
    \[
      \forall \pi.\ \G \big(\pset_\pi \rightarrow
      \forall \pi'.\ \G(0_\pi \leftrightarrow 0_{\pi'})
      \big) \, .
    \]
  \end{enumerate}

  It is easy to check that $\Kset \models \phiset$.
  Note however that it is not the only model of $\phiset$: for instance,
  some paths may be duplicated, or merged after some steps if their label
  sequences share a common suffix.
  So, consider an arbitrary transition system $\tsys = (V,E,v_\initmark, \lambda)$
  such that $\tsys \models \phiset$.
  By condition 2, for every set $A \subseteq \Nat$,  there is a path
  $\rho_A$ starting at a successor of $v_\initmark$
  such that $\lambda(\rho_A(i)) = \{\fbt,1\}$ if $i \in A$ and
  $\lambda(\rho_A(i)) = \{\fbt,0\}$ if $i \notin A$.
  Condition 3 implies that there is also a $\pset$-labelled path $\rho'_A$
  such that $\rho'_A$ starts at a successor of $v_\initmark$, and has the same
  $\{0,1\}$ labelling as $\rho_A$.
  Finally, by condition 4, if $A \neq B$ then $\rho_A'(0) \neq \rho_{B}'(0)$.
  
  So, the initial vertex has at least as many successors as there are subsets of $\nats$, i.e., at least $\cont$ many.
\end{proof}

Before moving to the proof that \hyctlstar satisfiability is $\Sigma^2_1$-hard,
we introduce one last auxiliary formula that will be used in the reduction,
showing that addition and multiplication can be defined in \hyctlstar,
and in fact even in \hyltl, as follows:
  Let $\AP = \{\argl,\argr,\res,\add,\mult\}$ and let $\Top$ be the set of all traces $t \in {(2^{\AP})}^\omega$ such that 
  \begin{itemize}
  \item there are unique $n_1,n_2,n_3 \in \nats$ with $\argl \in t(n_1)$,
    $\argr \in t(n_2)$, and $\res \in t(n_3)$, and
  \item either $\add \in t(n)$ and $\mult \notin t(n)$  for all $n$ and $n_1+n_2 = n_3$,
    or $\mult \in t(n)$ and $\add \notin t(n)$  for all $n$ and $n_1 \cdot n_2 = n_3$.
  \end{itemize}

\begin{lem}\label{lem:phiop}
  There is a \hyltl sentence $\phiop$ which has $\Top$ as unique model.
\end{lem}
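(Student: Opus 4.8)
The plan is to build $\phiop$ as a conjunction of several \hyltl subformulas (which, as noted in the paper's earlier footnote, can afterwards be put into prenex normal form) that together force every model $T$ to equal $\Top$. I split the work between $\add$-traces and $\mult$-traces, and the crucial idea is to \emph{define multiplication using addition}, exploiting that $\Top$ already contains every correct addition fact. First come the \emph{shape} constraints, which are purely universal: one subformula asserts that along each trace either $\add$ holds everywhere and $\mult$ nowhere, or $\mult$ holds everywhere and $\add$ nowhere, and another asserts that each of $\argl$, $\argr$, $\res$ holds at exactly one position, using the pattern $(\lnot \argl_\pi)\,\U\,(\argl_\pi \wedge \X\G\lnot \argl_\pi)$ from \autoref{lemma:hyltl_hardness}. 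Since a trace over $\AP$ satisfying these constraints is completely determined by its $\add$/$\mult$ flag together with the triple $(n_1,n_2,n_3)$ of marker positions, it remains only to force the occurring triples to be exactly the correct ones.

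The building blocks are small relational formulas tying the markers of two or three traces. I abbreviate by $\mathrm{shift}_2(\pi,\pi')$ the formula $\G(\argl_\pi\leftrightarrow\argl_{\pi'})\wedge \lnot\argr_{\pi'}\wedge\G(\argr_\pi\leftrightarrow\X\argr_{\pi'})\wedge \lnot\res_{\pi'}\wedge\G(\res_\pi\leftrightarrow\X\res_{\pi'})$, which holds exactly when $\pi'$ has the same $\argl$-position as $\pi$ while its $\argr$- and $\res$-positions are those of $\pi$ increased by one; $\mathrm{shift}_1(\pi,\pi')$ is defined symmetrically (shifting $\argl$ and $\res$). I also use an \emph{addition certificate}: a formula stating that some $\add$-trace $\sigma$ has its $\argl$-, $\argr$- and $\res$-positions pinned (via conjuncts such as $\G(\res_{\pi'}\leftrightarrow\argl_\sigma)$) to prescribed marker positions of other traces, so that the \emph{enforced} soundness of $\sigma$ transfers an identity of the form $p+q=r$ to those positions.

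For addition I impose two things. \emph{Completeness}: a seed $\exists\pi.\ \add_\pi\wedge\argl_\pi\wedge\argr_\pi\wedge\res_\pi$ (all markers at position $0$, encoding $0+0=0$), together with the closures $\forall\pi.\ \add_\pi\to\exists\pi'.\,(\add_{\pi'}\wedge\mathrm{shift}_1(\pi,\pi'))$ and the analogue for $\mathrm{shift}_2$; by induction this forces every triple $(a,b,a+b)$ into $T$. \emph{Soundness}: $\forall\pi.\ \add_\pi\to\big[(\argr_\pi\wedge\G(\argl_\pi\leftrightarrow\res_\pi))\vee\exists\pi'.\,(\add_{\pi'}\wedge\mathrm{shift}_2(\pi',\pi))\big]$. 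Soundness is the heart of the argument and the main obstacle. Given an $\add$-trace with $n_2>0$ it produces a \emph{present} predecessor with $n_2$ and $n_3$ each decremented, to which the same universal constraint again applies; since $n_2\in\nats$ strictly decreases, this is a \emph{finite} descent that must reach the base disjunct $n_2=0$, which forces $n_1=n_3$ and hence $n_1+n_2=n_3$. The boundary cases take care of themselves: if a trace had $n_3<n_2$, forming the required predecessor would demand a marker at a negative position, the existential disjunct becomes unsatisfiable, and so the trace is ruled out of every model. Together, soundness yields $T_{\add}\subseteq\Top_{\add}$ and completeness yields $\Top_{\add}\subseteq T_{\add}$.

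For multiplication the inductive step is not a constant shift, and this is where the addition certificates are needed. Soundness for $\mult$ again descends on $n_2$, but the predecessor of $(n_1,n_2,n_3)$ is $(n_1,n_2-1,n_3-n_1)$, and the passage $n_3=(n_3-n_1)+n_1$ is witnessed by an $\add$-trace whose $\argl$-, $\argr$- and $\res$-positions are tied by the certificate to $\res_{\pi'}$, $\argl_\pi$ and $\res_\pi$; the already-established soundness of $\add$-traces makes the certificate meaningful, and the descent (lowering $n_3$ by $n_1$ at each step) bottoms out at the base case $n_2=0\Rightarrow n_3=0$, giving $n_3=n_1\cdot n_2$. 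Completeness for $\mult$ uses a seed for $0\cdot 0=0$, a $\mathrm{shift}_1$-closure \emph{restricted} to the column $n_2=0$ (keeping $n_3=0$) to generate all $(n_1,0,0)$, and a step-closure mirroring the soundness step to pass from $(n_1,n_2,n_3)$ to $(n_1,n_2+1,n_3+n_1)$. Letting $\phiop$ be the conjunction of all these formulas over both $\add$ and $\mult$, soundness gives $T\subseteq\Top$ and completeness gives $\Top\subseteq T$ for every model $T$, while $\Top\models\phiop$ is a direct check; hence $\Top$ is the unique model. What remains is the routine verification of the relational subformulas and of the two inductions, which I would leave to the reader.
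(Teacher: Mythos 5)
Your proposal is correct and follows essentially the same strategy as the paper's proof: shape constraints forcing unique marker positions and a global $\add$/$\mult$ flag, completeness via a seed trace encoding $0$-arguments plus inductive closure under incrementing either argument, and soundness via a well-founded descent on one argument that bottoms out at a base case, with the multiplication step certified by an auxiliary addition trace whose own soundness has already been enforced. The only differences are cosmetic — you descend on $n_2$ where the paper descends on $n_1$, and your completeness closures carry the correct result position along explicitly while the paper leaves it to the correctness constraints to pin down — so the argument goes through exactly as in the paper.
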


\begin{proof}
Consider the conjunction of the following \hyltl sentences:
\begin{enumerate}

	\item For every trace~$t$ there are unique $n_1,n_2,n_3 \in \nats$ with $\argl \in t(n_1)$,
    $\argr \in t(n_2)$, and $\res \in t(n_3)$:
\[
		\forall \pi.\ \bigwedge_{a \in \set{\argl,\argr,\res}}(\neg a_\pi)\U (a_\pi \wedge \X\G\neg a_\pi)
\]

	\item Every trace~$t$ satisfies either $\add \in t(n)$ and $\mult\notin t(n)$ for all $n$ or $\mult \in t(n)$ and $\add \notin t(n)$ for all $n$:
	\[
	\forall \pi.\ \G(\add_\pi\wedge\neg\mult_\pi) \vee \G(\mult_\pi\wedge\neg\add_\pi)
	\]

\end{enumerate}
	In the following, we only consider traces satisfying these formulas, as all others are not part of a model. Thus, we will speak of addition traces (if $\add$ holds) and multiplication traces (if $\mult $ holds).
	Furthermore, every trace encodes two unique arguments (given by the positions $n_1$ and $n_2$ such that $\argl\in t(n_1)$ and $\argr\in t(n_2)$) and a unique result (the position~$n_3$ such that $\res\in t(n_3)$). 
	
Next, we need to express that all possible arguments are represented in a model, i.e.\ for every $n_1$ and every $n_2$ there are two traces~$t$ with $\argl \in t(n_1)$ and $\argr \in t(n_2)$, one addition trace and one multiplication trace.	
	We do so inductively. 
	\begin{enumerate}
\setcounter{enumi}{2}
	
	\item\label{itemcomplstart} 
     There are two traces with both arguments being zero (i.e.\ $\argl$ and $\argr$ hold in the first position), one for addition and one for multiplication:
\[\bigwedge_{a \in \set{\add,\mult}}\exists \pi.\ a_\pi \wedge \argl_\pi \wedge \argr_\pi\]

\item\label{itemcomplstep} Now, we express that for every trace, say encoding the arguments~$n_1$ and $n_2$, the argument combinations~$(n_1+1, n_2)$ and $(n_1, n_2+1)$ are also represented in the model, again both for addition and multiplication (here we rely on the fact that either $\add$ or $\mult$ holds at every position, as specified above):
\begin{align*}
\forall \pi.\ \exists\pi_1, \pi_2.\ \left( \bigwedge_{i \in\set{1,2}} \add_\pi \leftrightarrow \add_{\pi_i} \right)\wedge 
&\F(\argl_\pi \wedge \X\argl_{\pi_1}) \wedge \F(\argr_\pi \wedge \argr_{\pi_1}) \mathrel{\wedge}\\
&
\F(\argl_\pi \wedge \argl_{\pi_2}) \wedge \F(\argr_\pi \wedge \X\argr_{\pi_2})	
\end{align*}  
\end{enumerate}	
Every model of these formulas contains a trace representing each possible combination of arguments, both for addition and multiplication. 

To conclude, we need to express that the result in each trace is correct. We do so by capturing the inductive definition of addition in terms of repeated increments (which can be expressed by the next operator) and the inductive definition of multiplication in terms of repeated addition.
Formally, this is captured by the next formulas:	
	\begin{enumerate}
	  \setcounter{enumi}{4}
	\item\label{itemcorrfirst} For every trace~$t$: if $\set{\add,\argl} \subseteq t(0)$ then $\argr$ and $\res$ have to hold at the same position (this captures $0 + n = n$):
	\[
	\forall \pi.\ (\add_\pi\wedge \argl_\pi ) \rightarrow \F(\argr_\pi \wedge \res_\pi)
	\]
	\item For each trace~$t$ with $\add\in t(0)$, $\argl \in t(n_1)$,
    $\argr \in t(n_2)$, and $\res \in t(n_3)$ such that $n_1 > 0$ there is a trace~$t'$ such that $\add\in t'(0)$, $\argl \in t'(n_1-1)$,
    $\argr \in t'(n_2)$, and $\res \in t'(n_3-1)$ (this captures $n_1 + n_2 = n_3 \Leftrightarrow n_1-1 + n_2 = n_3-1$ for $n_1 > 0$):
    \\[.5em]$\displaystyle
    \forall \pi.\ \exists\pi'.\ (\add_\pi \wedge \neg \argl_\pi) \rightarrow\\ \phantom{x}\hfill \left(\add_{\pi'} \wedge  \F( \X\argl_{\pi} \wedge \argl_{\pi'} ) \wedge \F(\argr_\pi \wedge \argr_{\pi'}) \wedge \F(\X\res_{\pi} \wedge \res_{\pi'}) \right)$ \\[-.5em]
    \item For every trace~$t$: if $\set{\mult,\argl} \subseteq t(0)$ then also $\res \in t(0)$ (this captures $0 \cdot n = 0$):
	\[
	\forall \pi.\ (\mult_\pi\wedge \argl_\pi ) \rightarrow  \res_\pi
	\]
	\item\label{itemcorrlast} Similarly, for each trace~$t$ with $\mult \in t(0)$, $\argl \in t(n_1)$,
    $\argr \in t(n_2)$, and $\res \in t(n_3)$ such that $n_1 > 0$ there is a trace~$t'$ such that $\mult \in t'(0)$, $\argl \in t'(n_1-1)$,
    $\argr \in t'(n_2)$, and $\res \in t'(n_3-n_2)$.
   The latter requirement is expressed by the existence of a trace~$t''$ with $\add\in t''(0)$, $\argr \in t''(n_2)$, $\res \in t''(n_3)$, and $\argl$ holding in $t''$ at the same time as $\res$ in $t'$, which implies $\res \in t'(n_3-n_2)$. Altogether, this captures $n_1 \cdot n_2 = n_3 \Leftrightarrow (n_1-1) \cdot n_2 = n_3-n_2$ for $n_1 > 0$.
    \begin{align*}
    \forall \pi.\ \exists\pi',\pi''.\ &(\mult_\pi \wedge  \neg \argl_\pi) \rightarrow   \left(\mult_{\pi'} \wedge \add_{\pi''} \wedge\right.  \\ 
&    
    \F(\X\argl_{\pi} \wedge \argl_{\pi'} ) \wedge 
    \F(\argr_\pi \wedge \argr_{\pi'} \wedge \argr_{\pi''}) \wedge \\ 
&  \left.    \F(\res_{\pi'} \wedge \argl_{\pi''}) \wedge 
  \F(\res_{\pi} \wedge \res_{\pi''}
    )\right)
    \end{align*}
\end{enumerate}

Now, $\Top$ is a model of the conjunction~$\phiop$ of these eight formulas. Conversely, every model of $\phiop$ contains all possible combinations of arguments (both for addition and multiplication) due to Formulas~(\ref{itemcomplstart}) and (\ref{itemcomplstep}).
Now, Formulas~(\ref{itemcorrfirst}) to (\ref{itemcorrlast}) ensure that the result is \emph{correct} on these traces. Altogether, this implies that $\Top$ is the unique model of $\phiop$.
\end{proof}

To establish $\Sigma_1^2$-hardness, we give an encoding of formulas of
existential third-order arithmetic into \hyctlstar, i.e.\ every formula of the form~$\exists \xt_1.\ \ldots \exists \xt_n.\ \psi$ where
  $\xt_1, \ldots, \xt_n$ are third-order variables
  and $\psi$ is a sentence of second-order arithmetic can be translated into a \hyctlstar sentence.

As explained in \autoref{sec:definitions}, we can (and do for the remainder
of the section) assume that first-order (type 0) variables range over natural
numbers,
second-order (type 1) variables range over sets of natural numbers,
and third-order (type~$2$) variables range over sets of sets of natural numbers.

\begin{lem}\label{lem:hyctlstar-reduction}
One can effectively translate sentences~$\varphi$ of existential third-order arithmetic into \hyctlstar sentences~$\varphi'$ such that
  $(\Nat, +, \cdot, <,\in)$ is a model of $\varphi$ if and only if
  $\varphi'$ is satisfiable.
\end{lem}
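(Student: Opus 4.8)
The plan is to reuse the two gadgets built above: $\phiset$ from \autoref{lem:phiset}, whose every model must contain, for each $A \subseteq \Nat$, a $\pset$-labelled path encoding $A$ through its $\set{0,1}$-labelling, and $\phiop$ from \autoref{lem:phiop}, which pins down addition and multiplication. The guiding correspondence, already sketched in the introduction, is: a natural number $n$ is represented by a $\pset$-path encoding the singleton $\set{n}$; a set of naturals $Y$ (a type~$1$ object) is represented by the $\pset$-path encoding $Y$; and each existential type~$2$ variable $\xt_i$ (a subset of $\pow{\Nat}$) is represented by the \emph{choice of the model}, namely by marking each $\pset$-path with a fresh proposition $X_i$ exactly when the set it encodes is to belong to $\xt_i$. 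Since satisfiability asks whether \emph{some} model exists, this labelling realises the outermost existential third-order prefix $\exists \xt_1 \cdots \exists \xt_n$.

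First I would assemble a structural backbone $\phiz$ as the conjunction of (a copy of) $\phiset$, a relativised copy of $\phiop$ placed on a disjoint part of the model marked by a fresh proposition, and, for each $1 \le i \le n$, a consistency condition forcing any two $\pset$-paths with the same $\set{0,1}$-labelling to carry the same $X_i$-label. The first conjunct guarantees that every subset of $\Nat$, and in particular every singleton, occurs as a $\pset$-path, so the type~$0$ and type~$1$ universes are fully represented; the $\phiop$-part provides, for all $n_1,n_2,n_3$, a trace witnessing $n_1+n_2=n_3$ or $n_1 \cdot n_2 = n_3$ (as in $\Top$); and the consistency condition makes the induced family $\xt_i \subseteq \pow{\Nat}$ well defined, independently of which representative path is chosen.

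Second, I would translate the second-order matrix $\psi$ recursively into a \hyctlstar formula $\psi'$ in which every quantifier is evaluated at the initial vertex, so that $\last(\Pi)$ is always the root and each new path quantifier ranges over all paths from the root. A type~$0$ quantifier becomes a path quantifier guarded by ``$\pi$ is a $\pset$-path encoding a singleton'' (exactly one position after the root carries $1$), a type~$1$ quantifier becomes one guarded by ``$\pi$ is a $\pset$-path''; the atoms are rendered as follows: $n \in Y$ by $\F(1_{\pi_n} \wedge 1_{\pi_Y})$, the third-order membership $Y \in \xt_i$ by reading $(X_i)_{\pi_Y}$ at the root, $x < y$ by $\F(1_{\pi_x} \wedge \X \F 1_{\pi_y})$, and $x+y=z$ (resp.\ $x \cdot y = z$) by asserting the existence of an addition (resp.\ multiplication) trace in the $\phiop$-part whose $\argl, \argr, \res$ positions coincide with the unique $1$-positions of $\pi_x, \pi_y, \pi_z$. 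I then set $\varphi' = \phiz \land \psi'$, noting that $\varphi'$ uses only finitely many extra propositions $X_1,\ldots,X_n$ and is effectively computable from $\varphi$.

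For correctness, the forward direction builds from witnessing $\xt_1,\dots,\xt_n$ a model obtained by extending $\Kset$ with the arithmetic traces of $\Top$ and the labelling ``$A$ carries $X_i$ iff $A \in \xt_i$''; an induction on the structure of $\psi$ then shows $\psi'$ holds, so $\varphi'$ is satisfiable. Conversely, from any $\tsys \models \varphi'$ one reads off $\xt_i$ from the (consistent) $X_i$-labels and shows by the same induction that $(\Nat,+,\cdot,<,\in) \models \varphi$. I expect the main obstacle to be precisely the faithfulness of this recursion in the ``only if'' direction: a model of $\phiz$ may contain spurious paths and several copies of each set, so one must argue that the guarded path quantifiers still range over exactly the intended universes (every subset is present by $\phiset$, every $\pset$-path encodes some subset, and the consistency conditions ensure that membership in $\xt_i$ depends only on the encoded set, not on the chosen representative), and that the position-matching for the arithmetic atoms aligns the singleton encodings with the $\phiop$-traces despite the uniform one-step offset caused by all paths starting at the root.
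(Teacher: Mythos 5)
Your proposal matches the paper's proof essentially step for step: the same use of $\phiset$ to guarantee a $\pset$-path for every subset of $\Nat$, the same encoding of type~$0$ and type~$1$ variables as singleton-paths and arbitrary $\set{0,1}$-paths, the same fresh propositions on path-representatives (with a consistency condition across representatives of the same set) to realise the existential third-order prefix through the choice of model, the same relativised $\phiop$-traces for the arithmetic atoms, and the same two-directional correctness argument. The only cosmetic difference is that the paper reads the third-order membership proposition at the second vertex of the quantified path (via $\X$) rather than ``at the root,'' an offset you already flag yourself, so there is no gap.
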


\begin{proof}  
  The idea of the proof is as follows. We represent sets of natural numbers
  as infinite paths with labels in $\{0,1\}$, so that quantification over
  sets of natural numbers in $\psi$ can be replaced by \hyctlstar path
  quantification.
    First-order quantification is handled in the same way, but using
  paths where exactly one vertex is labelled $1$.
In particular we encode first- and second-order variables~$x$ of $\varphi$ as path variables~$\pix{x}$ of $\varphi'$.
  For this to work, we need to make sure that every possible set
  has a path representative in the transition system (possibly several
  isomorphic ones).
  This is where formula $\phiset$ defined in \autoref{lem:phiset} is used.
  For arithmetical operations, we rely on the formula $\phiop$ from
  \autoref{lem:phiop}.
  Finally, we associate with every existentially quantified third-order
  variable $\xt_i$ an atomic proposition $\ai i$, so that
  for a second-order variable $\ys$, the atomic formula~$\ys \in \xt_i$ is interpreted as the
  atomic proposition $\ai i$ being true on the second vertex of $\pix y$.
  This is all explained in more details below.
  \smallskip
  
  Let $\varphi = \exists \xt_1.\ \ldots \exists \xt_n.\ \psi$ where
  $\xt_1, \ldots, \xt_n$ are third-order variables
  and $\psi$ is a formula of second-order arithmetic. 
  We use the atomic propositions 
  \[\AP = \{a_1,\ldots,a_n,0,1,\pset,\fbt,\argl,\argr,\res,\mult,\add\}.\]
  Given an interpretation
  $\inter : \{\xt_1,\ldots,\xt_n\} \rightarrow 2^{(2^\Nat)}$
  of the third-order variables of $\varphi$, we denote by $\tsysI$ the
  transition system over $\AP$ obtained as follows:
  We start from $\Kset$, and extend it with an $\{a_1,\ldots,a_n\}$-labelling
  by setting $\ai i \in \lambda(\rho_A(0))$ if $A \in \inter(\xt_i)$;
  then, we add to this transition system all traces in $\Top$ as disjoint
  paths below the initial vertex.

  From the formulas $\phiset$ and $\phiop$ defined in Lemmas~\ref{lem:phiset}
  and \ref{lem:phiop}, it is not difficult to construct a formula
  $\phiz$ such that:
  \begin{itemize}
  \item For all $\inter : \{\xt_1,\ldots,\xt_n\} \rightarrow 2^{(2^\Nat)}$,
    the transition system $\tsysI$ is a model of $\phiz$.
  \item Conversely, in any model $\tsys = (V,E,v_\initmark, \lambda)$ of $\phiz$, the following conditions
    are satisfied:
  \begin{enumerate}
    \item
      For every path $\rho$ starting at a $\pset$-labelled successor of
      the initial vertex $v_\initmark$, the vertex~$\rho(0)$ has a label of the form $\lambda(\rho(0)) = \{\pset,b\} \cup \ell$
      with $b \in \{0,1\}$ and $\ell \subseteq \{\ai 1, \ldots, \ai n\}$,
      and every vertex $\rho(i)$ with $i > 0$ has a label
      $\lambda(\rho(i)) = \{\pset,0\}$ or $\lambda(\rho(i)) = \{\pset,1\}$.
      
    \item For every $A \subseteq \Nat$, there exists a $\pset$-labelled path
      $\rho_A$ starting at a successor of $v_\initmark$ 
      such that $1 \in \lambda(\rho_A(i))$ if  $i \in A$,
      and $0 \in \lambda(\rho_A(i))$ if  $i \notin A$.
      Moreover, all such paths have the same $\{\ai 1,\ldots,\ai n\}$ labelling;
      this can be expressed by the formula
      \[
        \forall \pi, \pi'.\
        \X\left( \Big(\G (\pset_\pi \land \pset_{\pi'} \land
        (1_\pi \leftrightarrow 1_{\pi'}))\Big) \rightarrow
        \bigwedge\nolimits_{\ax \in \{\ai 1, \ldots, \ai n\}}
        \ax_\pi \leftrightarrow \ax_{\pi'}\right)
        \, .
      \]
    \item For every path $\rho$ starting at an $\add$- or $\mult$-labelled
      successor of the initial vertex, the label sequence
      $\lambda(\rho(0)) \lambda(\rho(1)) \cdots$ of $\rho$ is in $\Top$.

    \item Conversely, for every trace $t \in \Top$, there exists a path
      $\rho$ starting at a successor of the initial vertex such that
      $\lambda(\rho(0)) \lambda(\rho(1)) \cdots = t$.
    \end{enumerate}
  \end{itemize}

  We then let $\varphi' = \phiz  \land \tr  \psi$, where $\tr  \psi$ is defined inductively from the second-order body $\psi$
  of $\varphi$ as follows:
  \begin{itemize}
  \item $\tr {\psi_1 \lor \psi_2} = \tr {\psi_1} \lor \tr {\psi_2}$.
  \item $\tr {\lnot \psi_1} = \lnot \tr {\psi_1}$.
  \item If $x$ ranges over sets of natural numbers,
    \[\tr {\exists x.\ \psi_1} =
    \exists \pix x.\ ((\X \pset_{\pix x}) \land \tr  {\psi_1}),\]
    and
    \[\tr {\forall x.\ \psi_1} =
    \forall \pix x.\ ((\X \pset_{\pix x}) \rightarrow \tr  {\psi_1}).\]

  \item If $x$ ranges over natural numbers, 
    \[\tr {\exists x.\ \psi_1} = \exists \pix x.\ ((\X \pset_{\pix x}) \land
    \X (0_{\pix x} \U (1_{\pix x} \land \X\G 0_{\pix x})) \land \tr  {\psi_1}),\]
    and
    \[\tr {\forall x.\ \psi_1} = \forall \pix x.\ ((\X \pset_{\pix x}) \land
    \X (0_{\pix x} \U (1_{\pix x} \land \X\G 0_{\pix x})) \rightarrow \tr  {\psi_1}).\] 
    Here, the subformula~$0_{\pix x} \U (1_{\pix x} \land \X\G 0_{\pix x})$ expresses that there is a single $1$ on the trace assigned to $\pix x$, i.e.\ the path represents a singleton set.
  \item If $\ys$ ranges over sets of natural numbers,
    $\tr {\ys \in \xt_i} = \X (\ai i)_{\pix \ys}$.
  \item If $\xo$ ranges over natural numbers and $\yt$ over sets of natural
    numbers, $\tr {\xo \in \yt} = \F(1_{\pix \xo} \land 1_{\pix \yt})$.
  \item $\tr {\xo < \yo} = \F(1_{\pix \xo} \land \X \F 1_{\pix \yo})$.
  \item $\tr {\xo + \yo = \zo} = \exists \pi.\ (\X \add_\pi) \land \F(\argl_\pi \land 1_{\pix \xo}) \land
    \F(\argr_{\pi} \land 1_{\pix \yo}) \land \F(\res_\pi \land 1_{\pix \zo})$.
    \item $\tr {\xo\cdot\yo = \zo} =
    \exists \pi.\ (\X \mult_\pi) \land \F(\argl_\pi \land 1_{\pix \xo}) \land
    \F(\argr_{\pi} \land 1_{\pix \yo}) \land \F(\res_\pi \land 1_{\pix \zo})$.
  \end{itemize}

  If $\psi$ is true under some interpretation $\inter$ of $\xt_1, \ldots, \xt_n$
  as sets of sets of natural numbers,
  then the transition system $\tsysI$ defined above is a model of $\varphi'$.
  Conversely, if $\tsys \models \varphi'$ for some transition system $\tsys$,
  then for all sets $A \subseteq \Nat$ there is a path $\rho_A$ matching $A$ in $\tsys$,
  and all such paths have the same $\{\ai 1,\ldots,\ai n\}$-labelling, so we
  can define an interpretation $\inter$ of $\xt_1,\ldots,\xt_n$ by taking
  $A \in \inter(\xt_i)$ if and only if $\ai i \in \lambda(\rho_A(0))$.
  Under this interpretation $\psi$ holds, and thus $\varphi$ is true, as first- and second-order quantification in   $(\Nat, +, \cdot, <,\in)$ is mimicked by path quantification in $\tsys$.
\end{proof}

Now, we have all the tools at hand to prove the lower bound on the \hyctlstar satisfiability problem.

\begin{lem}
  \hyctlstar satisfiability is $\Sigma^2_1$-hard.
\end{lem}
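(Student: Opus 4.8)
The plan is to reduce an arbitrary $\Sigma^2_1$ set to \hyctlstar satisfiability, using \autoref{lem:hyctlstar-reduction} as the engine of the reduction. Recall from \autoref{sec:definitions} that a set $P \subseteq \nats$ lies in $\Sigma^2_1$ precisely when it has the form
\[
  P = \set{x \in \nats \mid \exists \xt_1.\ \cdots \exists \xt_k.\ \psi(x, \xt_1, \ldots, \xt_k)},
\]
where $\xt_1, \ldots, \xt_k$ are type~$2$ variables and $\psi$ is an arithmetic formula with arbitrary quantification over type~$0$ and type~$1$ objects and a single free type~$0$ variable~$x$. Fix such a representation of a given $\Sigma^2_1$ set $P$; it suffices to exhibit a computable reduction from $P$ to \hyctlstar satisfiability.

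Given an input $x \in \nats$, I would first turn the defining formula into a closed sentence by eliminating its free variable. Since every fixed natural number is definable in first-order arithmetic, there is a first-order formula $\delta_x(y)$ that holds exactly when $y = x$, and I set
\[
  \varphi_x = \exists \xt_1.\ \cdots \exists \xt_k.\ \exists y.\ \big(\delta_x(y) \land \psi(y, \xt_1, \ldots, \xt_k)\big).
\]
This is a sentence of existential third-order arithmetic: its prefix is a block of existential type~$2$ quantifiers, and its body $\exists y.\ (\delta_x(y) \land \psi(y, \xt_1, \ldots, \xt_k))$ is a sentence of second-order arithmetic, as its only free variables are the type~$2$ parameters $\xt_1, \ldots, \xt_k$ bound by the prefix. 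By construction, $x \in P$ if and only if $(\Nat, +, \cdot, <, \in) \models \varphi_x$, and the map $x \mapsto \varphi_x$ is clearly computable, since $\delta_x$ is produced effectively from $x$.

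I would then apply \autoref{lem:hyctlstar-reduction} to $\varphi_x$, obtaining effectively a \hyctlstar sentence $\varphi_x'$ with the property that $(\Nat, +, \cdot, <, \in) \models \varphi_x$ if and only if $\varphi_x'$ is satisfiable. Composing the two effective maps yields a computable function $x \mapsto \varphi_x'$ satisfying
\[
  x \in P \quad\text{if and only if}\quad \varphi_x' \text{ is satisfiable},
\]
which is exactly a many-one reduction witnessing $\Sigma^2_1$-hardness of \hyctlstar satisfiability.

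The genuine difficulty of this result has already been discharged upstream: the main obstacle is \autoref{lem:hyctlstar-reduction}, whose proof relies on $\phiset$ (\autoref{lem:phiset}) forcing models to contain a distinct path for every subset of $\nats$, and on $\phiop$ (\autoref{lem:phiop}) internalising addition and multiplication so that arithmetic can be simulated by path quantification. Granting those results, the present lemma reduces to matching the syntactic shape of $\Sigma^2_1$ definitions to the input format required by the reduction and to verifying that both numeral substitution and the translation of \autoref{lem:hyctlstar-reduction} are effective, which are routine.
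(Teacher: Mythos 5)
Your proof is correct and takes essentially the same route as the paper: fix a $\Sigma^2_1$ set, instantiate the free type~$0$ variable with the input (using definability of fixed naturals in first-order arithmetic) to obtain a sentence of existential third-order arithmetic, and then invoke \autoref{lem:hyctlstar-reduction}. The only cosmetic difference is that you eliminate the free variable via an explicit defining formula $\delta_x(y)$ and an extra first-order existential, whereas the paper substitutes the numeral $n$ directly; both are fine.
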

\begin{proof}
  Let $N$ be a $\Sigma^2_1$ set, i.e.\
  $N = \set{x \in \nats \mid \exists x_0.\ \cdots \exists x_k.\ \psi(x, x_0, \ldots, x_k)}$ for some second-order arithmetic formula $\psi$ with existentially
  quantified third-order variables $x_i$.
  For every $n \in \Nat$, we define the sentence
  \[\varphi_n = \exists x_0.\ \cdots \exists x_k.\ \psi(n,x_0,\ldots,x_k)\, .\]
	Recall that every fixed natural number~$n$ is definable in first-order arithmetic, which is the reason we can use $n$ in $\psi$.
  
  Then $\varphi_n$ is true  if and only if  $n \in N$.
  Combining this with \autoref{lem:hyctlstar-reduction}, we obtain a
  computable function that maps any $n \in \Nat$ to a \hyctlstar formula
  $\phi'_n$ such that $n \in N$ if and only if $\phi'_n$ is satisfiable.
\end{proof}

\subsection{Variations of \texorpdfstring{\hyctlstar}{HyperCTL*} Satisfiability}

The general \hyctlstar satisfiability problem, as studied above, asks for the existence of a model of arbitrary size. 
In the $\Sigma^2_1$-hardness proof we relied on uncountable models with infinite branching. 
Hence, it is natural to ask whether satisfiability is easier when we consider restricted classes of transition systems.
In the remainder of this section, we study the following variations of satisfiability.

\begin{itemize}
\item The \hyctlstar \emph{finitely-branching satisfiability problem}: given a \hyctlstar sentence, determine whether it has a finitely-branching model.\footnote{A transition system is finitely-branching, if every vertex has only finitely many successors.}
\item The \hyctlstar \emph{countable satisfiability problem}: given a \hyctlstar sentence, determine whether it has a countable model.
\end{itemize}

But let us again begin with finite-state satisfiability, i.e., the question whether a given \hyctlstar sentence is satisfied by a finite transition system.
As for \hyltl (see \autoref{prop_hyltlfs}), finite-state satisfiability is much simpler, but still undecidable.
In fact, the lower bound is directly inherited from \hyltl while the argument for the upper bound is the same, as \hyctlstar model-checking is also decidable.

\begin{propC}[\cite{FinkbeinerRS15,FinkbeinerH16}]
\hyctlstar finite-state satisfiability is $\Sigma_1^0$-complete.
\end{propC}

Next, we show that the complexity of \hyctlstar finitely-branching satisfiability
and countable satisfiability lies between that of finite-state satisfiability
and general satisfiability:
both are equivalent to \emph{truth in second-order arithmetic}, that is,
the problem of deciding whether a given sentence of second-order arithmetic is satisfied
in the standard model~$(\nats, 0,1,+,\cdot, <,\in)$ of second-order arithmetic.

\begin{thm}\label{thm:hyctlvars}
All of the following problems are effectively interreducible:
  \begin{enumerate}
  \item\label{thm:hyctlvars:count} \hyctlstar countable satisfiability.
  \item\label{thm:hyctlvars:finbranch} \hyctlstar finitely-branching satisfiability.
  \item\label{thm:hyctlvars:sotruth} Truth in second-order arithmetic.
  \end{enumerate}
\end{thm}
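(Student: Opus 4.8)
The plan is to prove pairwise interreducibility through the inequalities $(\ref{thm:hyctlvars:count}) \le (\ref{thm:hyctlvars:sotruth})$ and $(\ref{thm:hyctlvars:finbranch}) \le (\ref{thm:hyctlvars:sotruth})$ (the upper bounds), together with $(\ref{thm:hyctlvars:sotruth}) \le (\ref{thm:hyctlvars:count})$ and $(\ref{thm:hyctlvars:sotruth}) \le (\ref{thm:hyctlvars:finbranch})$ (the hardness), where I would obtain the latter two from a \emph{single} construction. Since $\le$ is transitive, this yields $(\ref{thm:hyctlvars:count}) \equiv (\ref{thm:hyctlvars:sotruth}) \equiv (\ref{thm:hyctlvars:finbranch})$.

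For the upper bound $(\ref{thm:hyctlvars:count}) \le (\ref{thm:hyctlvars:sotruth})$ I would encode a countable transition system by type-$1$ objects: a vertex set $V \subseteq \nats$, an edge relation $E \subseteq \nats \times \nats$, an initial vertex, and a labelling $\lambda \colon \nats \to 2^\AP$, each coded as a set of naturals via pairing. Over such a coding a path is a type-$1$ object, so the satisfaction relation $\tsys \models \varphi$ can be written as a second-order formula $\mathrm{Sat}_\varphi$ by structural recursion on the fixed input $\varphi$: each path quantifier becomes a second-order quantifier over paths starting at the appropriate vertex, each temporal operator becomes first-order quantification over time points in $\nats$, and each atom $a_\pi$ becomes a first-order condition on $\lambda$. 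Thus ``$\varphi$ has a countable model'' is expressed by a computable second-order sentence $\exists V, E, v_I, \lambda.\ \Phi_{\mathrm{ts}} \land \mathrm{Sat}_\varphi$, where $\Phi_{\mathrm{ts}}$ says the tuple is a valid transition system; its truth in $(\nats, +, \cdot, <, \in)$ is equivalent to countable satisfiability. For $(\ref{thm:hyctlvars:finbranch}) \le (\ref{thm:hyctlvars:sotruth})$ I would first note that satisfaction depends only on the part reachable from the initial vertex, and that the reachable part of a finitely-branching system is countable; hence $\varphi$ has a finitely-branching model iff it has a countable finitely-branching one. It then suffices to add the conjunct stating that $E$ is finitely branching, which is second-order expressible since a subset of $\nats$ is finite iff it is bounded.

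The conceptual heart is the hardness direction $(\ref{thm:hyctlvars:sotruth}) \le (\ref{thm:hyctlvars:finbranch})$, obtained by adapting the reduction of \autoref{lem:hyctlstar-reduction} from third-order to second-order arithmetic. The key observation is that, although a second-order sentence quantifies over the continuum-many subsets of $\nats$, all of them already appear as the branches of a single countable, finitely-branching binary tree: in the $\fbt$-part of $\Kset$ the vertices form a countable set while the infinite branches range over all of $2^\nats$. I would therefore drop from $\phiset$ exactly those conditions (the $\pset$-part and the distinctness of initial successors) whose sole purpose was to force cardinality $\cont$, keeping only the requirement that every $\fbt$-vertex has a $0$- and a $1$-child; this forces the full binary tree into the unravelling while letting branches share prefixes, so a finitely-branching model suffices. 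Second-order quantifiers are then translated by quantifying over $\fbt$-branches (simulating $\exists X$/$\forall X$ over subsets), first-order quantifiers by singleton branches exactly as in $\tr{\cdot}$, and $+,\cdot$ by the arithmetic witnesses of $\Top$ via \autoref{lem:phiop}. Since any model satisfying the branching condition must contain a branch for every subset and any model satisfying the arithmetic constraints computes $+,\cdot$ correctly, the resulting sentence $\varphi'$ is satisfiable exactly when the input second-order sentence is true in $(\nats, +, \cdot, <, \in)$; as the witnessing model is finitely branching and hence countable, the \emph{same} $\varphi'$ also witnesses $(\ref{thm:hyctlvars:sotruth}) \le (\ref{thm:hyctlvars:count})$.

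I expect the genuine difficulty to be the arithmetic gadget in the finitely-branching case. The natural realization of $\Top$ places its countably many witness traces as pairwise disjoint paths below the initial vertex, which makes that vertex infinitely branching; this is harmless for the countable problem but not for finite branching. I would therefore re-engineer these witnesses — for instance arranging them along a finitely-branching spine and adjusting the translations $\tr{x+y=z}$ and $\tr{x \cdot y=z}$ to locate the relevant paths — while verifying that the constraints forcing membership in $\Top$ and the presence of all argument combinations still hold in arbitrary models. By contrast, the binary-tree encoding of second-order quantification is conceptually clean, and it is precisely this idea that separates the countable/finitely-branching regime (second-order arithmetic) from the unrestricted regime (third-order arithmetic of \autoref{lem:hyctlstar-reduction}).
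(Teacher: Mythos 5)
Your overall architecture matches the paper's: the two upper bounds are proved exactly as in \autoref{lem:hyctlvars:count} and \autoref{lem:hyctlvars:finbranch} (encode a model with vertex set $\nats$ by second-order objects, translate satisfaction inductively, and add a boundedness conjunct for finite branching), and the hardness direction is, as in \autoref{lem:hyctlvars:rev}, an adaptation of the reduction of \autoref{lem:hyctlstar-reduction} in which the uncountable $\pset$-gadget is dropped and second-order quantifiers are simulated by branches of a finitely-branching structure realising all subsets of $\nats$ (the paper uses the two-vertex system~$\tsys_f$ rather than a binary tree, but this is cosmetic).

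There is, however, a genuine gap in your treatment of the arithmetic gadget for the finitely-branching case, precisely at the point you flag as the expected difficulty. No rearrangement of the witnesses of $\Top$ ``along a finitely-branching spine'' can work: by K\H{o}nig's lemma, the set of traces of paths emanating from any fixed vertex of a finitely-branching transition system is topologically closed, and $\Top$ is not closed (e.g.\ $\set{\add}^\omega$ is a limit of addition traces but encodes no triple). Hence any finitely-branching model containing a witness path for every trace of $\Top$ necessarily also contains paths whose traces lie outside $\Top$, so your requirement that ``the constraints forcing membership in $\Top$ \ldots{} still hold in arbitrary models'' is unsatisfiable; the single formula $\varphi'$ built directly from $\phiop$ of \autoref{lem:phiop} therefore only witnesses the reduction to \emph{countable} satisfiability, where infinite branching at the initial vertex is harmless. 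The paper's resolution is to replace $\Top$ by its closure $\closure{\Top}$, characterise the difference $\closure{\Top}\setminus\Top$ explicitly, prove that $\closure{\Top}$ is the unique model of a modified sentence~$\phiopc$ (\autoref{lem:phiopc}), realise it as the prefix-tree system $\tsys(\closure{\Top})$, and guard the arithmetic clauses with $\phi_g = \F\argl_\pi \wedge \F\argr_\pi$ so that the additional limit traces are never used as witnesses. Your proposal is missing this closure step, which is the main technical content of the hardness direction for finitely-branching satisfiability.
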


To prove \autoref{thm:hyctlvars}, we show the implication (\ref{thm:hyctlvars:count}) $\Rightarrow$ (\ref{thm:hyctlvars:sotruth}) in \autoref{lem:hyctlvars:count} and the  implication (\ref{thm:hyctlvars:finbranch}) $\Rightarrow$ (\ref{thm:hyctlvars:sotruth}) in \autoref{lem:hyctlvars:finbranch}.
Then, in \autoref{lem:hyctlvars:rev} we show both converse implications simultaneously. 

We start by showing that countable satisfiability can be effectively reduced to truth in second-order arithmetic.
As every countable set is in bijection with the natural numbers, countable satisfiability asks for the existence of a model whose set of vertices is the set of natural numbers. 
This can easily be expressed in second-order arithmetic, leading to a fairly
straightforward reduction to truth in second-order arithmetic.

\begin{lem}\label{lem:hyctlvars:count}
There is an effective reduction from \hyctlstar countable
  satisfiability to truth in second-order arithmetic.
\end{lem}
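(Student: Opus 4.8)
The plan is to directly express, in second-order arithmetic, the statement ``there is a countable transition system $\tsys$ with $\tsys \models \varphi$''. The key observation is that, in contrast to the general $\Sigma^2_1$ upper bound of \autoref{lemma:hyctl-membership}, a countable model has natural numbers as vertices, so that paths of $\tsys$ are type~$1$ objects and time points are type~$0$ objects. Hence the entire satisfaction relation can be captured without any type~$2$ quantification.

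First I would fix an encoding of countable transition systems over $\nats$. Using a definable pairing function, a transition system $\tsys = (V,E,v_\initmark,\lambda)$ is represented by a set $V \subseteq \nats$ of vertices, a set $E \subseteq \nats$ coding the edge relation, a number $v_\initmark \in V$, and, since $\AP$ is finite, one set $P_a \subseteq V$ for each $a \in \AP$ recording the vertices whose label contains $a$. All of these are type~$1$ objects. A formula $\mathit{valid}(V,E,v_\initmark,(P_a)_a)$ then asserts that $V$ is nonempty, $v_\initmark \in V$, $E \subseteq V \times V$, each $P_a \subseteq V$, and that every vertex of $V$ has at least one $E$-successor in $V$, so that paths exist.

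The heart of the construction is a translation of the satisfaction relation into second-order arithmetic, carried out by induction over the fixed formula $\varphi$. For each subformula $\psi$ with free path variables among $\pi_1,\dots,\pi_m$ I would produce a second-order formula $\mathit{sat}_\psi$ whose free second-order parameters are $V,E,(P_a)_a$ together with one path variable per $\pi_i$, and whose free type~$0$ parameters record, for each $\pi_i$, the global clock value at which that variable was introduced. Path quantifiers $\exists\pi$ and $\forall\pi$ become second-order quantifiers ranging over paths of $\tsys$, that is, functions $\nats \to V$ whose consecutive values lie in $E$, subject to the constraint that the new path starts in $\last(\Pi)$; the temporal operators $\X$ and $\U$ advance a global clock and are translated using type~$0$ quantification over time points; and an atom $a_\pi$ is translated as the assertion that the current vertex of $\pi$, namely the value of its path at the appropriate clock offset, lies in $P_a$. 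Setting
\[
\Psi_\varphi \;=\; \exists V\, \exists E\, \exists v_\initmark\, \exists (P_a)_{a \in \AP}.\ \big( \mathit{valid} \wedge \mathit{sat}_\varphi \big),
\]
a routine induction shows that $\tsys \models \varphi$ holds for the encoded system exactly when $\mathit{sat}_\varphi$ holds, and hence that $\Psi_\varphi$ is true in $(\nats,0,1,+,\cdot,<,\in)$ if and only if $\varphi$ has a countable model; for the latter equivalence one uses that every countable model may be taken, after relabelling, to have a subset of $\nats$ as its set of vertices. The map $\varphi \mapsto \Psi_\varphi$ is clearly computable, giving the desired effective reduction.

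I expect the main obstacle to be the faithful bookkeeping of the $\last(\Pi)$ mechanism and of the synchronous shifting performed by the temporal operators: a newly quantified path must start at the current vertex of the most recently modified variable, while temporal operators advance all variables in lockstep. This is handled by attaching to each path variable the type~$0$ ``introduction time'' at which it enters the assignment and reading its current vertex as its path value at the elapsed global clock offset; since all of this stays within type~$0$ quantification, the resulting $\mathit{sat}_\varphi$ remains a formula of second-order arithmetic. The remaining details are standard but tedious.
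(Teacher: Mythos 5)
Your proposal is correct and follows essentially the same route as the paper: encode a countable model with vertex set contained in $\nats$ via type~$1$ objects for the edge relation and labelling, translate path quantifiers into second-order quantification over functions $\nats\to\nats$ that trace out paths, and handle temporal operators and $\last(\Pi)$ with type~$0$ clock bookkeeping. The only (immaterial) difference is in that bookkeeping: the paper keeps a single global time index and forces each newly quantified path to share a prefix with the most recently quantified one up to the current time, whereas you record per-variable introduction times and read off the current vertex via an offset — both are equivalent ways to realise the same translation.
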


\begin{proof}
  Let $\phi$ be a \hyctlstar sentence. We construct a sentence~$\phicount$ of second-order
  arithmetic such that
  $(\nats,0,1,+,\cdot,<,\in) \models \phicount$ if and only if $\phi$
  has a countable model, or, equivalently, if and only if $\phi$ has a model
  of the form $\tsys = (\nats,E,0,\lambda)$ with vertex set~$\nats$, which
 implies that
  the set~$E$ of edges is a subset of $\nats \times \nats$.
  Note that we assume (w.l.o.g.) that the initial
  vertex is~$0$. The labeling function $\lambda$ maps each natural number
  (that is, each vertex) to a set of atomic propositions.
  We assume a fixed encoding of valuations in $2^\AP$ as natural numbers in
  $\{0,\ldots,|2^{\AP}|-1\}$, so that we can equivalently view $\lambda$ as a
  function $\lambda : \nats \rightarrow \nats$ such that $\lambda(n) < \size{2^{\AP}}$
  for all $n \in \nats$. Note that binary relations over $\nats$ can be encoded by functions from natural numbers to natural numbers, and the encoding can be implemented in first-order arithmetic.

  The formula $\phicount$ is defined as
  \[
    \phicount = \exists E.\, \exists \lambda.\,
    (\forall x.\ \exists y.\ (x,y) \in E) \land
    (\forall x.\, \lambda(x) < |2^\AP|) \land 
    \phi'(E,\lambda,0) \, ,
  \]
  where $E$ is a second-order variable ranging over subsets of
  $\nats \times \nats$,
  $\lambda$ a second-order variable ranging over functions from $\nats \to  \nats$,
  and $\phi'(E,\lambda,i)$, defined below, expresses the fact that
  the transition system $(\nats,E,0,\lambda)$ is a model of $\phi$. 
  
  We use the following abbreviations:
  \begin{itemize}
  \item Given a second-order variable $f$ ranging over functions from
    $\nats$ to $\nats$, the formula
    $\mathit{path}(f,E) = \forall n.\ (f(n),f(n+1)) \in E$
    expresses the fact that $f(0) f(1) f(2) \ldots$ is
    a path in $(\nats,E,0,\lambda)$.

  \item Given second-order variables $f$ and $f'$ ranging over functions from
    $\nats$ to $\nats$ and a first-order variable $i$ ranging over natural
    numbers, we let
    \[
      \mathit{branch}(f,f',i,E) = \mathit{path}(f,E) \land \mathit{path}(f',E) \land \forall j \le i.\ 
      f(j) = f'(j) \, .
    \]
    This formula is satisfied by paths $f$ and $f'$ if $f$ and $f'$ coincide up to (and including) position $i$. We will use to restrict path quantification to those that start at a given position of a given path.
  \end{itemize}

  We define $\phi'$ inductively from $\phi$, therefore considering in general
  \hyctlstar formulas with free variables~$\pi_1,\ldots,\pi_k$, in which case the formula $\phi'$ has free variables
  $E,\lambda,f_{\pi_1},\ldots,f_{\pi_k},i$.
  The variable $i$ is interpreted as the current time point.
  If $\phi$ is a sentence, $i$ is not free in $\phi'$, as we use $0$ in that case.
Also, the translation depends on an ordering of the free variables of $\phi$, i.e.\  quantified paths start at position~$i$ of the largest variable, as path quantification depends on the context of a formula with free variables.
In the following, we indicate the ordering by the naming of the variables, i.e.\ we have  $\pi_1 <\cdots <\pi_k$.
  \begin{itemize}
   
    \item $a'_{\pi_j}(E,\lambda,f_{\pi_1},\ldots, f_{\pi_k} ,i) =
    \bigvee_{\{v \in 2^\AP \mid a \in v\}} \lambda(f_{\pi_j}(i)) = [v]$,
    where $[v]$ is the encoding of $v$ as a natural number.
     
  \item If $\phi(\pi_1, \ldots, \pi_k) = \lnot \psi$ then $\phi'(E,\lambda,f_{\pi_1},\ldots, f_{\pi_k} ,i) = \neg (\psi'(E,\lambda,f_{\pi_1},\ldots, f_{\pi_k} ,i))$.
     
  \item If $\phi(\pi_1, \ldots, \pi_k) = \psi_1 \lor \psi_2 $ then \[\phi'(E,\lambda,f_{\pi_1},\ldots, f_{\pi_k} ,i) = (\psi_1'(E,\lambda,f_{\pi_1},\ldots, f_{\pi_k} ,i))\lor(\psi_2'(E,\lambda,f_{\pi_1},\ldots, f_{\pi_k} ,i)).\]
   
  \item If $\phi(\pi_1,\ldots,\pi_k) = \X \psi$,  then we define
    \[
      \phi'(E,\lambda,f_{\pi_1},\ldots,f_{\pi_k},i) =
      \psi'(E,\lambda,f_{\pi_1},\ldots,f_{\pi_k},i+1) \, .
    \]
     
  \item If $\phi(\pi_1,\ldots,\pi_k) = \psi_1 \U \psi_2$,  then we define
    \begin{align*}
      \phi'(E,\lambda,_{\pi_1},\ldots,f_{\pi_k},i) =
      \exists j.\,
      & j \ge i \land 
       \psi_2'(E,\lambda,f_{\pi_1},\ldots,f_{\pi_k},j) \land {} \\
      & \forall j'.\  (i \le j' < j \rightarrow
        \psi_1'(E,\lambda,f_{\pi_1},\ldots,f_{\pi_k},j')) \, .
    \end{align*}
   
  \item
    If $\phi = \exists \pi_1. \psi(\pi_1)$ is a sentence, then we define
    \[
      \phi'(E,\lambda) = \exists f_{\pi_1}.\ 
      \mathit{path}(f_{\pi_1},E) \land f_{\pi_1}(0) = 0 \land
      \psi'(E,\lambda,f_{\pi_1},0) \, .
    \]
    Recall that~$f_{\pi_1}$ ranges over functions from $\nats$ to $\nats$ and note that the formula requires $f$ to encode a path and to start at the initial vertex~$0$.
    
    If $\phi(\pi_1,\ldots,\pi_k) = \exists \pi_{k+1}.\ \psi(\pi_1,\ldots,\pi_k,\pi_{k+1})$
    with $k > 0$, then we define
    \[
      \phi'(E,\lambda,f_{\pi_1},\ldots,f_{\pi_k},i) = 
        \exists f_{\pi_{k+1}}.\  \mathit{branch}(f_{\pi_{k+1}},f_{\pi_k},i,E) \land
        \psi'(E,\lambda,f_{\pi_1},\ldots,f_{\pi_k},f_{\pi_{k+1}},i) \, .
      \]
      Here, we make use of the ordering of the free variables of $\phi$, as the translated formula requires the function assigned to $f_{\pi_{k+1}}$ to encode a path branching of the path encoded by the function assigned to $f_{\pi_{k}}$.

    \item
      If $\phi = \forall \pi_1.\ \psi(\pi_1)$ is a sentence, then we define
      \[
        \phi'(E,\lambda) = \forall f_{\pi_1}.\
        \left(\mathit{path}(f_{\pi_1},E) \land f_{\pi_1}(0) = 0 \right) \rightarrow
        \psi'(E,\lambda,f_{\pi_1},0) \, .
      \]
      If $\phi(\pi_1,\ldots,\pi_k) = \forall \pi_{k+1}.\ \psi(\pi_1,\ldots,\pi_k,\pi_{k+1})$
      with $k > 0$,  then we define
    \[
      \phi'(E,\lambda,f_{\pi_1},\ldots,f_{\pi_k},i) = 
        \forall f_{\pi_{k+1}}.\  \mathit{branch}(f_{\pi_{k+1}},f_{\pi_k},i,E) \rightarrow
        \psi'(E,\lambda,f_{\pi_1},\ldots,f_{\pi_k},f_{\pi_{k+1}},i)
      \]
  \end{itemize}
  
Now, $\phi$ has a countable model if and only if the second-order sentence~$\phicount$ is true in $(\nats,0,1,+,\cdot,<,\in)$.
\end{proof}

Since every finitely-branching model has countably many vertices that are reachable from the initial vertex, the previous proof can
be easily adapted for the case of finitely-branching satisfiability.

\begin{lem}\label{lem:hyctlvars:finbranch}
There is an effective reduction from \hyctlstar finitely-branching
  satisfiability to truth in second-order arithmetic.
\end{lem}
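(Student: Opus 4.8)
The plan is to reuse the reduction from \autoref{lem:hyctlvars:count} almost verbatim, adding a single conjunct that forces the constructed transition system to be finitely branching. The whole argument rests on two observations: for HyperCTL* only the reachable part of a model matters, and the reachable part of a finitely-branching transition system is countable. Together these reduce finitely-branching satisfiability to the existence of a finitely-branching model whose vertex set is exactly~$\nats$, which is expressible in second-order arithmetic by a mild strengthening of the formula~$\phicount$.

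First I would record the semantic fact that if $\tsys'$ is the restriction of a transition system~$\tsys$ to the vertices reachable from its initial vertex, then $\tsys' \models \phi$ if and only if $\tsys \models \phi$ for every HyperCTL* sentence~$\phi$. This holds because every path quantifier ranges over paths starting in $\last(\Pi)$, which is always a reachable vertex, and reachable vertices have only reachable successors; hence the set of available paths, and therefore the truth value of every subformula, is identical in $\tsys$ and $\tsys'$ (a straightforward structural induction). Next I would observe that the reachable part of a finitely-branching system is countable: the sets $R_n$ of vertices reachable in at most $n$ steps are finite by induction on~$n$ (each $R_n$ is finite and each of its vertices has finitely many successors), so the reachable set $\bigcup_{n \in \nats} R_n$ is a countable union of finite sets. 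Combining these, $\phi$ has a finitely-branching model if and only if it has a finitely-branching model on vertex set~$\nats$ with initial vertex~$0$: given any finitely-branching model, restrict to its countable reachable part and, if necessary, pad it up to vertex set~$\nats$ by adding unreachable vertices each carrying a self-loop, which affects neither satisfaction nor finite branching.

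With this normal form in hand, I would define $\phifb$ as the sentence $\phicount$ from the proof of \autoref{lem:hyctlvars:count}, conjoined (under the existential quantifiers binding $E$ and $\lambda$) with a condition stating that $(\nats,E,0,\lambda)$ is finitely branching. Since a subset of~$\nats$ is finite exactly when it is bounded, the successor set of each vertex is finite if and only if it is bounded, which is expressible by $\forall x.\ \exists b.\ \forall y.\ ((x,y) \in E \rightarrow y < b)$. Then $(\nats,0,1,+,\cdot,<,\in) \models \phifb$ if and only if $\phi$ has a finitely-branching model on vertex set~$\nats$, which by the previous paragraph holds if and only if $\phi$ has a finitely-branching model at all. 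The correctness of the inductive translation $\phi'$ and of the arithmetic encoding of transition systems is inherited unchanged from \autoref{lem:hyctlvars:count}, so the only genuinely new ingredients are the countability-of-the-reachable-part argument and the reduction of finiteness to boundedness. I therefore expect the main (and only mild) obstacle to be phrasing the restriction-to-reachable-part and padding steps carefully enough that both directions of the equivalence go through cleanly.
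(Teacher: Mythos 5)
Your proposal is correct and follows essentially the same route as the paper: the paper likewise reuses $\phicount$ verbatim and adds the single conjunct $\forall x.\ \exists y.\ \forall z.\ (x,z) \in E \rightarrow z < y$ expressing that every successor set is bounded, justified by the observation that a finitely-branching model has a countable reachable part. Your write-up merely makes explicit the restriction-to-reachable-part and padding steps that the paper compresses into one sentence.
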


\begin{proof}
  Let $\phi$ be a \hyctlstar sentence. We construct a second-order
  arithmetic formula~$\phifb$ such that
  $(\nats,0,1,+,\cdot,<,\in) \models \phifb$ if and only if $\phi$ has a
  finitely-branching model, which we can again assume without loss of
  generality to be of the form $\tsys = (\nats,E,0,\lambda)$, where the set of
  vertices is $\nats$, the set~$E$ of edges is a subset of
  $\nats \times \nats$, the initial vertex is $0$, and the labeling function
  $\lambda$ is encoded as a function from  $\nats$ to $ \nats$.

  The formula $\phifb$ is almost identical to $\phicount$, only adding the
  finite branching requirement:
  \begin{align*}
    \phifb = \exists E.\ \exists \lambda.\ &
    (\forall x.\ \exists y.\ (x,y) \in E) \land
    (\forall x.\ \exists y.\ \forall z.\ (x,z) \in E \rightarrow z < y) \land\\
    &     (\forall x.\ \lambda(x) < |2^\AP|) \land \\
   & \phi'(E,\lambda,0) \, . \qedhere     
  \end{align*}
  
\end{proof}

Now, we consider the converse, i.e.\ that truth in second-order arithmetic can be reduced to countable and finitely-branching satisfiability. To this end, we adapt the $\Sigma_1^2$-hardness proof for \hyctlstar. 
Recall that we constructed a formula whose models contain all $\set{0,1}$-labelled paths, which we used to encode the subsets of $\nats$. 
In that proof, we needed to ensure that the initial vertices of all these paths are pairwise different in order to encode existential third-order quantification, which resulted in uncountably many successors of the initial vertex.
Also, we used the traces in $\Top$ to encode arithmetic operations. 

Here, we only have to encode first- and second-order quantification, so we can drop the requirement on the initial vertices of the paths encoding subsets, which simplifies our construction and removes one source of infinite branching.
However, there is a second source of infinite branching, i.e.\ the infinitely many traces in $\Top$ which all start at successors of the initial vertex. This is unavoidable: To obtain formulas that always have finitely-branching models, we can no longer work with $\Top$.
We begin by explaining the reason for this and then explain how to adapt the construction to obtain the desired result. 

Recall that we defined $\Top$ over $\AP = \{\argl,\argr,\res,\add,\mult\}$ as the set of all traces $t \in {(2^{\AP})}^\omega$ such that 
  \begin{itemize}
  \item there are unique $n_1,n_2,n_3 \in \nats$ with $\argl \in t(n_1)$,
    $\argr \in t(n_2)$, and $\res \in t(n_3)$, and
  \item either $\add \in t(n)$ and $\mult \notin t(n)$  for all $n$ and $n_1+n_2 = n_3$,
    or $\mult \in t(n)$ and $\add \notin t(n)$  for all $n$ and $n_1 \cdot n_2 = n_3$.
  \end{itemize}

An application of Kőnig's Lemma~\cite{konig} shows that there is no finitely-branching transition system whose set of traces is $\Top$. 
The reason is that $\Top$ is not (topologically) closed (see definitions below), while the set of traces of a finitely-branching transition system is always closed.

Let $\prefs{t} \subseteq (\pow{\ap})^*$ denote the set of finite prefixes of a trace~$t \in (\pow{\ap})^\omega$.
Furthermore, let $\prefs{T} = \bigcup_{t \in T}\prefs{t}$ be the set of finite prefixes of a set~$T \subseteq (\pow{\ap})^\omega$ of traces.
The closure~$\closure{T} \subseteq (\pow{\ap})^\omega$ of such a set~$T$ is defined as 
\[
\closure{T} = \set{t \in (\pow{\ap})^\omega \mid \prefs{t} \subseteq \prefs{T}}.
\]
For example, $\set{\add}^\omega \in \closure{\Top}$ and $\set{\mult}^*\set{\argr,\mult}\set{\mult}^\omega \subseteq \closure{\Top}$.
Note that we have $T \subseteq \closure{T}$ for every $T$. 
As usual, we say that $T$ is closed if $T = \closure{T}$.

Let $\ap$ be finite and let $T \subseteq (\pow{\ap})^\omega$ be closed.
Furthermore, let $\tsys(T)$ be the finitely-branching transition system~$(\prefs{T}, E, \epsilon, \lambda)$ with \[E = \set{(w,wv) \mid wv \in \prefs{T} \text{ and } v \in \pow{\ap} },\]
$\lambda(\epsilon) = \emptyset$, and $\lambda(wv) = v$ for all $wv \in \prefs{T}$ with $v \in \pow{\ap}$.

\begin{rem}
The set of traces of paths of $\tsys(T)$ starting at the successors of the initial vertex~$\epsilon$ is exactly $T$.
\end{rem}

In the following, we show that we can replace the use of $\Top$ by $\closure{\Top}$ and still capture addition and multiplication in \hyltl.
We begin by characterising the difference between $\Top$ and $\closure{\Top}$ and then show that $\closure{\Top}$ is also the unique model of some \hyltl sentence~$\phiopc$.

Intuitively, a trace is in $\closure{\Top} \setminus \Top$ if at least one of the arguments (the propositions~$\argl$ and $\argr$) are missing. 
In all but one case, this also implies that $\res$ does not occur in the trace, as the position of $\res$ is (almost) always greater than the positions of the arguments. 
The only exception is when $\mult$ holds and $\res$ holds at the first position, i.e.\ in the limit of traces encoding $0\cdot n = n$ for $n$ tending towards infinity.

Let $\diff$ be the set of traces~$t$ over $\AP = \{\argl,\argr,\res,\add,\mult\}$ such that
\begin{itemize}
    \item for each $a \in \set{\argl,\argr,\res}$ there is at most one	~$n$ such that $a \in t(n)$, and
	\item either $\add \in t(n)$ and $\mult \notin t(n)$  for all $n$,
    or $\mult \in t(n)$ and $\add \notin t(n)$  for all $n$, 
    \item there is at least one $a \in \set{\argl,\argr}$  such that $a \notin t(n)$ for all $n$.
    \item Furthermore, if there is an $n$ such that $\res \in t(n)$, then $\mult \in t(0)$, $n = 0$, and either $\argl\in t(0)$ or $\argr\in t(0)$. 
\end{itemize}

\begin{rem}
$\closure{\Top} \setminus \Top = D$.	
\end{rem}

Now, we show the analogue of \autoref{lem:phiop} for $\closure{\Top}$.

\begin{lem}\label{lem:phiopc}
  There is a \hyltl sentence $\phiopc$ which has $\closure{\Top}$ as unique model.
\end{lem}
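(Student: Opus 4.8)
The plan is to obtain $\phiopc$ by adapting the sentence $\phiop$ from \autoref{lem:phiop}, exploiting the identity $\closure{\Top} = \Top \cup \diff$ recorded in the remark above. Recall that $\phiop$ enforces three kinds of constraints: a \emph{shape} constraint (each trace carries exactly one $\argl$, one $\argr$, one $\res$, and is constantly $\add$ or constantly $\mult$), a \emph{completeness} constraint (every combination of arguments occurs, for both operations), and a \emph{correctness} constraint (the position of $\res$ matches the arithmetic). The traces in $\diff$ are exactly the ``limit'' traces in which at least one argument marker has escaped to infinity; accordingly they may lack $\argl$ or $\argr$, and they carry $\res$ only in the single degenerate situation $\mult \in t(0)$, $\res \in t(0)$ with one argument at position~$0$. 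My first step is therefore to relax the shape constraint, replacing ``exactly one occurrence'' of each of $\argl,\argr,\res$ by ``at most one occurrence'', e.g.\ via $\forall \pi.\ \bigwedge_{a \in \set{\argl,\argr,\res}} \G(a_\pi \rightarrow \X\G\lnot a_\pi)$, while keeping the requirement that exactly one of $\add,\mult$ holds throughout. This admits every trace of $\closure{\Top}$ as far as shape is concerned.

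Next I would handle correctness. The correctness formulas of $\phiop$ must no longer fire on improper traces, since e.g.\ an $\add$-trace with $\argl$ at position~$0$ but no $\argr$ lies in $\diff$ yet violates the original requirement $\F(\argr_\pi \land \res_\pi)$. I therefore relativize each correctness conjunct by conjoining $\F\argl_\pi \land \F\argr_\pi$ (``both arguments are present'') to its antecedent, so that on the traces of $\diff$ these conjuncts become vacuously true, while on proper traces they impose exactly the arithmetic constraints as before; the inductive formulas still refer only to proper traces of smaller argument, which remain forced by completeness. The genuinely delicate point is the behaviour of $\res$ on improper traces: a trace such as an $\add$-trace carrying $\res$ but missing $\argr$ must be \emph{excluded}, whereas the special $\mult$-trace with $\res \in t(0)$ and a single argument at~$0$ must be \emph{kept}. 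I capture this with an explicit universal conjunct
\[
  \forall \pi.\ \big(\F\res_\pi \land \lnot(\F\argl_\pi \land \F\argr_\pi)\big)
  \rightarrow \big(\mult_\pi \land \res_\pi \land (\argl_\pi \lor \argr_\pi)\big),
\]
which is vacuous on $\Top$ and on $\res$-free traces of $\diff$, holds on the two special degenerate traces, and rules out every other improper trace that carries $\res$.

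The third step is to restore completeness for all of $\closure{\Top}$. The existing completeness formulas of $\phiop$ still force every proper $(n_1,n_2)$-combination, for both operations, so $\argl$ is placed at every position by some proper trace. To force the members of $\diff$ I add: two existential conjuncts for the constant traces $\set{\add}^\omega$ and $\set{\mult}^\omega$; for each operation and each argument a conjunct that produces an aligned single-argument trace, e.g.\ $\forall \pi.\ (\add_\pi \land \F\argl_\pi) \rightarrow \exists \pi'.\ \add_{\pi'} \land \F(\argl_\pi \land \argl_{\pi'}) \land \G\lnot\argr_{\pi'} \land \G\lnot\res_{\pi'}$, which—applied to the proper traces already present—yields single-argument traces at all positions; and two existential conjuncts $\exists\pi.\ \mult_\pi \land \argl_\pi \land \res_\pi \land \G\lnot\argr_\pi$ and $\exists\pi.\ \mult_\pi \land \argr_\pi \land \res_\pi \land \G\lnot\argl_\pi$ for the degenerate $\res$-traces. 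Each of these conjuncts is satisfied by $\closure{\Top}$ because the required witness lies in $\diff$.

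Finally I would verify the two inclusions. For $\closure{\Top}\models\phiopc$ one checks each conjunct on $\Top \cup \diff$ as indicated above. For uniqueness, the relaxed shape constraint, the relativized correctness constraints, and the explicit $\res$-conjunct together confine every trace of an arbitrary model to $\closure{\Top}$, while the completeness conjuncts force every element of $\closure{\Top}$ to appear; hence $\closure{\Top}$ is the unique model. I expect the main obstacle to be precisely the asymmetric treatment of $\res$ on improper traces—retaining the degenerate $0\cdot n$ limit while excluding all other improper $\res$-traces—and confirming that the guarded correctness formulas neither over- nor under-constrain these limit traces.
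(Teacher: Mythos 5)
Your core construction matches the paper's: relax the ``exactly one occurrence'' constraints on $\argl,\argr,\res$ to ``at most one'', keep the $\add$/$\mult$ dichotomy, guard the completeness-step and correctness conjuncts of $\phiop$ by $\F\argl_\pi \wedge \F\argr_\pi$, and add a universal conjunct confining $\res$ on argument-incomplete traces to the single degenerate shape $\mult_\pi \wedge \res_\pi \wedge (\argl_\pi \vee \argr_\pi)$ at position~$0$ --- your displayed formula is essentially the paper's fourth conjunct. You go beyond the paper in one respect: you add existential conjuncts to \emph{force} the limit traces of $\diff$ to be present in every model, whereas the paper only argues that its first four (universal) conjuncts \emph{permit} exactly these traces, which by itself cannot force their presence. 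For the uniqueness claim as literally stated your extra step is the right instinct.

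However, one family of your added conjuncts fails as written. Instantiating your single-argument forcing scheme for $\mult$ and $\argl$ gives $\forall \pi.\ (\mult_\pi \wedge \F\argl_\pi) \rightarrow \exists \pi'.\ \mult_{\pi'} \wedge \F(\argl_\pi \wedge \argl_{\pi'}) \wedge \G\lnot\argr_{\pi'} \wedge \G\lnot\res_{\pi'}$. Applied to the proper trace encoding $0 \cdot n_2$ (which has $\argl$ at position~$0$ and is forced by completeness), this demands a $\mult$-trace with $\argl$ at position~$0$, no $\argr$ and no $\res$. Such a trace is not in $\closure{\Top}$: every $t \in \Top$ with $\mult,\argl \in t(0)$ has $n_1 = 0$ and hence $\res \in t(0)$, so the one-letter prefix $\set{\mult,\argl}$ is not in $\prefs{\Top}$. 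Moreover it directly contradicts the retained correctness axiom $\forall\pi.\ (\mult_\pi \wedge \argl_\pi) \rightarrow \res_\pi$, so your conjunction becomes unsatisfiable; the symmetric $\mult$/$\argr$ instance at position~$0$ forces a trace outside $\closure{\Top}$ and thereby destroys $\closure{\Top}$ as a model. The repair is local --- for $\mult$ restrict the single-argument forcing to argument positions $\geq 1$; the position-$0$ cases are exactly your two degenerate $\res$-traces, which you already force separately --- but as stated this instance breaks both directions of the lemma.
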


\begin{proof}
We adapt the formula~$\phiop$ presented in the proof of \autoref{lem:phiop} having $\Top$ as unique model.
Consider the conjunction of the following \hyltl sentences:
\begin{enumerate}

	\item For every trace~$t$ and every $a \in \set{\argl,\argr,\res}$ there is at most one $n$ such that $a \in t(n)$:
\[
		\forall \pi.\ \bigwedge_{a \in \set{\argl,\argr,\res}}(\G\neg a_\pi) \vee (\neg a_\pi)\U (a_\pi \wedge \X\G\neg a_\pi)
\]
\item For all traces $t$: If both $\argl$ and $\argr$ appear in $t$, then also $\res$ (this captures the fact that the position of $\res$ is determined by the positions of $\argl$ and $\argr$):

\[
\forall \pi.\ (\F\argl_\pi \wedge \F\argr_\pi) \rightarrow \F\res_\pi\]

	\item Every trace~$t$ satisfies either $\add \in t(n)$ and $\mult\notin t(n)$ for all $n$ or $\mult \in t(n)$ and $\add \notin t(n)$ for all $n$:
	\[
	\forall \pi.\ \G(\add_\pi\wedge\neg\mult_\pi) \vee \G(\mult_\pi\wedge\neg\add_\pi)
	\]	
	
	\item For all traces~$t$: If there is an $a \in \set{\argl,\argr}$ such that $a \notin t(n)$ for all $n$, but $\res \in t(n_3)$ for some $n_3$, then $\set{\mult, \res} \subseteq t(0)$ and $\set{\argl,\argr} \cap t(0) \neq \emptyset$:
	\[
	\forall \pi.\ \left(\F\res_\pi \wedge \bigvee_{a \in \set{\argl,\argr}}\G\neg a_\pi\right)\rightarrow \left(\mult_\pi \wedge \res_\pi \wedge \bigvee_{a \in \set{\argl,\argr}}a\right)
	\]

\end{enumerate}
 We again only consider traces satisfying these formulas in the remainder of the proof, as all others are not part of a model. Also, we again speak of addition traces (if $\add$ holds) and multiplication traces (if $\mult $ holds).
 
	Furthermore, if a trace satisfies the (guard) formula~$\phi_g = \F \arg_1 \wedge \F\argr$, then it encodes two unique arguments (given by the unique positions $n_1$ and $n_2$ such that $\argl\in t(n_1)$ and $\argr\in t(n_2)$. As the above formulas are satisfied, such a trace also encodes a result via the unique position~$n_3$ such that $\res \in t(n_3)$. 
	
As before, we next express that every combination of inputs is present:
\begin{enumerate}
\setcounter{enumi}{4}
	
	\item\label{itemcomplstartcl}
     There are two traces with both arguments being zero, one for addition and one for multiplication:
\[\bigwedge_{a \in \set{\add,\mult}}\exists \pi.\ a_\pi \wedge \argl_\pi \wedge \argr_\pi\]

\item\label{itemcomplstepcl} For every trace encoding the arguments~$n_1$ and $n_2$, the argument combinations~$(n_1+1, n_2)$ and $(n_1, n_2+1)$ are also represented in the model, again both for addition and multiplication (here we rely on the fact that either $\add$ or $\mult$ holds at every position, as specified above). Note however, that not every trace will encode two inputs, which is why we have to use the guard~$\phi_g$.
\begin{align*}
\forall \pi.\ \phi_g \rightarrow  \exists\pi_1, \pi_2.\ \left( \bigwedge_{i \in\set{1,2}} \add_\pi \leftrightarrow \add_{\pi_i} \right)\wedge 
&\F(\argl_\pi \wedge \X\argl_{\pi_1}) \wedge \F(\argr_\pi \wedge \argr_{\pi_1}) \wedge \\
&
\F(\argl_\pi \wedge \argl_{\pi_2}) \wedge \F(\argr_\pi \wedge \X\argr_{\pi_2})	
\end{align*}  
\end{enumerate}	
Every model of these formulas contains a trace representing each possible combination of arguments, both for multiplication and addition. 

To conclude, we need to express that the result in each trace is correct by again capturing the inductive definition of addition in terms of repeated increments and the inductive definition of multiplication in terms of repeated addition.
The formulas differ from those in the proof of \autoref{lem:phiop} only in the use of the guard~$\phi_g$.

	\begin{enumerate}
	  \setcounter{enumi}{6}
	\item\label{itemcorrfirstcl} For every trace~$t$: if $\set{\add,\argl} \subseteq t(0)$ and $\argr$ appears in $t$ then $\argr$ and $\res$ have to hold at the same position (this captures $0 + n = n$):
	\[
	\forall \pi.\ (\phi_g\wedge \add\wedge \argl_\pi ) \rightarrow \F(\argr_\pi \wedge \res_\pi)
	\]
	\item For each trace~$t$ with $\add\in t(0)$, $\argl \in t(n_1)$,
    $\argr \in t(n_2)$, and $\res \in t(n_3)$ such that $n_1 > 0$ there is a trace~$t'$ such that $\add\in t'(0)$, $\argl \in t'(n_1-1)$,
    $\argr \in t'(n_2)$, and $\res \in t'(n_3-1)$ (this captures $n_1 + n_2 = n_3 \Leftrightarrow n_1-1 + n_2 = n_3-1$ for $n_1 > 0$):
    \\[.5em]$\displaystyle
        \forall \pi.\ \exists\pi'.\ (\phi_g\wedge\add_\pi \wedge \neg \argl_\pi) \rightarrow \\\phantom{x}\hfill
        \left(\add_{\pi'} \wedge  \F( \X\argl_{\pi} \wedge \argl_{\pi'} ) \wedge \F(\argr_\pi \wedge \argr_{\pi'}) \wedge \F(\X\res_{\pi} \wedge \res_{\pi'}) \right)$\\[-,5em]
    
        \item For every trace~$t$: if $\set{\mult,\argl} \subseteq t(0)$ then also $\res \in t(0)$ (this captures $0 \cdot n = 0$):
	\[
	\forall \pi.\ (\mult\wedge \argl_\pi ) \rightarrow  \res_\pi
	\]
	\item\label{itemcorrlastcl} Similarly, for each trace~$t$ with $\mult \in t(0)$, $\argl \in t(n_1)$,
    $\argr \in t(n_2)$, and $\res \in t(n_3)$ such that $n_1 > 0$ there is a trace~$t'$ such that $\mult \in t'(0)$, $\argl \in t'(n_1-1)$,
    $\argr \in t'(n_2)$, and $\res \in t'(n_3-n_2)$.
   The latter requirement is expressed by the existence of a trace~$t''$ with $\add\in t''(0)$, $\argr \in t''(n_2)$, $\res \in t''(n_3)$, and $\argl$ holding in $t''$ at the same time as $\res$ in $t'$, which implies $\res \in t'(n_3-n_2)$. Altogether, this captures $n_1 \cdot n_2 = n_3 \Leftrightarrow (n_1-1) \cdot n_2 = n_3-n_2$ for $n_1 > 0$.
    \begin{align*}
    \forall \pi.\ \exists\pi',\pi''.\ &(\phi_g\wedge\mult_\pi \wedge  \neg \argl_\pi) \rightarrow   \left(\mult_{\pi'} \wedge \add_{\pi''} \wedge\right.  \\ 
&    
    \F(\X\argl_{\pi} \wedge \argl_{\pi'} ) \wedge 
    \F(\argr_\pi \wedge \argr_{\pi'} \wedge \argr_{\pi''}) \wedge \\ 
&  \left.    \F(\res_{\pi'} \wedge \argl_{\pi''}) \wedge 
  \F(\res_{\pi} \wedge \res_{\pi''}
    \right)
    \end{align*}
\end{enumerate}

Now, $\closure{\Top}$ is a model of the conjunction~$\phiopc$ of these ten formulas. Conversely, every model of $\phiopc$ contains all possible combinations of arguments (both for addition and multiplication) due to Formulas~(\ref{itemcomplstartcl}) and (\ref{itemcomplstepcl}).
Now, Formulas~(\ref{itemcorrfirstcl}) to (\ref{itemcorrlastcl}) ensure that the result is \emph{correct} on these traces.
Furthermore, all traces in $\diff$, but not more, are also contained due to the first four formulas. 
 Altogether, this implies that $\closure{\Top}$ is the unique model of $\phiopc$.
\end{proof}

We are now ready to prove the lower bounds for \hyctlstar countable and
finitely-branching satisfiability.

\begin{lem}\label{lem:hyctlvars:rev}
  There is an effective reduction from truth in second-order arithmetic to
  \hyctlstar countable and finitely-branching satisfiability.
\end{lem}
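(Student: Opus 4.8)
The plan is to adapt the $\Sigma_1^2$-hardness reduction of \autoref{lem:hyctlstar-reduction}, making two modifications that remove both sources of infinite branching. Given a sentence~$\psi$ of second-order arithmetic, I will build a \hyctlstar sentence~$\varphi'$ that has a model which is simultaneously \emph{countable and finitely branching} if and only if $(\nats,0,1,+,\cdot,<,\in) \models \psi$. Since the canonical model I exhibit has both properties, and the converse argument applies to arbitrary models, this single reduction yields both reductions claimed in the statement.

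The first modification concerns the representation of sets. In \autoref{lem:hyctlstar-reduction} the subsets of $\nats$ were encoded by the $\pset$-paths of $\phiset$, whose pairwise distinct initial vertices forced uncountable branching (needed only for the \emph{third}-order quantifiers). Here there are no third-order quantifiers, so I discard the $\pset$-part of $\phiset$ entirely and keep only the condition that every $\fbt$-labelled vertex has a successor labelled $\{\fbt,0\}$ and one labelled $\{\fbt,1\}$. This forces the unravelling to contain the full binary tree, which has only countably many vertices and binary (hence finite) branching, yet already contains an infinite path~$\rho_A$ for every subset~$A \subseteq \nats$. I then reuse the inductive translation $\tr{\cdot}$ of \autoref{lem:hyctlstar-reduction} verbatim, except that set- and number-quantification range over $\fbt$-paths rather than $\pset$-paths (replacing $\X\pset_{\pix x}$ by $\X\fbt_{\pix x}$), and the third-order membership clause $\tr{\ys\in\xt_i}$ is dropped.

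The second modification concerns arithmetic: in place of $\Top$ and $\phiop$ I use the closed set~$\closure{\Top}$ and the \hyltl sentence~$\phiopc$ of \autoref{lem:phiopc}. Since $\closure{\Top}$ is closed, the transition system~$\tsys(\closure{\Top})$ is finitely branching and countable, and the traces of its paths from the root realise exactly $\closure{\Top}$. I form $\phizc$ as the conjunction of the binary-tree condition above with a relativisation of $\phiopc$ to the $\add$- and $\mult$-labelled paths, and set $\varphi' = \phizc \land \tr\psi$. The canonical model is the full binary tree with the arithmetic gadget~$\tsys(\closure{\Top})$ attached below the root, labelled so as to encode the standard model; it is countable and finitely branching and satisfies $\varphi'$ whenever $\psi$ is true. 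Conversely, any model of $\phizc$ contains a path $\rho_A$ for every $A\subseteq\nats$ (by the binary-tree condition) and correct operation traces (by the relativised $\phiopc$), so path quantification faithfully mirrors first- and second-order quantification in $(\nats,0,1,+,\cdot,<,\in)$, giving $\tsys\models\varphi' \Rightarrow \psi$ true.

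The main obstacle I anticipate is that passing from $\Top$ to $\closure{\Top}$ introduces the ``junk'' traces $\diff = \closure{\Top}\setminus\Top$, in which at least one argument marker is absent, and I must argue that they never corrupt the simulation of an operation. This is where the precise characterisation of $\diff$ is used: the translations $\tr{\xo+\yo=\zo}$ and $\tr{\xo\cdot\yo=\zo}$ existentially select a path on which all three markers $\argl$, $\argr$, and $\res$ occur, aligned with the singleton paths for $\xo,\yo,\zo$; but no trace in $\diff$ carries all of $\argl$, $\argr$, and $\res$ at once, so every selected operation trace actually lies in $\Top$ and computes correctly. The remaining verification --- that the canonical model satisfies $\phizc$ and that $\tr{\cdot}$ preserves truth under the intended interpretation --- is routine bookkeeping that I would leave to the reader.
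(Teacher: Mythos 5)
Your proposal is correct and follows essentially the same route as the paper's proof: drop the $\pset$-part (needed only for third-order quantification), realise all $\{0,1\}$-labelled paths by a finitely-branching gadget, replace $\Top$ by $\closure{\Top}$ via \autoref{lem:phiopc}, and reuse the translation $\tr{\cdot}$ with $\fbt$ in place of $\pset$. The only cosmetic difference is that the paper's canonical witness uses a three-vertex transition system rather than the full binary tree, and your explicit observation that no trace in $\diff$ carries all of $\argl$, $\argr$, and $\res$ is a correct articulation of why the closure's junk traces are harmless.
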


\begin{proof}
We proceed as in the proof of \autoref{lem:hyctlstar-reduction}.
  Given a sentence~$\phi$ of second-order arithmetic we construct a \hyctlstar formula $\varphi'$ such that
  $(\Nat, +, \cdot, <,\in)$ is a model of $\varphi$ if and only if
  $\varphi'$ is satisfied by a countable and finitely-branching model.

As before, we represent sets of natural numbers
  as infinite paths with labels in $\{0,1\}$, so quantification over sets of natural numbers and natural numbers is captured by path quantification.
The major difference between our proof here and the one of \autoref{lem:hyctlstar-reduction} is that we do not need to deal with third-order quantification here.
This means we only need to have every possible $\{0,1\}$-labeled path in our models, but not with pairwise distinct initial vertices.
In particular, the finite (and therefore finitely-branching) transition system~$\tsys_f$ depicted in \autoref{fig:phisetc} has all such paths.

\begin{figure}
    \centering
\def\dist{2.0cm}
\begin{tikzpicture}[->,
>=stealth', 
level/.style={sibling distance = 3cm/#1, level distance = \dist}, 
scale=0.7,
transform shape,
grow=right,thick]

\node (init) [state] {\phantom{0}}
child {
    node (t0) [state] {1} 
}
child {
    node (t1) [state] {0} 
}
;

\draw[->] (-1,0) -- (init);
\path[->] 
(t0) edge[bend left] (t1)
(t1) edge[bend left] (t0)
(t0) edge[loop right] ()
(t1) edge[loop right] ()
;

\end{tikzpicture}
    \caption{A depiction of $\tsys_f$. All vertices but the initial one are labelled by $\fbt$.
}%

    \label{fig:phisetc}
\end{figure}

For arithmetical operations, we rely on the \hyltl sentence~$\phiopc$ from
  \autoref{lem:phiopc}, with its unique model~$\closure{\Top}$, and the transition system~$\tsys(\closure{\Top})$, which is countable, finitely-branching, and whose set of traces starting at the successors of the initial vertex is exactly $\closure{\Top}$.
  We combine $\tsys_f$ and $\tsys(\closure{\Top})$ by identifying their respective initial vertices, but taking the disjoint union of all other vertices.
  The resulting transition system~$\tsysc$ contains all traces encoding the subsets of the natural numbers as well as the traces required to model arithmetical operations.
  Furthermore, it is still countable and finitely-branching.
  
  \smallskip
  
  Let $\AP = \{0,1,\fbt,\argl,\argr,\res,\mult,\add\}$.
  Using parts of the formula~$\phiset$ defined in \autoref{lem:phiset}
  and the formula~$\phiopc$ defined in \autoref{lem:phiopc}, it is not difficult to construct a formula
  $\phizc$ such that:
  \begin{itemize}
  \item The transition system $\tsysc$ is a model of $\phizc$.
  \item Conversely, in any model $\tsys = (V,E,v_\initmark, \lambda)$ of $\phizc$, the following conditions
    are satisfied:
  \begin{enumerate}
    \item
      For every path $\rho$ starting at a $\fbt$-labelled successor of
      the initial vertex $v_\initmark$, every vertex $\rho(i)$ with $i \ge 0$ has a label
      $\lambda(\rho(i)) = \{\fbt,0\}$ or $\lambda(\rho(i)) = \{\fbt,1\}$.
      
    \item For every $A \subseteq \Nat$, there exists a $\fbt$-labelled path
      $\rho_A$ starting at a successor of $v_\initmark$ 
      such that $1 \in \lambda(\rho_A(i))$ if  $i \in A$,
      and $0 \in \lambda(\rho_A(i))$ if  $i \notin A$.
     \item For every path $\rho$ starting at an $\add$- or $\mult$-labelled
      successor of the initial vertex, the label sequence
      $\lambda(\rho(0)) \lambda(\rho(1)) \cdots$ of $\rho$ is in $\closure{\Top}$.

    \item Conversely, for every trace $t \in \closure{\Top}$, there exists a path
      $\rho$ starting at a successor of the initial vertex such that
      $\lambda(\rho(0)) \lambda(\rho(1)) \cdots = t$.
    \end{enumerate}
  \end{itemize}

  We then let $\varphi' = \phizc  \land \tr  \varphi$, where $\tr  \phi$ is defined inductively from $\varphi$ as in the proof of \autoref{lem:hyctlstar-reduction}:
  \begin{itemize}
  \item $\tr {\psi_1 \lor \psi_2} = \tr {\psi_1} \lor \tr {\psi_2}$.
  \item 
    $\tr {\lnot \psi_1} = \lnot \tr {\psi_1}$.
  \item If $x$ ranges over sets of natural numbers,
    \[\tr {\exists x.\ \psi_1} =
    \exists \pix x.\ ((\X \fbt_{\pix x}) \land \tr  {\psi_1}),\]
    and
    \[\tr {\forall x.\ \psi_1} =
    \forall \pix x.\ ((\X \fbt_{\pix x}) \rightarrow \tr  {\psi_1}).\]

  \item If $x$ ranges over natural numbers, 
    \[\tr {\exists x.\ \psi_1} = \exists \pix x.\ ((\X \fbt_{\pix x}) \land
    \X (0_{\pix x} \U (1_{\pix x} \land \X\G 0_{\pix x})) \land \tr  {\psi_1}),\]
    and
    \[\tr {\forall x.\ \psi_1} = \forall \pix x.\ ((\X \fbt_{\pix x}) \land
    \X (0_{\pix x} \U (1_{\pix x} \land \X\G 0_{\pix x})) \rightarrow \tr  {\psi_1}).\] 
    Here, the subformula~$0_{\pix x} \U (1_{\pix x} \land \X\G 0_{\pix x})$ expresses that there is a single $1$ on the trace assigned to $\pix x$, i.e.\ the path represents a singleton set.
  \item If $\xo$ ranges over natural numbers and $\yt$ over sets of natural
    numbers, $\tr {\xo \in \yt} = \F(1_{\pix \xo} \land 1_{\pix \yt})$.
  \item $\tr {\xo < \yo} = \F(1_{\pix \xo} \land \X \F 1_{\pix \yo})$.
  \item $\tr {\xo + \yo = \zo} = \exists \pi.\ (\X \add_\pi) \land \F(\argl_\pi \land 1_{\pix \xo}) \land
    \F(\argr_{\pi} \land 1_{\pix \yo}) \land \F(\res_\pi \land 1_{\pix \zo})$.
   \item  $\tr {\xo\cdot\yo = \zo} =
    \exists \pi.\ (\X \mult_\pi) \land \F(\argl_\pi \land 1_{\pix \xo}) \land
    \F(\argr_{\pi} \land 1_{\pix \yo}) \land \F(\res_\pi \land 1_{\pix \zo})$.
  \end{itemize}

  If $\varphi$ is true in $(\Nat, +, \cdot, <,\in)$,
  then the countable and finitely-branching transition system~$\tsysc$ defined above is a model of $\varphi'$.
  Conversely, if $\tsys \models \varphi'$ for some transition system $\tsys$,
  then for all sets $A \subseteq \Nat$ there is a path $\rho_A$ matching $A$ in $\tsys$ and trace quantification in $\tsys$ mimics first- and second-order in $(\Nat, +, \cdot, <,\in)$.
  Thus, $\varphi$ is true in $(\Nat, +, \cdot, <,\in)$.
\end{proof}

Note that the preceding proof shows that even \hyctlstar bounded-branching satisfiability is equivalent to truth in second-order arithmetic, i.e., the question of whether a given sentence is satisfied by a transition system where each vertex has at most $k$ successors, for some uniform $k \in \nats$.

\section{Related Work}
\label{sec:relatedwork}

The \hyltl and \hyctlstar model checking problems have been shown decidable in the first paper introducing these logics~\cite{ClarksonFKMRS14} and their exact complexity (and that of variants) has been determined in a series of further works~\cite{FinkbeinerRS15,Rabe16,MZ20}. 
The \hyltl satisfiability problem has been shown undecidable by Finkbeiner and Hahn~\cite{FHH18}, but no upper bounds on the complexity were known.
Similarly, Rabe has shown that \hyctlstar satisfiability is $\Sigma_1^1$-hard~\cite{Rabe16}, but again no upper bounds were known.
Here, we settle the exact complexity of satisfiability for both logics as well as that of some variants. 

Further complexity results have been obtained by Bonakdarpour and Finkbeiner for monitoring~\cite{DBLP:conf/csfw/BonakdarpourF18}, by Finkbeiner et al.\ for synthesis~\cite{FinkbeinerHLST20}, and by Bonakdarpour and Sheinvald for standard automata-theoretic problems for hyperproperties represented by automata~\cite{DBLP:journals/iandc/BonakdarpourS23}, while Winter and Zimmermann showed that the existence of computable Skolem functions for \hyltl is decidable~\cite{WZ}. Such
functions yield explanations and counterexamples for \hyltl model checking.

The techniques and results presented here have been generalized to second-order \hyltl~\cite{DBLP:conf/cav/BeutnerFFM23}, i.e.\ \hyltl with quantification over \emph{sets} of traces.
This logic allows to express important hyperproperties like common knowledge in multi-agent systems and asynchronous hyperproperties, which are not expressible in \hyltl.
Second-order quantification increases the already high expressiveness considerably: Satisfiability, finite-state satisfiability and model checking are all equivalent to truth in third-order arithmetic~\cite{sohyltlcomplexity}. 
The intuitive reason is that second-order quantification over traces corresponds to quantification over sets of sets of natural numbers (as traces can encode characteristic sequences of such sets) and we have presented here an \myquot{implementation} of addition and multiplication in \hyltl. These ingredients yield the lower bounds while an embedding of \hyltl in third-order arithmetic yields the matching upper bounds. 

Similarly, the techniques and results presented here have been generalized to \hyqptl as well, which extends \hyltl by quantification over propositions~\cite{Rabe16}. 
With uniform quantification, \hyqptl satisfiability is equivalent to truth in second-order arithmetic~\cite{regaud2024complexityhyperqptl} while finite-state satisfiability and model-checking have the same complexity as for \hyltl~\cite{FinkbeinerH16,Rabe16}. Non-uniform quantification makes \hyqptl as expressive as second-order \hyltl~\cite{regaud2024complexityhyperqptl}, which implies that all three problems are equivalent to truth in third-order arithmetic.

The specification and verification of asynchronous hyperproperties, one of the motivations for studying second-order \hyltl, have also been addressed by introducing dedicated (first-order) extensions of \hyltl. In fact, there is a wide range of such logics~\cite{BartocciHNC23,DBLP:conf/cav/BaumeisterCBFS21,BeutnerF23,DBLP:journals/corr/abs-2404-16778,DBLP:conf/lics/BozzelliPS21,DBLP:conf/concur/BozzelliPS22,DBLP:journals/pacmpl/GutsfeldMO21,DBLP:conf/tacas/HsuBFS23,DBLP:conf/mfcs/KontinenSV23,KontinenSV24,DBLP:conf/mfcs/KrebsMV018}).
While some of these works contain partial complexity results, there is currently no full picture of the complexity of the standard verification problems for these logics.

Similarly, logics for probabilistic hyperproperties have been introduced~\cite{DBLP:conf/qest/AbrahamB18,DBLP:conf/nfm/DobeWABB22,DBLP:conf/atva/AbrahamBBD20,DimitrovaFT20}, again with partial complexity results.

Finally, there is another approach towards specifying hyperproperties, namely using team semantics for temporal logics~\cite{DBLP:conf/mfcs/KrebsMV018,teamltlarxiv}:
Lück~\cite{Luck20} studied the complexity of satisfiability and model checking for \teamltl  with Boolean negation and proved both problems to be equivalent to truth in third-order arithmetic. Kontinen and Sandström~\cite{DBLP:conf/wollic/KontinenS21} generalized this result and showed that any logic between \teamltl with Boolean negation and second-order logic inherits the same complexity results. However, the complexity of \teamltl model checking without Boolean negation is a longstanding open problem~\cite{DBLP:conf/mfcs/KrebsMV018,teamltlarxiv}.

\section{Conclusion}
\label{sec:conc}

In this work, we have settled the complexity of the satisfiability problems for \hyltl and \hyctlstar. In both cases, we significantly increased the lower bounds, i.e.\ from $\Sigma_1^0$ and $\Sigma_1^1$ to  $\Sigma_1^1$ and $\Sigma_1^2$, respectively, and presented the first upper bounds, which are tight in both cases.
Along the way, we also determined the complexity of restricted variants, e.g.\ \hyltl satisfiability restricted to ultimately periodic traces (or, equivalently, to finite traces) is still $\Sigma_1^1$-complete while \hyltl and \hyctlstar satisfiability restricted to finite transition systems is $\Sigma_1^0$-complete.
Furthermore, we proved that both countable and the finitely-branching satisfiability for \hyctlstar are as hard as truth in second-order arithmetic.
As a key step in our proofs, we showed a tight bound of $\contcard$ on the size of minimal models for satisfiable \hyctlstar sentences.
Finally, we also showed that deciding membership in any level of the \hyltl quantifier alternation hierarchy is $\Pi_1^1$-complete. 

\section*{Acknowledgment}
   \noindent This work was partially funded by EPSRC grants EP/S032207/1 and EP/V025848/1 and DIREC – Digital Research Centre Denmark.
  We thank Karoliina Lehtinen and Wolfgang Thomas for fruitful discussions.

\bibliographystyle{alphaurl}
\bibliography{references}

\end{document}